\newtheorem{theorem}{Proposition}%[section]
\theoremstyle{remark}
\newcommand{\mueff}{\mu_{\mathrm{eff}}}
\begin{document} 
\title{Mean first-passage time at the origin of a run-and-tumble particle with periodic forces}
\date{}

\author{Pascal Grange\\
{\emph{Division of Natural and Applied Sciences}}\\
{\emph{and Zu Chongzhi Center for Mathematics and Computational Science}}\\
 Duke Kunshan University\\
8 Duke Avenue, Kunshan, 215316 Jiangsu, China\\
\normalsize{\ttfamily{pascal.grange@dukekunshan.edu.cn}}\\
Linglong Yuan\\
{\emph{Department of Mathematical Sciences}}\\
University of Liverpool\\
Liverpool, United Kingdom\\
\normalsize{\ttfamily{linglong.yuan@liverpool.ac.uk}}}
\maketitle
\begin{abstract}
We consider a run-and-tumble particle on a half-line with an absorbing target at the origin. The particle has an internal velocity state that switches between two opposite values at Poisson-distributed times. The position of the particle evolves according to an overdamped Langevin dynamics with a spatially-periodic force field such that every point in a given period interval is accessible to the particle. The survival probability of the particle satisfies a backward Fokker--Planck equation, whose Laplace transform yields systems of equations for the moments of the first-passage time of the particle at the origin. The mean first-passage time has already been calculated assuming that the particle exits the system almost surely. We calculate the probability that the particle reaches the origin in a finite time, given its initial position and velocity. We obtain an integral condition on the force, under which the particle has a non-zero survival probability. {\textcolor{black}{The conditional average of the first-passage time at the origin (over the trajectories that reach the origin) is obtained in closed form}}. As an application, we consider a piecewise-constant force field that alternates periodically between two opposite values. In the limit where the period is short compared to the mean free path of the particle, the mean first-return time to the origin coincides with the value obtained in the case of an effective constant drift, which we calculate explicitly.  
\end{abstract}

\pagebreak

\tableofcontents

\section{Introduction, model and summary of results}

The mean first-passage time of a particle at a target is of interest in a variety of contexts, including search behavior or execution of limit orders in finance.  It has been extensively studied in the case of Brownian particles (see, for example, \cite{patie2004some, kyprianou2006introductory, duhalde2014hitting, foucart2024continuous}, and \cite{redner2001guide} for a review). The run-and-tumble particle (or RTP) {\textcolor{black}{has been introduced more recently as a model of active
particles describing the motion of bacteria}}
\cite{berg1972chemotaxis,berg2008coli,ramaswamy2010mechanics,cates2015motility}. In this model, a particle runs in a certain direction for a certain amount of time, followed by a tumble event at which the particle takes a random velocity. In dimension one, the motion of an RTP is described by the following overdamped Langevin equation: 
\begin{equation}\label{eqn:basic+F}\frac{dx_t}{dt}=F(x_t)+v\sigma(t),\end{equation}
where $x_t$ is the position of the particle at time $t$, $F:\mathbb{R}\rightarrow \mathbb{R}$ is the external force,  $v>0$ is a fixed velocity,  and $\sigma(t)$ a stochastic process switching between the values $+1$ and $-1$ with {\emph{tumbling rate}} 
$\gamma>0$ (for a formulation of the model in terms of stochastic partial differential equations, see \cite{bressloff2024stochastic}). The tumbling rate is a constant, for a model with a space-dependent tumbling rate, see \cite{singh2020run}.
We will refer to $v\sigma(t)$ as the {\emph{internal velocity state}} of the particle at time $t$. The total velocity of the particle at time $t$ is the sum of the internal velocity state and the value $F(x_t)$ of the force at the position of the particle. The propagator of the run-and-tumble particle on the real line has been calculated exactly in the free case, where the force $F$ is identically zero \cite{othmer1988models,martens2012probability,weiss2002some} (for developments in higher dimension, see \cite{mori2020universal,Santra2020position,smith2022exact}).\\

{\textcolor{black}{Diffusive particles are 
 well studied on an interval or in a semi-infinite interval (see Chapters 2 and 3 of  \cite{redner2001guide} for a review). On the other hand, exact results on the first-passage properties of RTPs are relatively sparse, even in the absence of interactions. Relaxation properties of an RTP, with and without coupling to diffusion, have been studied in \cite{malakar2018steady}, leading to the survival probability of an RTP on a half-line with an absorbing boundary. Moreover, the steady state on an interval has been shown to exhibit peaks close to the boundaries, which is in contrast with diffusive particles. Another way to obtain a steady state is to subject the RTP to stochastic resetting of its position and possibly of its velocity state \cite{evans2018run}(see \cite{Santra2020} for results in dimension two). The survival probability of a free RTP on a half-line and the non-crossing probability of two free RTPs have been calculated in \cite{le2019noncrossing}. For a study of the bound state of two interacting RTPs, see \cite{le2021stationary}, and for the statistics of the extrema  of the position, see \cite{singh2022extremal}. 
 Models of RTPs in dimension one in the presence of absorbing, hard or attractive boundaries have been studied \cite{angelani2015run,angelani2017confined,angelani2023one,bressloff2022encounter,bressloff2023encounter}, and duality properties between 
 hard and absorbing boundaries in the presence of force fields have been identified \cite{gueneau2024relating}.
  The survival probability of an RTP in confining potentials satisfies a backward Fokker--Planck equation, which has been solved exactly in Laplace space in the harmonic case in \cite{dhar2019run}. The survival probability and propagator in a linear confining potential have been studied in \cite{nath2024survival}. The mean first-passage time at the origin in confining potentials has been optimized in the tumbling rate in \cite{gueneau2024optimal}.}}\\

%For specific choices of $F$, the equations of motion can be solved exactly (for %instance in terms of hypergeometric equations). 

In this paper, we restrict ourselves to the one-dimensional case where there is only one absorbing boundary at the origin, and where the force $F$ is periodic. Given the initial position of an RTP on the positive half-line, we calculate the exit probability (the probability of eventually reaching the origin, which is the complementary event to survival), and the mean exit time conditional on exit. If there are two absorbing boundaries $a, b$ with $a<b$, and the RTP starts within $[a,b]$, the hitting probabilities for $a$ and for $b$ have been explicitly computed in \cite{gueneau2024relating}  for any $F$. We use this result and a Markov-chain argument to express the exit probability in our model. The mean exit time has been computed explicitly in \cite{gueneau2024run} for some special forms $F$ that ensure almost-sure exit for the particle.  For a model with a periodic piecewise-linear potential, see \cite{roberts2023run}.  
 Run-and-tumble particles with periodic force fields were studied in \cite{le2020velocity}, with emphasis on the distribution of the position (the mean first-passage time at a given level was obtained assuming reflecting boundary conditions at the end of a half-line). {\textcolor{black}{The present one-dimensional system is highly idealized, but one may think of it as a model 
 of a situation in which self-propelled particles (bacteria for instance) would be produced in an environment with a succession of traps. The traps could correspond to gradients of concentrations of signaling molecules. The traps form a repetitive pattern along the direction of the half-line in the model (and their extension in the transverse direction is neglected).}}\\  

More specifically, we consider a run-and-tumble particle in one dimension, described by Eq. \eqref{eqn:basic+F},  with velocity $v$ and tumbling rate $\gamma$.  The positive half-line is endowed with a periodic force field $F$ (with period denoted by $a$): for any $x\geq 0$,
\begin{equation}
 F( x+a) = F(x).
\end{equation}
 We want the particle to be able to reach any point in the positive half-line. The force field is therefore assumed to satisfy 
\begin{equation}\label{eqn:<v}
 |F(x)|< v
\end{equation}
 for all $x\geq 0$ (the values of the force field are said to be subcritical). 
However, the above condition is not sufficient for all points within a period interval (including endpoints) to be accessible to a particle starting at any point in this interval. We need to make the following two assumptions:\\   
$(i)$ the equation 
\begin{equation}\label{eqn:t+}
\frac{dx_t}{dt}=F(x_t)+v,\quad  x_0=0,
\end{equation}
induces a finite time $t_+>0$ such that $x_{t_+}=a$;\\
$(ii)$ the equation  \begin{equation}\label{eqn:t-}
\frac{dx_t}{dt}=F(x_t)-v,\quad  x_0=0, \end{equation}
induces a finite time $t_{-}>0$ such that $x_{t_-}=-a$.\\
 Clearly, these two conditions imply \eqref{eqn:<v}. To our best knowledge, conditions $(i)$ and $(ii)$ have not been mentioned in the literature, but are necessary and sufficient to guarantee accessibility of all points within a period interval (including endpoints). Let us give a counterexample for condition $(i)$. Let us define $F$ on a period interval by $F(x)=-v+a^{-1}\gamma (a-x)^2$ for $x\in[0,a[$, assuming $a< 2v\gamma^{-1}$ (so that the values of $F$ are subcritical). If the particle starts at the origin,  
 Eq. 
\eqref{eqn:t+} becomes
\begin{equation}
\frac{dx_t}{dt}= a^{-1}\gamma (a-x_t)^2,\qquad  x_0=0.
\end{equation}
 Solving this differential equation yields $x_t = a \gamma t(1 + \gamma t)^{-1}$. 
 Hence $\lim_{t\to\infty}x_t=a$, but the particle will never reach $a$ and go beyond.\\

{\textcolor{black}{
 On the other hand, let us consider the following modification of the above force field:
\begin{equation}\label{FEpsilonDef}
 F_\epsilon( x ) = -v( 1 -\epsilon ) + a^{-1}\gamma( a-x)^2, \qquad(x\in [0,a[ ).
\end{equation}
 with $\epsilon$ satisfying the condition
 \begin{equation}\label{epsilonCond}
 0 < \epsilon < 2- \frac{a\gamma}{v}.
 \end{equation}
 The upper bound in the above conditions is positive because we still assume $a<2v\gamma^{-1}$. In this case, $F_\epsilon$ takes only subcritical values, i.e. \eqref{eqn:<v} is satisfied.}}\\

{\textcolor{black}{
  Solving Eq. (\ref{eqn:t+}) with $F=F_\epsilon$, we obtain
\begin{equation}
 x_t = a - \sqrt{\frac{av\epsilon}{\gamma}}\tan\left( \operatorname{arctan}\left(\sqrt{\frac{\gamma a }{v\epsilon}} \right) - \sqrt{\frac{\gamma v\epsilon}{a}} t \right),
\end{equation}
 which holds as long as $x_t$ is in the interval $[0,a]$. The  particle reaches $a$ at $t_+$ which we read off as
\begin{equation}
 t_+ = \sqrt{\frac{a}{\gamma v\epsilon}}\operatorname{arctan}\left(\sqrt{\frac{\gamma a }{v\epsilon}} \right).
 \end{equation}
 This time goes to infinity when $\epsilon$ goes to $0$, which is intuitively consistent with the previous counterexample (which is the limit of $F_\epsilon$ when $\epsilon$ goes to zero).}}\\

{\textcolor{black}{
Solving Eq. (\ref{eqn:t-}) with $F=F_\epsilon$, we obtain
\begin{equation}
x_t = -\sqrt{\frac{va}{\gamma}(2-\epsilon)}
\frac{e^{\sqrt{\frac{v\gamma(2-\epsilon)}{a}}t} - e^{-\sqrt{\frac{v\gamma(2-\epsilon)}{a}}}t}{
e^{\sqrt{\frac{v\gamma(2-\epsilon)}{a}}t} + e^{-\sqrt{\frac{v\gamma(2-\epsilon)}{a}}t}
}.
\end{equation}    
 which holds as long as $x_t$ is in the interval $[-a,0]$. The assumption on $\epsilon$ we made in Eq. (\ref{epsilonCond}) ensures that $2-\epsilon>0$ and $\frac{v\gamma(2-\epsilon)}{a}>1$. The expression $x_t$ therefore goes to a value lower than $-a$ at large time. 
 The particle reaches the position $-a$ at
\begin{equation}
 t_- = \sqrt{\frac{a}{v\gamma( 2 - \epsilon)}}\operatorname{atanh}\left( \sqrt{\frac{a\gamma}{v(2-\epsilon)}}\right).
\end{equation}
}}
%CODE: 
 % counter_example_plot.m
{\textcolor{black}{ The two situations are depicted in Fig. \ref{figCounterExamples}.}}
%\begin{figure}
%\centering
%\includegraphics[width=0.98\textwidth]{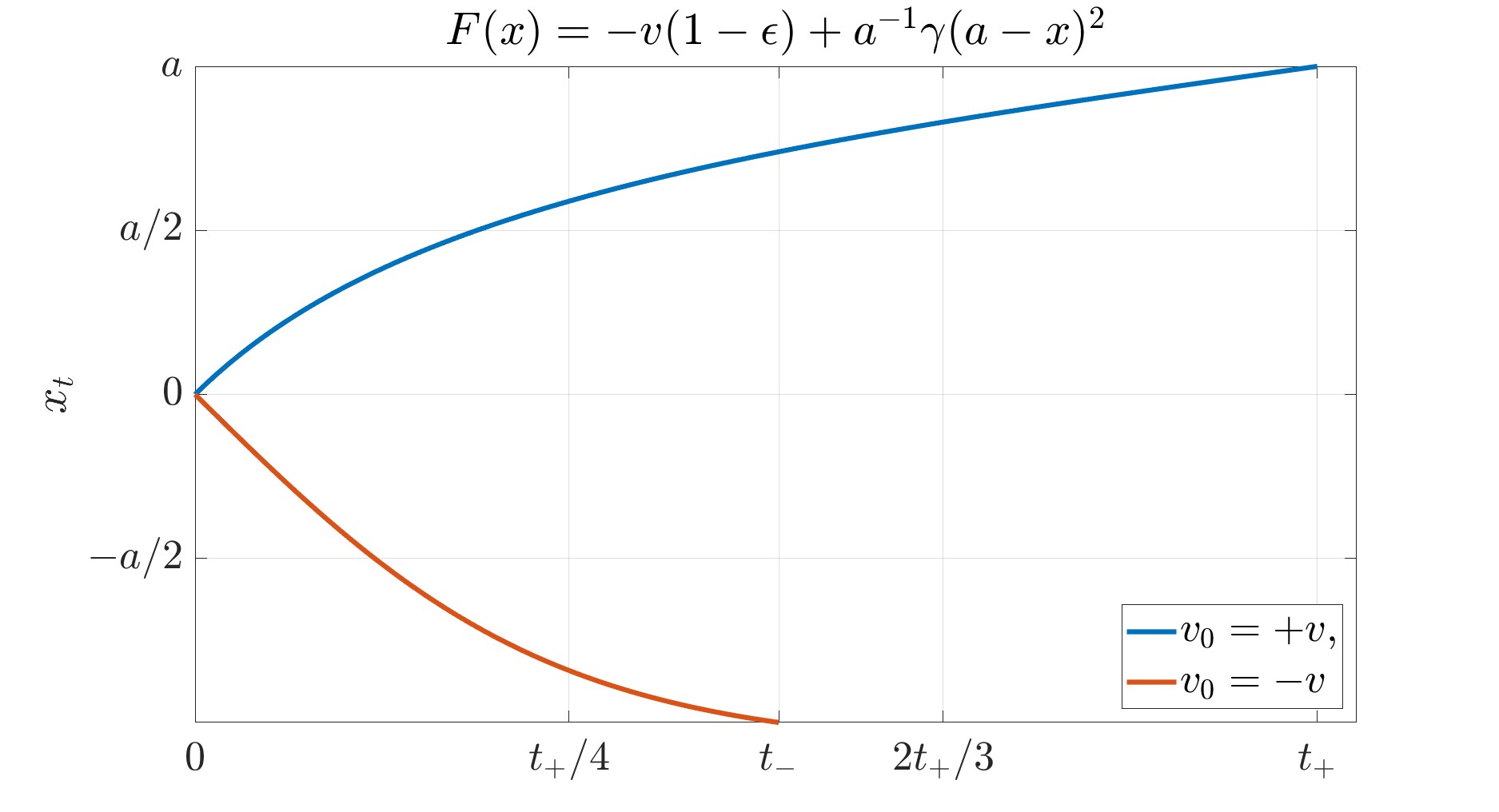}
%    \caption{The solutions to the differential equations displayed in Eqs (\ref{eqn:t+},\ref{eqn:t-}), on the intervals of time for which the particle is is within distance $a$ from the origin. Numerically the parameters were set to $\gamma = 1$, $v = 1$, $a=3/2$ and  $\epsilon = (2 - a * \gamma / v)/2$ so that the condition given in Eq. (\ref{epsilonCond}) is satisfied.}
%\end{figure}

\begin{figure}[H]
%\ContinuedFloat
\begin{subfigure}{\textwidth}
    \centering
    \includegraphics[width=0.98\textwidth]{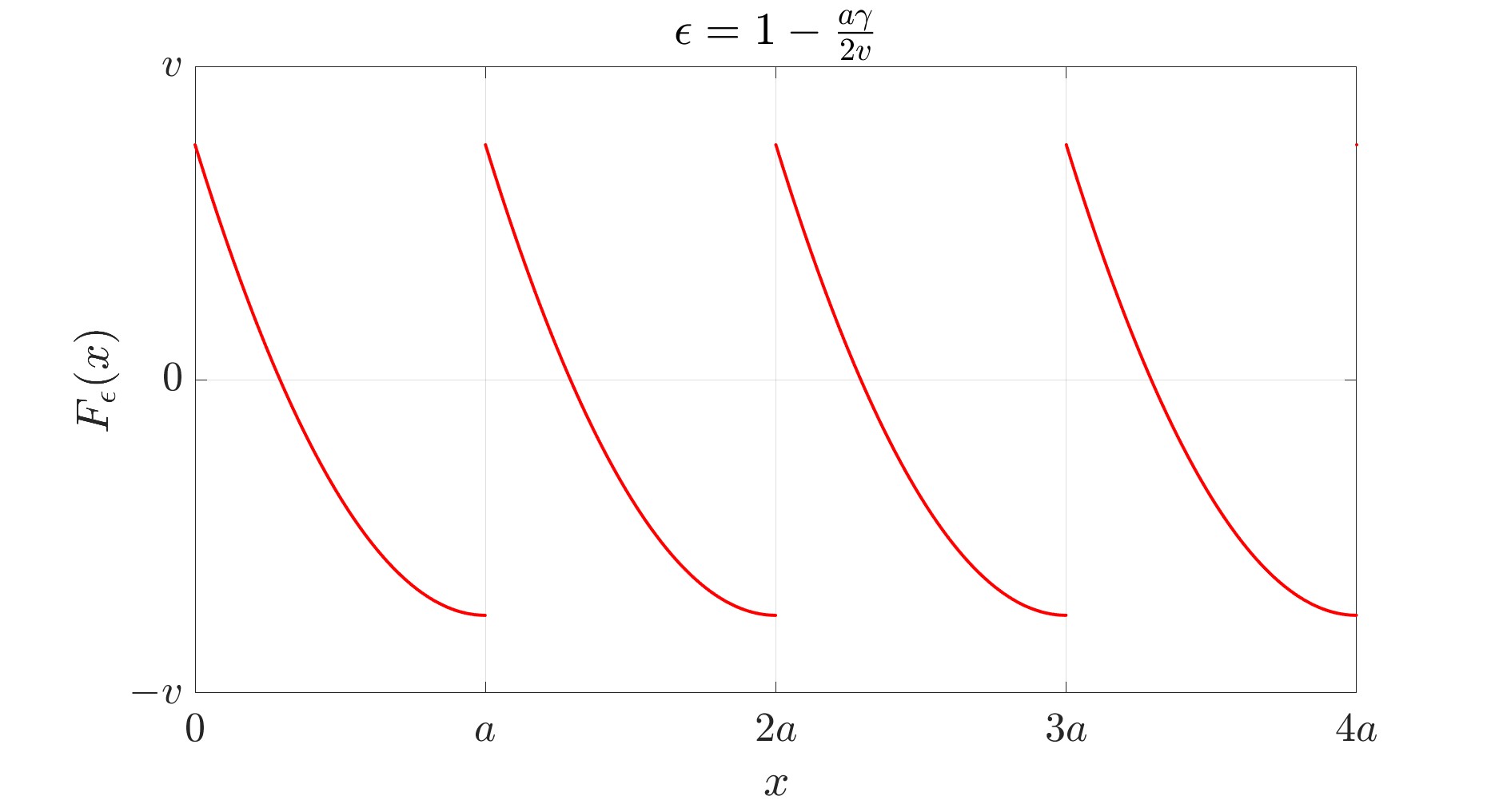}
    \caption{}
    \label{fig:second}
\end{subfigure}
\begin{subfigure}{\textwidth}
    \centering
    \includegraphics[width=0.98\textwidth]{passEpsilon}
    \caption{}
    \label{fig:third}
\end{subfigure}
\caption{ (a) The periodic potential $F_\epsilon$ defined in Eq. (\ref{FEpsilonDef}), for which all points in the positive half-line are accessible.
(b) The solutions to the differential equations displayed in Eqs (\ref{eqn:t+},\ref{eqn:t-}), on the intervals of time for which the particle is is within distance $a$ from the origin. Numerically the parameters were set to $\gamma = 1$, $v = 1$, $a=3/2$ and  $\epsilon = (2 - a \gamma / v)/2$ so that the condition given in Eq. (\ref{epsilonCond}) is satisfied.}
\label{figCounterExamples}
\end{figure}

%{\textcolor{black}{On the other hand, the force field $F(x):= -v + \chi \gamma|a-x|^{\frac{1}{4}}$ on $[0,a[$, where $\chi$ is a dimensional factor ensuring that the values are subcritical on $]0,a[$. Integrating the differential equation yields $x(t) = a - \left( a^{\frac{3}{4}} - \frac{3}{4}\chi \gamma t\right)^\frac{4}{3}$ for $t$ such that  $\left( a^{\frac{3}{4}} - \frac{3}{4}\chi \gamma t\right)>0$.}}\\

{\textcolor{black}{Assuming all points in a period interval including the endpoints are accessible, we first study the exit probability, given an initial position $x\geq 0$ and an initial internal velocity state:
\begin{equation}
\begin{split}
&E(x,\pm):= P( {\mathrm{min}}( t\geq 0, x_t = 0) < \infty | x_0 = x>0, \sigma( 0 ) = \pm 1),\\
&E(0,+):= P( {\mathrm{min}}( t>0, x_t = 0) < \infty | x_0 = 0, \sigma( 0 ) = 1),\\
&E(0,-):= P( {\mathrm{min}}( t\geq 0, x_t = 0) < \infty | x_0 = 0, \sigma( 0 ) = -1)=1.
\end{split}
\end{equation}}}
 For convenience, let us work in a system of units where the unit of length is the mean free path $v\gamma^{-1}$ and the
unit of time is the mean $\gamma^{-1}$ of the Poisson-distributed times between tumble events. This is achieved\footnote{{\textcolor{black}{For an explicit example of a change of variables mapping ordinary space coordinates to coordinates in our system of units (where the unit of length in $v\gamma^{-1}$), see Appendix \ref{appUnits}.}}} by setting the parameters $v$ and $\gamma$ to $1$. The exit probability satisfies the following system of equations:
\begin{equation}\label{sysE}
\begin{cases}
  0 =& (F(x) + 1) \frac{\partial  E(x,+)}{\partial x}
   +   E(x,- ) -   E(x,+ ),\\
  0 =& (F(x) - 1) \frac{ \partial  E(x,- )}{\partial x}
   -  E(x,- ) +   E(x,+ ).
\end{cases} 
\end{equation}

 Another quantity of interest is the mean exit time $T(x,\pm)$ of the particle that starts at coordinate $x$ with internal velocity state $\pm 1$, calculated over the configurations with finite exit times:
\begin{equation}
 T( x, \pm ) =  \int_0^\infty dt \, t \Phi_\pm(x,t),      
\end{equation}
 where $\Phi_\pm(x,t)$ is the flow density at time $t$ through the origin of a particle that was at coordinate $x$ with internal velocity state $\pm$ at time $0$. 
     The quantities $T(x,\pm)$  satisfy the system of equations
\begin{equation}\label{sysT}
\begin{cases}
 -E(x,+) = (F(x) + 1) \frac{\partial T(x,+)}{\partial x}
   + T(x,-) -  T(x,+),\\
 -E(x,-) = (F(x) - 1) \frac{\partial 
T(x,-  )}{\partial x}
   -T(x,-)  +  T(x,+).\\
\end{cases}
\end{equation}
%{\bf does it look simpler to write $T_c(x,\pm)$}
% Perhaps but this notation is consistent  with %\cite{de2021survival}, and the difference is more conspicuous for the reader. 
 The conditional average of the first passage-time at the origin (over the trajectories that eventually reach the origin) can be defined in all cases as 
 \begin{equation}\label{condDef}
     \langle T(x)\rangle_{c,\pm} := \frac{T(x,\pm)}{E(x,\pm)},
\end{equation}
 where $c$ in the above notation (set in \cite{de2021survival}) indicates that the average is conditional on the exit of the particle\footnote{$\langle T(x)\rangle_{c,\pm}$ will sometimes be referred to as {\emph{the conditional average of the first-passage time}} for short.}. In cases where the particle with a given initial velocity state exits the system almost surely,  
  the conditional average $\langle T(x)\rangle_{c,\pm}$ is the mean first-passage time at the origin, which coincides with $T(x,\pm)$.\\

The exit probabilities $E(x,\pm)$ and the conditional averages $\langle T(x)\rangle_{c,\pm}$ were calculated in \cite{de2021survival} in the case where $F$ consists of a constant drift. For subcritical values of the drift, two regimes were identified, depending on the sign of the drift. The particle exits the system almost surely if the drift is negative, otherwise its exit probability exit decreases exponentially with the starting distance from the origin. More precisely,  the exit probabilities and conditional averages of the first-passage time at the origin corresponding statements are as follows.\\
 
{\bf{$(\mathrm{I})_{\mu}$ Exit probability with a constant drift  $\mu$.}} If $F(x)=\mu$ for all $x\geq 0$, then
\begin{equation}\label{EConst}
\begin{split}
 E(x,+) =& \mathbbm{1}( \mu \leq 0 ) + \frac{1-\mu}{1+\mu}e^{-\frac{2\mu}{1-\mu^2} x}\mathbbm{1}( \mu > 0 ),\\
E(x,-) = & \mathbbm{1}( \mu \leq 0 ) + e^{-\frac{2\mu}{1-\mu^2} x}\mathbbm{1}( \mu > 0 ).\\ 
\end{split} 
\end{equation}

  The mean first-passage time at the origin is infinite if the drift is equal to zero. Otherwise,  the conditional averages $\langle T(x,\pm)\rangle_c$ are as follows.\\
  
{\bf{$(\mathrm{II})_{\mu}$ Conditional average of the first-passage time over trajectories that reach the origin, with a constant drift $\mu$.}}  If $F(x)=\mu\neq 0$ for all $x\geq 0$, then $\langle T(x)\rangle_{c,+}$  is finite if and only if $\mu\neq 0$. Moreover, 
\begin{equation}\label{resIImu}
\begin{split}
 \langle T(x)\rangle_{c,+}=&  \left( \frac{x}{|\mu|} + \frac{1}{|\mu|}   \right)\mathbbm{1}( \mu < 0 ) + \left(\frac{1+\mu^2}{\mu(1-\mu^2)}x + \frac{1}{\mu}  \right) \mathbbm{1}( \mu > 0 ),\\
 \langle T(x)\rangle_{c,-} = &  \frac{x}{|\mu|}\mathbbm{1}( \mu < 0 ) +   \frac{1+\mu^2}{\mu(1-\mu^2)}x \mathbbm{1}( \mu > 0 ).\\ 
\end{split} 
\end{equation}

%In particular, if $F(x) = \mu$, with $\mu$ in $]-v,v[$ for all $x\geq 0$, the value of 
 %the mean first-return time to the origin  $T(0,+)$ was shown to be $(1-\mu)[\mu(1+\mu)]^{-1}$ for positive $\mu$, and to be $1/|\mu|$ for negative $\mu$.\\

 The problem of Eq. (\ref{sysT}) was solved in \cite{gueneau2024relating} for a general force field $F$, assuming the exit probabilities $E(x,\pm)$ are identically equal to $1$.\\ 

In this paper, we prove the following generalization of $(\mathrm{I})_{\mu}$ for a periodic force field (with subcritical values such that all the points in a period interval are accessible to the particle):\\

$(\mathrm{I})_{F}$ {\bf{Exit probability with a periodic force field.}} If $F$ is a periodic force field of period $a$, the particle exits the system almost surely if  
\begin{equation}\label{condExit}
  J(a)\leq 0,\quad {\mathrm{with}} \quad J(x):= \int_0^x \frac{2F( y ) }{ 1 - F(y)^2} dy.
\end{equation}
  Moreover, the exit probability given the initial velocity state is expressed as
\begin{equation}\label{reportedE}
\begin{split}
E( x+Na,\pm) =& \mathbbm{1}\left( J(a) \leq 0 \right)\\
 &+ \left[  \frac{
  e^{-NJ(a)}[2\Xi_-(a) - (1-e^{-J(a)})(2\Xi_-(x)\pm e^{-J(x)}) ] }{1-e^{-J(a)} + 2\Xi_-(a)}\right] \mathbbm{1}\left( J(a) > 0 \right),\\
  &{\mathrm{for}}\quad x\in [0,a[,\;\;\;\; N \in\mathbb{N},\quad
  {\mathrm{with}}\;\;\;\Xi_\pm(x)=  \int_0^x
     \frac{e^{\pm J(y)}}{1 - F(y)^2}    dy.
\end{split}
\end{equation}

For the conditional average of the first-passage time to the origin, we prove the following generalization of $(\mathrm{II})_{\mu}$ in the case of a periodic force field:\\

{\bf{$(\mathrm{II})_{F}$ Conditional average of the first-passage time over trajectories that reach the origin, with a periodic force field  $F$.}} 
The conditional average of the first-passage time at the origin is infinite if $J(a)=0$. Otherwise, it is given by an affine function of $N$, if the starting point is in the interval $[Na, (N+1)a[$.\\
$\bullet$ If $J(a)<0$,
\begin{equation}\label{reportedTNeg}
\begin{split}
\langle T( x+Na )\rangle_{c,\pm} =& T( x+Na )\\
=&\left(  \frac{J(a)}{2} +2\frac{ \Xi_+( a) \Xi_-(a)}{1 - e^{J(a)}} -2 \int_0^a dv\Xi'_-(v)\Xi_+(v) \right)N\\
&+ \frac{\Xi_+(a)}{1-e^{J(a)}}\left( 1 \pm e^{-J(x)} + 2\Xi_-(x)\right)
 +\frac{1}{2}J(x) \mp e^{-J(x)}\Xi_+(x) - 2 \int_0^x dv\Xi_-'(v)\Xi_+(v),\\
   &{\mathrm{for}}\qquad x\in [0,a[,\;\;\;\; N \in\mathbb{N}.
\end{split}
\end{equation}
$\bullet$ If $J(a)>0$, %{\bf use $()$ rather than $[]$}
\begin{equation}\label{reported}
\begin{split}
\langle T(x+Na)\rangle_{c,\pm}
=& \left(J(a ) + 2\frac{\Xi_+(a) \Xi_-(a)}{1- e^{-J(a)}} -2 \int_0^a dv \Xi'_+(v) \Xi_-(v)\right)N\\ &+ 
\frac{(-\psi(a) - 2\frac{\Xi_-(a)}{1 - e^{-J(a)}}\varphi(a))(1 \pm e^{-J(x)} + 2\Xi_-(x)) + ( \psi(x) \pm \varphi(x) )(1- e^{-J(a)}+ 2\Xi_-(a) ) }{2\Xi_-(a) - (1-e^{-J(a)})(2\Xi_-(x) \pm e^{-J(x)})  }
,\\
  &{\mathrm{for}}\qquad x\in [0,a[,\;\;\;\; N \in\mathbb{N},
\end{split}
\end{equation}
where the functions $\varphi$ and $\psi$ are defined on $[0,a]$ by
\begin{equation}\label{phiPsiDefInt}
\begin{split}
\varphi(x)=& \frac{1}{2}e^{-J(x)} \int_0^x\frac{[E(x,+)-E(x,-)]F(y) - [E(x,+)+E(x,-)]}{1-F(y)^2}e^{J(y)} dy,  \\
\psi(x) =& 2\int_0^x \frac{\varphi(y)}{1-F(y)^2} dy + \frac{1}{2}\int_0^x\frac{[E(x,+)+E(x,-)]F(y) -[E(x,+)-E(x,-)]}{1-F(y)^2}dy, \qquad( x\geq 0).\\
\end{split}
\end{equation}

%{\bf{(D1)}} {\bf{Exit probability for a half-line with constant 
% subcritical drift \cite{desurvival}}. Assuming  $\mu$ is in $]-v,v[$, the exit probability takes the form:
% \begin{equation}
%  E(x,\pm) = 
% \end{equation}

 %Having obtained the exit probability $E(x,\pm)$, we will solve Eq. (\ref{sysT})
  %and obtain $T(x,\pm)$ for a more general periodic force field.
% In particular, we will show that the first-return time to the origin $T(0,+)$  is finite if $J(a)\neq 0$, and that in this case 
%\begin{equation}\label{res2}
%\begin{split}
% T(0,+) =&  \left[  \frac{\Xi_+(a)}{\gamma(1- e^{J(a)})} \right] \mathbbm{1}( J(a)<0)\\
% &+\left[   \frac{  e(0) \psi(a) -[ e(0) + 1] \varphi(a)   }{ \gamma( 1 - e^{-J(a)})}         \right] \mathbbm{1}( J(a)>0),\\
%{\mathrm{with}}\;\;\;\;E(x)=&\frac{1}{2}\left( E(x,+) + E(x,-)\right),\\
%e(x)=&\frac{1}{2}\left( E(x,+) - E(x,-)\right),\\
%\varphi(x)=&  e^{-J(x)} \int_0^x\frac{e(y)F(y) -  E(y)}{1-F(y)^2}e^{+J(y)} dy,  \\
%\psi(x) =& 2\int_0^x \frac{\varphi(y)}{1-F(y)^2} dy + \gamma \int_0^x\frac{E(y) F(y) -e(y)}{1-F(y)^2}dy.
%\end{split}
%\end{equation}

The paper is organized as follows.  In Section \ref{QOI} we review the backward Fokker--Planck equation satisfied by the survival probability of an RTP in a force field with subcritical values and satisfying \eqref{eqn:t+} and \eqref{eqn:t-}. Taking the Laplace transform induces the systems of differential equations presented in Eqs (\ref{sysE},\ref{sysT}). In Section \ref{ExitProbability}, we restrict the discussion to periodic force fields and work out the condition under which the exit time is almost surely finite. We solve Eq. (\ref{sysE}) and obtain the result reported in Eq. (\ref{reportedE}). In Section \ref{MFPT}, we solve Eq. (\ref{sysT}), working separately in the regimes $J(a)\leq 0$ and $J(a)>0$. The periodicity of the force field is used to derive one of the boundary conditions from a renewal argument. Indeed, a particle in a positive velocity state sees the same system whether it is at position $0$ or at position $a$ (the other boundary condition is more general, as a particle passing at $x=0$ in a negative velocity state leaves the system immediately). We obtain a closed-form expression of $T(x,\pm)$. We turn this expression into the form presented in Eq. (\ref{reportedTNeg},\ref{reported}), a renewal argument. The result is confirmed by direct calculations based on periodicity, presented in Appendices \ref{appInt} and \ref{periodApp}. As an application, the exit probability and the mean first-return time to the origin are evaluated explicitly (in Section \ref{piecewiseConst}) for a piecewise-constant force field alternating between two opposite values. In Appendix \ref{checkConst}, we check that the statement $(\mathrm{II})_{F}$ reproduces $(\mathrm{II})_\mu$ in the case of a constant force field.\\ 

%We restrict to force fields that are periodic and take only subcritical values\comment{subcritical is not strong enough}. 
%In Section \ref{ExitProbability} we solve system $(I)$ show that 
% the particle exits the half line almost surely if the force field satisfies a certain condition expressed in terms over an integral over a period.\\ 

\section{Notations and quantities of interest}\label{QOI}

 Consider a run-and-tumble particle on the positive half-line, starting its motion at position $x>0$, with a given internal velocity state $\pm 1$ (the unit of velocity is set to the absolute value of the internal velocity).
 The particle undergoes Poisson-distributed tumble events. At each tumble event, the internal velocity state flips instantaneously. The rate $\gamma$ of the corresponding Poisson process (the tumbling rate) is set to $1$ ($\gamma^{-1}$ is chosen as the unit of time). The particle is said to be alive as long as its coordinate is positive. If the particle reaches the origin {\textcolor{black}{with negative velocity}}, the process stops. The time (if any) at which the particle reaches the origin, is called the exit time.\\ 
 
 Let us assume that the positive half-line is endowed with a field $F$ satisfying the conditions $(i)$ and $(ii)$ of Eqs (\eqref{eqn:t+},\eqref{eqn:t-}), which implies \eqref{eqn:<v}. 
 Under this assumption, the total velocity of the run-and-tumble particle, which is the sum $F(x_t) + \sigma(t)$ is nonzero for the duration of the process. Moreover, the sign of the total velocity of the particle is the same as the sign of the internal velocity $\sigma(t)$ at all times.\\

 Consider the survival probability $S_\pm(x,t)$ at time $t$ of a run-and-tumble  particle that was at position $x$ in the internal velocity state $\pm 1$ at time $0$. This survival probability satisfies a backward Fokker--Planck equation \cite{bray2013persistence,dhar2019run}, which is derived as follows. For a particle that is alive at time $t$, and has been at position $x>0$ at time $0$ in the internal velocity state $\pm 1$, consider the time interval $[0, dt]$, where $dt$ is an infinitesimal time. If there is no tumble event in this interval (which is the case with probability $1- dt$), the particle is at position $x+[F(x) \pm 1]dt$ at time $dt$. It must survive for a duration of time $t-dt$ to be alive at time $t$. If there is a tumble event in the interval $[0,dt]$ (which is the case with probability $dt$), the internal velocity state flips, and the particle has to survive until time $t$. Hence the two terms in the expression of the survival probability $S_\pm( x,t)$:
 \begin{equation}
 S_\pm( x,t) =  ( 1- dt )S_\pm( x + [F(x) \pm 1]dt,t-dt) +
   dt S_\mp( x,t ).
 \end{equation}
Taylor expansion yields
\begin{equation}
 S_\pm( x,t) =  S_\pm(x,t) + [F(x) \pm 1]dt \partial_x S_\pm(x,t) - dt \partial_t S_\pm(x,t) -  dt S_\pm( x,t) +
dt S_\mp( x,t).\\ 
\end{equation}
Hence the evolution equation satisfied by the survival probability on the positive half-line:
\begin{equation}\label{Seq}
 \partial_t S_\pm(x,t) =  [F(x) \pm 1] \partial_x S_\pm(x,t) -  S_\pm( x,t) +   S_\mp( x,t),\quad\quad(x>0,t\geq 0).\\ 
\end{equation} 

Let us denote by $\Phi_\pm(x,t)$ the density of first-passage time at the origin with a negative total velocity of a particle that started its motion (at time $0$) at position $x$  with  the internal velocity state $\pm v$:
\begin{equation}
\begin{split}
 \Phi_\pm(x,t) =& -\frac{\partial S_\pm(x, t)}{\partial t}.\\
 \end{split}
\end{equation}
With this definition, $\Phi_+(0,0) = 0$ because the total velocity of a particle that is at zero at time $0$ in a positive internal velocity state is $F(0)+1>0$ (the particle does not exit the half-line at time $0$). 
 Differentiating Eq. (\ref{Seq}) w.r.t. $t$ yields a system of coupled evolution equations for the densities $\Phi_+$ and $\Phi_-$:\\
 \begin{equation}\label{Phieq}
 \begin{split}
\partial_t \Phi_\pm(x,t) =& [  F(x) \pm 1] \partial_x \Phi_\pm(x,t) -  \Phi_\pm(x,t) +  \Phi_\mp (x, t).\\
\end{split}
 \end{equation}

% Moreover, let us denote by $E(x,\pm,t)$ the probability  that the particle  exits the positive half-line before time $t$. It satisfies:
%\begin{equation}
%    E(x,\pm,t) = + \int_{0}^t dz \Phi_\pm( x,v,z) dz.
%\end{equation}
 {\textcolor{black}{Let us denote by $\tilde{\Phi}$ the Laplace transform of the flow through the origin:
 %\comment{need to justify that $e^{-st}\Phi_{\pm}(x,t)\xrightarrow[]{t\to\infty}0$. The integral $\int_0^\infty \Phi_\pm(x,z)$ is finite.}
\begin{equation}
 \tilde{\Phi}_\pm(x,s):= \int_0^\infty dt\, e^{-st} \Phi_\pm( x,t).
\end{equation}
  Taking the Laplace transform of Eq. (\ref{Phieq})}}  (assuming $e^{-st}\Phi_\pm(x,t)$ goes to zero when $t$ goes to infinity), yields the following system of equations for  $\tilde{\Phi}_\pm(x,s)$ (with $x>0$):
 \begin{equation}\label{PhiTildeEq}
 s\tilde{\Phi}_\pm(x,s) = [F(x) \pm 1] \partial_x 
  \tilde{\Phi}_\pm(x,s) -  \tilde{\Phi}_\pm(x,s) + \tilde{\Phi}_\mp(x,s). 
 \end{equation} 
We have used the fact that $\Phi_\pm(x,0)=0$ for $x>0$ (the particle does not reach the origin at time zero if it is at $x>0$).\\
%\comment{There is 
%a subtlety here: is $S_+(0,t)=0$ or not? True, a solution is to define the first-passage through the origin "with a total negative velocity" (point made below Eq. (12).)} 

 Borrowing the notations of \cite{gueneau2024relating}, let us denote by $E(x,\pm)$ the probability that a particle starting its motion at time $0$ at position $x$ with internal velocity state $\pm 1$ exits the positive half-line  in finite time
 \begin{equation}
 E(x,\pm) = \int_0^\infty dz  \Phi_\pm(x,z)  = \tilde{\Phi}_\pm(x,0).
 \end{equation}
 This exit probability appears at order $0$ in the Taylor expansion of $\tilde{\Phi}_\pm(x,s)$ around $s=0$ as
\begin{equation}\label{LapExp}
\begin{split}
\tilde{\Phi}_\pm(x,s):=& 
 \int_0^\infty dt \,t \Phi_\pm(x,t) + o(s)\\
 =& E(x,\pm ) - s T(x,\pm) + o(s),
\end{split}     
 \end{equation}
where 
\begin{equation}
T(x,\pm):= \int_0^\infty dt ~t \Phi_\pm(x,t).
\end{equation}
The quantity $T(x,\pm)$ the mean exit time of a particle starting at position $x$ in the internal velocity state $\pm 1$, if we set the exit time to zero for trajectories that do not exit the half-line.\\
%\comment{this is not a conditional mean exit time, but we just disregard the contribution from survival. Right, we must reformulate "conditional" throughout.}\\

Collecting the terms of order $0$ in $s$ in Eq.
  (\ref{PhiTildeEq})  yields a coupled system of evolution equations for the 
   exit probability in finite time, given the initial velocity state:
\begin{equation}\label{sysPhiABLap}
\begin{split}
 0 =& (F(x) + 1) \frac{\partial  E(x,+)}{\partial x}
   +   E(x,- ) -   E(x,+ ),\\
  0 =& (F(x) - 1) \frac{ \partial  E(x,- )}{\partial x}
   - E(x,- ) +   E(x,+ ).
\end{split}    
\end{equation}

Collecting the terms of order $1$ in $s$
  yields a coupled system of evolution equations 
   for the mean exit times:
\begin{equation}\label{sysTAB}
\begin{split}
 -E(x,+) =& (F(x) + 1) \frac{\partial T(x,+)}{\partial x}
   +  T(x,-) -  T(x,+),\\
 -E(x,-) =& (F(x) - 1) \frac{\partial 
T(x,-  )}{\partial x}
   -  T(x,-)  +  T(x,+).\\
\end{split}    
\end{equation}   
 These are the two systems of equations (Eqs (\ref{sysE},\ref{sysT})) announced in the introduction.\\

\section{Exit probability in a periodic force field with subcritical values}\label{ExitProbability}
\subsection{Integration of the evolution equations}
 
In our system of units, the evolution equation for the exit probability (Eq. (\ref{sysE})) becomes 
\begin{equation}\label{sysExit}
\begin{split}
 0 =& (F(x) + 1) \frac{\partial  E(x,+)}{\partial x}
   +  E(x,- ) -  E(x,+ ),\\
  0 =& (F(x) - 1) \frac{ \partial  E(x,- )}{\partial x}
   - E(x,- ) +  E(x,+ ).\\
\end{split}    
\end{equation}
% with the boundary condition
%\begin{equation}
%E(0,-) = 1.
%\end{equation}
Let us introduce the linear combinations, borrowing again the notations of \cite{gueneau2024relating}:
\begin{equation}\label{eEDef}
\begin{split}
E(x) := \frac{1}{2}\left(  E(x,+ ) + E(x,- )   \right),\\
e(x) := \frac{1}{2}\left(  E(x,+ ) - E(x,- )   \right).\\
\end{split}
\end{equation}
In terms of these unknowns, Eq. (\ref{sysExit}) becomes
\begin{equation}\label{sysHalfLine}
\begin{split}
 0 =& F(x) E'(x)
   +  e'(x),\\
  0 =& F(x) e'(x) + E'(x) - 2e(x).\\
\end{split}    
\end{equation}
Multiplying the second equation in the system by $F$ and substituting yields a differential equation of order one in $e$,
\begin{equation}\label{evole}
0 =  e'(x) + \frac{2F(x)}{ 1- F(x)^2 } e(x).
\end{equation}
 Integrating  yields
\begin{equation}\label{eExpr}
 e(x) = e(0)\exp(-J(x)),\quad (x\geq 0),
\end{equation}
with the notation
\begin{equation}\label{defJ}
\begin{split}
J(x) :=& \int_{0}^x \frac{ 2 F(u)  }{1 - F(u)^2}du.
\end{split}
\end{equation}
{\textcolor{black}{There is no zero in the denominator because we assume the values of the force field to be subcritical (see Eq. (\ref{eqn:<v}), which becomes $|F(x)|< 1$ in the present system of units where the parameter $v$ serves as the unit of velocity).}}\\

 Substituting into Eq. (\ref{sysHalfLine}) we obtain
\begin{equation}\label{diffEeSol}
\begin{split}
  E'(x) =& - \frac{2 e(0)}{F(x)^2 - 1}\exp\left( -J( x) \right),\\
\end{split}
\end{equation}
which upon integration yields
\begin{equation}\label{Eexpr}
\begin{split}
 E(x) =& E(0) + 2e(0) \Xi_-(x),\quad (x\geq 0),\\
\end{split}
\end{equation}
with the notation\footnote{The function  $\Xi_+$ will appear in Section \ref{MFPT}, in Eq. (\ref{solt}).}
\begin{equation}\label{defXi}
\begin{split}
 \Xi_\pm(x) :=& \int_0^x \frac{e^{\pm J(y)}}{1 - F(y)^2}dy.\\
%  \Xi_-(x) :=& \Xi_{-1}(x),\;\;\;\;\;\;\;\Xi_+(x) := \Xi_{+1}(x).
\end{split}
\end{equation}

\subsection{Boundary conditions and return-time to the origin}
 We need two boundary conditions to fix the constants $E(0)$ and $e(0)$. A particle starting at the origin in a negative internal velocity state leaves the system immediately. Indeed $F(0)-1<0$ because the values of $F$ are subcritical. Hence 
 \begin{equation}
 E(0,-) = 1,
 \end{equation}
 which in terms of the integration constants $e(0)$ and $E(0)$ reads
  \begin{equation}\label{BC1}
  E(0) - e(0) = 1.
  \end{equation}

  As the force field $F$ is a periodic function of period $a$, the expressions of  $e(x)$ and $J(x)$ in Eqs (\ref{defJ},\ref{eExpr}) induce
\begin{equation}
 e( x + Na ) = e^{-NJ(a)} e(x),\;\;\;\;\;\;(x\geq 0, N\in\mathbb{N}).
\end{equation}
 The function $e$ is a linear combination of probabilities. It is therefore bounded. If $J(a)<0$, this implies that $e(x)$ is identically zero, and $e(0) = 0$. In this case, the expression of $E(x)$ in Eq. (\ref{Eexpr}) implies that the function $E$ is a constant. This constant is equal to $1$ due to the boundary condition of Eq. (\ref{BC1}). Hence, the particle exits the system almost surely if $J(a)<0$.\\ 

  If $J(a)=0$, then $J(ka)=0$ for any integer $k\geq 1$. Then, using the periodicity of $F$ in Eq. \eqref{defXi}, 
 \begin{equation}
 \Xi_{-}(ka) = \sum_{j=0}^{k-1} 
 \int_{ja}^{(j+1)a} \frac{e^{-J(y)}}{1 - F(y)^2}dy  =  k \int_0^a \frac{e^{-J(y)}}{1 - F(y)^2}dy\xrightarrow[]{k\to\infty}\infty.
 \end{equation}
 Using Eq. (\ref{Eexpr}), we obtain the sequence $E(ka) = E(0) + 2e(0) \Xi_-(ka)$ for any $k\geq 1$. As the function $E$ is bounded, this sequence is bounded and $e(0)$ must be equal to zero. The boundary condition of Eq. (\ref{BC1})  again implies that the RTP exits the positive half-line almost surely. Hence, {\textcolor{black}{the RTP exits almost surely if $J(a) \leq 0$, as announced in Eq. (\ref{condExit})}}.\\

 From now on let us assume that $J(a)> 0$.
 {\textcolor{black}{Consider a particle starting  at position $a$ in a positive internal velocity state}}  (its position at time $0$ is $a$, and its velocity is $F(a)+1>0$). To exit the positive half-line, the particle must come back to $a$. It does so with probability $E(0,+)$ because the periodic system on the half line $[a,\infty[$ is the same as on the half line $[0,\infty[$. When the particle comes back to $a$, it is in a negative internal velocity state. The exit probability from the system therefore satisfies 
 \begin{equation}\label{BCa}
    E( a,+) = E(0,+) E(a,-). 
 \end{equation}

In terms of the unknown functions $e$ and $E$, the above  condition reads
\begin{equation}\label{extro}
\begin{split}
 E(a) +e (a)  =& (E(0) + e(0) )(E(a) - e(a) ).\\
\end{split}
\end{equation}  
 Using the boundary condition of Eq. (\ref{BC1}) at the origin, and the expressions of $E(a)$  in terms of $E(0)$ and $e(0)$ in Eq. \eqref{Eexpr},
\begin{equation}
\begin{split}
E(0)=& 1 + e(0),\\
E( a ) =& E( 0 ) + 2 e(0) \Xi_-(a ),
\end{split}
\end{equation}
 the boundary condition of Eq. \eqref{extro} becomes  an equation in $e(0)$
 \begin{equation}
 \begin{split}
     1 + e(0)( 1 + 2\Xi_-(a) + e^{-J(a)}) =& [ 1 + 2e(0)][ 1 + e(0)( 1 + 2\Xi_-(a) - e^{-J(a)})],\\
     e(0)( -2 +2 e^{-J(a)}) =&  2e(0)^2 ( 1 - e^{-J(a)} + 2\Xi_-(a) ).
 \end{split}
 \end{equation}

There are two solutions to the above equation for $e(0)$. One is $0$ (if $e(0)=0$, we know that the particle exits the system in finite time almost surely). The other solution is negative. Let us show that under the assumption $J(a)>0$ the integration constant $e(0)$ is equal to the negative solution.\\

 %discussion of the boundary condition pushed at infinity by Gueneau et al.
\begin{theorem}
If $J(a)>0$, 
\begin{equation}\label{e0Sol}
e(0) = -\frac{1-e^{-J(a)}}{1 -e^{-J(a)} + 2\Xi_-(a)}.
\end{equation}
\end{theorem}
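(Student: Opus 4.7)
My plan is to identify $e(0)$ as the $N\to\infty$ limit of the corresponding integration constant for the two-absorbing-boundary problem on $[0,Na]$, in which exit is almost sure and the solution is unambiguously determined. For each integer $N\geq 1$ I would introduce the hitting probability $H_N^\pm(x)$ that a particle starting at $x\in[0,Na]$ in internal velocity state $\pm 1$ reaches the origin before the additional absorbing boundary placed at $Na$. Because $|F|<1$ and the domain is bounded, the particle almost surely reaches one of the two boundaries in finite time, so $H_N^\pm$ is unambiguous. The backward Fokker--Planck argument of Section~\ref{QOI} produces the same system~(\ref{sysExit}) for $H_N^\pm$ on $(0,Na)$, with the two Dirichlet conditions $H_N^-(0)=1$ and $H_N^+(Na)=0$ replacing the periodicity--renewal condition~(\ref{BCa}).

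I would then integrate this system by the method of Section~\ref{ExitProbability}: setting $e_N:=(H_N^+-H_N^-)/2$ and $E_N:=(H_N^++H_N^-)/2$, one still has $e_N(x)=e_N(0)\,e^{-J(x)}$ and $E_N(x)=E_N(0)+2e_N(0)\,\Xi_-(x)$. The two boundary conditions yield the linear system $E_N(0)-e_N(0)=1$ and $E_N(Na)+e_N(Na)=0$, whose unique solution is
\[
e_N(0)\;=\;-\frac{1}{1+2\,\Xi_-(Na)+e^{-J(Na)}}.
\]
Using the periodicity of $F$ I would obtain $e^{-J(Na)}=e^{-NJ(a)}$ and, by the geometric summation already carried out in Section~\ref{ExitProbability}, $\Xi_-(Na)=\Xi_-(a)\,(1-e^{-NJ(a)})/(1-e^{-J(a)})$. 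Under the assumption $J(a)>0$, $e^{-NJ(a)}\to 0$ as $N\to\infty$, and inserting these expressions reproduces exactly the right-hand side of~(\ref{e0Sol}) in the limit.

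It remains to identify this limit with $e(0)$. For each fixed $x$ and initial velocity state, the event ``the particle hits the origin before reaching the level $Na$'' is nondecreasing in $N$; since the total velocity of the RTP is bounded in absolute value by $1+\sup|F|<2$, any trajectory that hits the origin in finite time stays inside some $[0,Na]$ for $N$ large enough, so the countable union of these events coincides with the exit event used to define $E(x,\pm)$. Monotone convergence therefore gives $H_N^\pm(x)\to E(x,\pm)$, hence $e_N(0)\to e(0)$, and combining with the explicit limit above proves~(\ref{e0Sol}). The main obstacle I anticipate is this final identification: although intuitive from the bounded-speed property, it must be justified carefully via monotone convergence rather than simply asserted; by contrast, the ODE integration and the passage to the limit in $N$ are routine.
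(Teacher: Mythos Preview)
Your argument is correct, and it follows essentially the route sketched in the paper's \emph{Remark} immediately after the proof: compute the exit probability on a finite segment and send the right boundary to infinity. The only difference is that the Remark quotes the finite-segment formula from \cite{gueneau2024relating} and takes a general $b\to\infty$, whereas you rederive it from the ODE and specialise to $b=Na$; you also spell out the monotone-convergence identification $H_N^\pm\to E(\cdot,\pm)$ that the Remark leaves implicit.

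This is, however, genuinely different from the paper's \emph{main} proof. There the authors have already reduced (via the renewal relation~(\ref{BCa})) to a quadratic in $e(0)$ with roots $0$ and the value in~(\ref{e0Sol}), so all that remains is to rule out $e(0)=0$. They do this indirectly: they build a two-state Markov chain $(X_n)$ recording the sign of the velocity each time the particle crosses a level in $a\mathbb{Z}$, compute its transition matrix from the finite-segment exit probabilities of \cite{gueneau2024relating}, and show that under $J(a)>0$ the invariant measure satisfies $\pi_1>\pi_2$. The ergodic theorem then forces $\sum_{i=1}^n X_i\to+\infty$ almost surely, so with positive probability the partial sums never return to $0$, i.e.\ the particle survives forever and $e(0)\neq0$. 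Your approach is more elementary and yields the constant directly; the paper's approach explains \emph{why} survival has positive probability, tying it to the sign of the effective drift $\pi_1-\pi_2$.
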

\begin{proof}
We just need to show that if $J(a)>0$ and the particle starts at the origin with a positive velocity,  then the particle stays forever in the positive half-line with a strictly positive probability.  The trajectory of the particle is continuous, but we can map the trajectory   of an RTP on a real line to a Markov chain (as if the particle the particle lived on the points $\{ka\}_{k\in \mathbb Z}$).  If a particle is at position $ka$ in a positive velocity state, then it will either return to $ka$ with negative velocity or move to $(k+1)a$ with positive velocity, before getting out of the interval $[ka, (k+1)a]$; if the particle is at $ka$ in a negative velocity, then it either returns to $ka$ with the positive velocity or move to $(k-1)a$ with negative velocity, before getting out of the interval  $[(k-1)a, ka]$. These alternatives allow  us to introduce a Markov chain $(X_n)_{n\geq1}$ such that $X_n$ takes value either $1$ or $-1$.\\ 

\textcolor{black}{ The construction is as follows. \begin{itemize}
 \item Step 1: Assume that the particle is at position $0$ at time zero (it can be either in a positive or negative velocity state). We let $X_1=1$ (resp.\ $X_1=-1$) if the velocity is positive (resp.\ negative). 
 \item Step 2: The particle may move to $a$ (resp.\ $-a$) or $0$ if $X_1=1$ (resp.\ $X_1=-1$). 
 Then we let $X_2=1$ (resp.\ $X_2=-1$) if the velocity is positive (resp.\ negative) when the particle moves to $a, -a$ or $0.$
\item Step $n$: At the $n$-th step, the particle is at a location which is an integer multiple of $a$. We let $X_n=1$ (resp.\ $X_n=-1$) if the velocity is positive (resp.\ negative)  \end{itemize}
Therefore, we obtain a discrete-time process $(X_n)_{n\geq 1}$.
Since $F$ is periodic, $(X_n)_{n\geq 1}$ is a Markov chain. Note that although $X_n$ takes values $1$ or $-1$, $(X_n)_{n\geq 1}$ completely describes the trajectory of the particle on the points $\{ka\}_{k\in\mathbb Z}$. The distribution of the Markov chain $(X_n)_{n\geq 1}$ is completely characterised by its transition matrix which consists of $p_{i,j}$ for all $i,j=1,-1$, where
\[p_{1,1}:=\mathbb P(X_{n+1}=1\,|\, X_n=1), \quad p_{1,-1}:=1-p_{1,1}=\mathbb P(X_{n+1}=-1\,|\, X_n=1), \]
and 
\[p_{-1,1}:=\mathbb P(X_{n+1}=1\,|\, X_n=-1), \quad p_{-1,-1}:=1-p_{-1,1}=\mathbb P(X_{n+1}=-1\,|\, X_n=-1). \]
  Note that $p_{i,j}>0$ for any $i,j=1,-1$, thanks to Eqs  \eqref{eqn:<v}, \eqref{eqn:t+} and \eqref{eqn:t-}. Therefore, $(X_n)_{n\geq }$ is a finite-state irreducible Markov chain with a unique invariant measure, which we denote by $\pi=(\pi_1,\pi_2).$}\\

By this correspondence between the trajectory (living on the real line) and the Markov chain (living on $\{ka\}_{k\in\mathbb Z}$), we can express the following event in terms of the Markov chain:
\begin{equation}\label{eqn:equivalent}\begin{split}&\{\text{The particle stays forever in the positive half line starting from zero with positive velocity}\}\\
=&\left\{\forall k\geq 1 , \sum_{i=1}^kX_i>0\,\Big|\, X_1=1\right\}.\end{split}\end{equation}
We just need to show that 
\begin{equation}\label{eqn:forallk}
P\left(\forall k\geq 1 , \sum_{i=1}^kX_i>0\,\Big|\, X_1=1\right)>0.\end{equation}

Let us calculate the values $p_{i,j}$ and $\pi_1,\pi_2$ for the Markov chain $(X_n)_{n\geq 1}$. Let \[Z:=A+B+1,\]
with 
\begin{equation}\label{eqn:ab}
A=2 \int_0^a\Xi_-(x)dx,\quad B=e^{-J(a)}.
\end{equation}
 In \cite{gueneau2024relating}, the exit probability of an RTP from a segment has been worked out (in units of space and time where the velocity of the particle is $v_0$ and the tumbling rate is $\gamma$).
  The results of \cite[(16)]{gueneau2024relating} 
   yield the entries of the transition matrix of the Markov chain in our notations as 
\begin{equation}
p_{1,1}=\frac{2}{Z},\quad p_{1,-1}=\frac{Z-2}{Z},\quad p_{-1,1}=\frac{A-B+1}{Z},\quad p_{-1,-1}=\frac{2B}{Z}.
\end{equation}
The invariant measure $\pi$ satisfies 
\[(\pi_1,\pi_2)
\begin{pmatrix}
p_{1,1} & p_{1,-1}\\
p_{-1,1} & p_{-1,-1}
\end{pmatrix}=(\pi_1,\pi_2).\]
The above equation yields 
\[\pi_1=\frac{A-B+1}{2A},\quad \pi_2=\frac{A+B-1}{2A}.\]
Since $J(a)>0$, we have $B=e^{-J(a)}<1$. Then $\pi_1>\pi_2$. By the Ergodic Theorem for finite-state irreducible Markov chains \cite[Theorem 1.10.2]{norris1998markov}, 
\[\frac{\sum_{i=1}^nX_i}{n}\xrightarrow[]{n\to\infty}\pi_1-\pi_2>0, \quad \text{almost surely.}\]
for any initial distribution of $X_1$. The above display implies that $\sum_{i=1}^kX_i=0$ can occur only for a (random) finite number of values of the integer $k$. Therefore, \eqref{eqn:forallk} must hold,  which concludes the proof. 
\end{proof}

 \iffalse{The above argument can be repeated by substituting any integer multiple $Na$ of the period to the value $a$. However, by periodicity of the force field,
\begin{equation}
\begin{split}
 J(Na) &= N J(a),\\
\Xi_-(Na ) =& \sum_{k=0}^{N-1} \int_{ka}^{(k+1)a}\frac{e^{-J(x)} dx}{1-F(x)^2} 
= \sum_{k=0}^{N-1} e^{-kJ(a)}\int_{0}^{a}\frac{e^{-J(x)} dx}{1-F(x)^2}\\
=& \Xi_-(a) \frac{1 - e^{-N J(a)}}{1 - e^{-J(a)}}.\\
\end{split}   
\end{equation}
Hence
\begin{equation}
\begin{split}
\frac{\Xi_-(Na)}{1- e^{-J(Na)}} =&  \frac{\Xi_-(a)}{1- e^{-J(a)}},
 \end{split}   
\end{equation}
 and the combination appearing in Eq. (\ref{e0Sol}) does not depend on the integer multiple of the period chosen in the argument.\\}\fi

{\bf{Remark.}} The results of \cite{gueneau2024relating} for the exit probability of an RTP through any end of a segment $[a,b]$ have been used to express the transition matrix. 
 We can also consider the limit where $b$ goes to infinity (in the special case $a=0$), to obtain the above result. Indeed, if $E_b(x,\pm)$ denotes the exit probability from the segment $[0,b]$ through $b$, it has been shown in \cite{gueneau2024relating} that (in the present units and notations)
\begin{equation}
E_b(x, \pm) = \frac{\Xi_-(x)  \pm e^{-J(x)} + 1}{2\Xi_-(b) + e^{-J(b)} + 1}, \quad (x\in]a,b[).    
\end{equation}
 It is easy to use periodicity (see Eqs (\ref{JPer},\ref{XiPer}) in Appendix \ref{appInt}) to show that $J(a)>0$ implies 
\begin{equation}
 \underset{b\to \infty}{\lim}\Xi_-(b) = \frac{\Xi_-(a)}{1-e^{-J(a)}},\;\;\;\;\;\underset{b\to \infty}{\lim}J(b) = +\infty.
\end{equation} 
 We  obtain $e(0)$ by taking the following  large-$b$ limit:
\begin{equation}
\begin{split}
 e(0) =& \frac{1}{2}\left[\left(1- \underset{b\to \infty}{\lim} E_b(0,+) \right) - \left(1- \underset{b\to \infty}{\lim} E_b(0,-) \right)   \right]\\
=& -\frac{1}{ \frac{\Xi_-(a)}{1-e^{-J(a)}} + 1},
\end{split}
\end{equation}
 which gives back Eq. (\ref{e0Sol}).\\

 In the case $J(a)>0$, the coefficient $e(0)$ is therefore negative. Moreover, Eq.\ (\eqref{Eexpr})  implies that $E$ is a strictly decreasing function of the starting position $x$. In particular, the exit probability is not identically equal to $1$.\\

 To sum up, taking into account Eqs \eqref{eExpr}, \eqref{Eexpr} and \eqref{e0Sol}, the functions $e$ and $E$  are expressed as follows: 
\begin{equation}\label{sumUpE}
\begin{split}
 E(x ) =& \mathbbm{1}( J(a) \leq 0 ) + \left[ \frac{2\Xi_-(a) - 2(1-e^{-J(a)})\Xi_-(x)}{1-e^{-J(a)} + 2\Xi_-(a)} \right] \mathbbm{1}( J(a) > 0 ),\\
e(x) =&  -  \frac{1-e^{-J(a)}}{1-e^{-J(a)} + 2\Xi_-(a)}  e^{-J(x)}\mathbbm{1}( J(a) > 0 ).
\end{split}
\end{equation}
 From the definition of the unknowns in Eq. \eqref{eEDef}, the exit probability given the initial internal velocity state is expressed as follows: 
\begin{equation}\label{EExpr}
E( x,\pm) = \mathbbm{1}( J(a) \leq 0 ) + \left[ 1 - \frac{(1-e^{-J(a)})[1+ 2\Xi_-(x) \pm e^{-J(x)}]}{1-e^{-J(a)} +2\Xi_-(a)}\right] \mathbbm{1}( J(a) > 0 ).
\end{equation}

 The functions $J$ and $\Xi_-$ are defined (in Eqs \ref{defJ},\ref{defXi}) by integrals that need only be evaluated in the interval $[0,a[$. Indeed periodicity can then be used to work out the value of the exit probability at any point on the positive half-line. Substituting the expressions derived in the Appendix (Eqs (\ref{JPer},\ref{XiPer})) yields
\begin{equation}\label{EqNapm}
\begin{split}
E&( x+Na,\pm) = \mathbbm{1}( J(a) \leq 0 ) \\
&+ \left[ 1 - \frac{ 2 \Xi_-(a) + 1- e^{-J(a)}
 + e^{-NJ(a)}[(1-e^{-J(a)})(2\Xi_-(x)\pm e^{-J(x)}) -2\Xi_-(a)] }{1-e^{-J(a)} + 2\Xi_-(a)}\right] \mathbbm{1}( J(a) > 0 )\\
&= \mathbbm{1}( J(a) \leq 0 ) + \left[  \frac{
  e^{-NJ(a)}[2\Xi_-(a) - (1-e^{-J(a)})(2\Xi_-(x)\pm e^{-J(x)}) ] }{1-e^{-J(a)} + 2\Xi_-(a)}\right] \mathbbm{1}( J(a) > 0 ),\\
 & \shoveright{\quad(x\in [0,a[,\;\;\;\; N \in\mathbb{N}).}
\end{split}
\end{equation}
 It is easy to check that $E(x,\pm)$ goes to $E(a,\pm)$ when $x$ goes to $a$. 
 The second term corresponds to the case of non-zero survival probability. In such a case, the exit probability $E(Na,\pm)$ is an exponentially decreasing function of $N$. The exponential dependence of the exit probability on the distance to the origin {\textcolor{black}{in the case of a positive constant drift is recovered if $F$ is constant and positive  (this dependence is predicted by Eq. (\ref{EConst}) and checked as a special case  in Appendix \ref{checkConst}, see Eq. (\ref{repro}))}}.\\
 
 %To restore the original dimensionful system of units, we substitute the symbol $x\gamma/ v$  for the symbol $x$, and the symbol $F/v$ for the symbol $F$. These operations yield the result reported in Eq. (\ref{reportedE}).

\begin{figure}
\centering
    \includegraphics[width=0.9\linewidth]{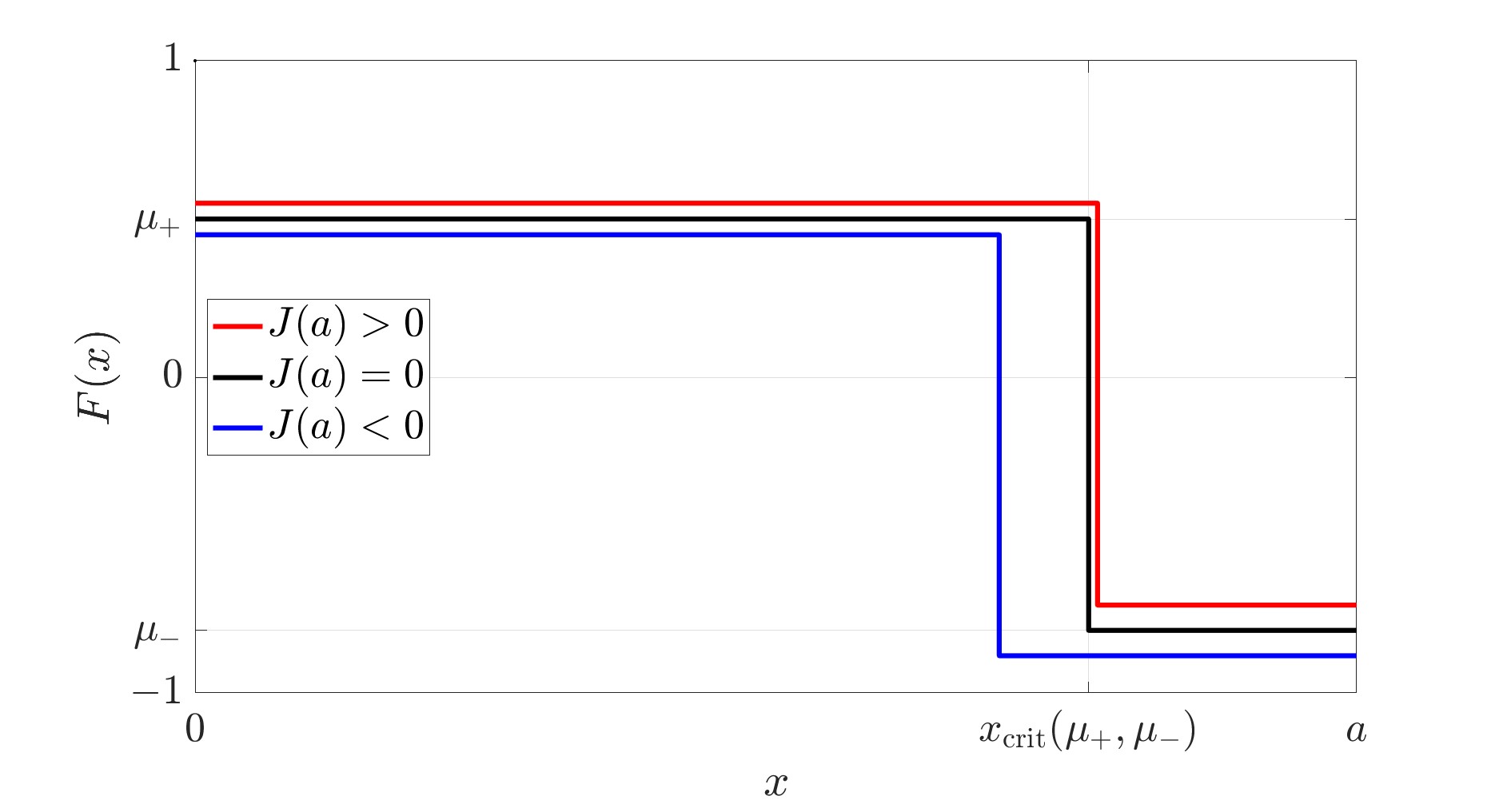}
    \caption{Piecewise-constant force fields alternating between two constant values of opposite signs. The black curve corresponds to the force field described in Eq. (\ref{exJ}). The value $x_{\mathrm{crit}}(\mu_-,\mu_+)$ is defined in terms of $\mu_-$ and $\mu_+$ in Eq. (\ref{xCritDef}). If the graph of such a piecewise-constant function does not intersect the black graph, the of $J(a)$ is positive (resp. negative) if the graph is above (resp. below) the black curve.}
    \label{figJ}
\end{figure}

{\textcolor{black}{
{\bf{Examples.}} Let us work out a few examples. \\ 
$(i)$ {\bf{Alternating drifts.}} Consider a force field taking the values $\mu_+$ and $\mu_-$, satisfying the conditions $-1<\mu_-<0<\mu_+ < 1$:
\begin{equation}\label{exJ}  
\begin{split}
 F( x + Na) =& \mu_+ \mathbbm{1}( 0\leq x < b) -\mu \mathbbm{1}( b\leq x < a),\;\;\;\;\;\;\;(x\in [0,a[,\;\;\;\;\;N\in\mathbb{N}),
\end{split}    
\end{equation}
 where $b$ is a value in $]0,a[$.  For definiteness we chose $F(0) =\mu_+$ and $F(b) = \mu_-$, but the values of $F$ on the discrete set $\{0,b\}$ do not change the value of $J(a)$, which is calculated as
\begin{equation}
 J(a) = 2\left[ \left(  \frac{\mu_+}{1-\mu_+^2} - \frac{\mu_-}{1-\mu_-^2}\right)b + \frac{\mu_-}{1-\mu_-^2}a  \right].
\end{equation} 
 For fixed $\mu_-$ and $\mu_+$, the quantity $J(a)$ has the same sign as $b-x_{\mathrm{crit}}(\mu_-,\mu_+)$, where 
\begin{equation}\label{xCritDef}
 x_{\mathrm{crit}}(\mu_-,\mu_+) = -\frac{\mu_-( 1 - \mu_+^2) a}{\mu_+( 1 - \mu_-^2) - \mu_-( 1-\mu_+)^2}.
\end{equation}
This position goes to $a$ (resp. $0$) if $\mu_-$ goes to $-1$ (resp. $\mu_+$ goes to $1$), which is intuitive and corresponds to cases where some of the values of the force field get close to critical. Examples are illustrated on Fig. \ref{figJ}, and the special case where $\mu_- = - \mu_+$ will be studied in detail in Section \ref{piecewiseConst} (see Eq. (\ref{alternatingDef})). In that case the critical value of the position is the midpoint of the interval $[0,a]$ as $x_{{\mathrm{crit}}}(-\mu_+,\mu_+)=1/2$.\\ 
}}

{\textcolor{black}{
$(ii)$ {\bf{Oscillations about a drift.}} Consider a force field oscillating about a subcritical value $\mu$, with an amplitude $\tau$:
\begin{equation}
F(x) = \mu + \tau \sin\left( \frac{2\pi x}{a}\right),
\end{equation}
with the assumptions $|\mu + \tau|< 1$ and $|\mu-\tau|<1$. 
 We restrict ourselves to a spatial frequency of $a^{-1}$. 
The sign of the integral $J(a)$ is unchanged if the frequency of the oscillation is multiplied by an integer.
 $J(a)$ is evaluated using the identity $\sin( y ) = 2\tan( y/2)( 1 + \tan( y/2)^2 )^{-1}$:
\begin{equation}
\begin{split}
 J( a ) =& 2\int_0^{a} \frac{ \mu + \tau \sin\left( \frac{2\pi x}{a}\right)}{1 - \left[ \mu + \tau \sin\left( \frac{2\pi x}{a}\right)  \right]^2}\\   
 =& \frac{a}{2\pi} \int_{-\pi}^\pi \frac{2( \mu + \tau \sin y )}{\left( 1 - \mu - \tau \sin y\right)\left( 1 +\mu + \tau \sin y\right)}dy\\
 =&  \frac{a}{2\pi}\left[\int_{-\pi}^\pi 
 \frac{ \mu + \tau \sin y }{1 - \mu - \tau \sin y}dy
  + \int_{-\pi}^\pi 
 \frac{ \mu + \tau \sin y }{1 + \mu + \tau \sin y}dy
 \right]\\
 =& \frac{a}{2\pi}\left[\int_{-\pi}^\pi 
 \frac{ dy }{1 - \mu - \tau \sin y}
  - \int_{-\pi}^\pi 
 \frac{ dy }{1 + \mu + \tau \sin y}
 \right].\\
\end{split}
\end{equation}
Using the integrals
\begin{equation}
\begin{split}
\int_{0}^X \frac{dz }{ 1 - a \sin z} =& \frac{2}{\sqrt{1-a^2}}\operatorname{arctan}\left( 
\frac{\tan\left(\frac{X}{2}  \right) - a}{\sqrt{1-a^2}}
\right),\\
\int_{-\pi}^{\pi} \frac{dz }{ 1 - a \sin z} =& \frac{2\pi}{\sqrt{1-a^2}},
\end{split}
\end{equation}
 we evaluate $J(a)$ as 
 \begin{equation}
J(a) = \left[ \frac{1}{\sqrt{(1-\mu)^2 - \tau^2}} -\frac{1}{\sqrt{(1+\mu)^2 - \tau^2}}   \right] a.
 \end{equation}
 The sign of $J(a)$ is therefore the sign of the average value $\mu$ of the oscillating force field. 
}}

\section{First-passage time at the origin}\label{MFPT}

Having worked out the expressions of $E(x,\pm)$, we can
 now solve the system of differential equations   satisfied by the mean first-passage times $T(x,\pm)$.
Changing unknowns to 
\begin{equation}\label{DefTt}
\begin{split}
 T(x) :=& \frac{1}{2}\left(  T(x,+) + T(x,-)  \right),\\
t(x) :=&  \frac{1}{2}\left(  T(x,+) - T(x,-)  \right),
\end{split}
\end{equation}
 Eq. (\ref{sysTAB}) yields a coupled system of differential equations:
\begin{equation}
\begin{split}
- E(x) =& F(x) T'(x) + t'(x),\\
- e(x) =& F(x) t'(x) + T'(x) -2t(x).\\
\end{split}    
\end{equation}
Combining and substituting in the same way as in Eq. (\ref{sysHalfLine}) of Section \ref{ExitProbability} yields an ordinary differential equation in $t$,
\begin{equation}\label{tDifGen}
t'(x) + \frac{2F(x)}{1-F(x)^2} t(x) = \frac{-E(x) + e(x) F(x)}{1- F(x)^2},
\end{equation} 
 and the expression of the derivative of $T$:
\begin{equation}\label{TPrime}
T'(x) = \frac{2t(x)}{1-F(x)^2} +\frac{-e(x) + E(x) F(x)}{1-F(x)^2}.
\end{equation}

\subsection{First-passage time at the origin with almost-sure exit ($J(a)\leq 0$)}

Assume $J(a) \leq 0$. In this case $E(x)$ is identically equal to 1 and $e(x)$ is identically equal to 0. Using the integrating factor $e^J$ we obtain (assuming $t(0)$ is finite),
\begin{equation}
 \frac{d}{dx}\left(   e^{J(x)} t(x)  \right) = -\frac{e^{J(x)}}{1-F(x)^2},
\end{equation}
\begin{equation}\label{solt}
\begin{split}
   t(x)  =& t(0)e^{-J(x)} - e^{-J(x)}\int_0^x \frac{e^{J(y)}}{1-F(y)^2}dy\\
   =& t(0)e^{-J(x)} - e^{-J(x)}\Xi_+(x).
\end{split}
\end{equation}

%\begin{equation}
%   t(0)  = e^{-J(a)}t(0) + e^{-J(a)}\int_0^a %\frac{e^{J(y)}}{1-F(y)^2}dy,
%\end{equation}

Integrating Eq. (\ref{TPrime}) therefore yields (assuming $T(0)$ is finite)
\begin{equation}\label{Tsolo}
\begin{split}
T(x) =& T( 0 ) + \int_0^x \frac{2t(v) + F(v)}{1-F(v)^2}dv\\
  =&  T(0)  + 2 t(0) \int_0^x \frac{e^{-J(v)}}{1-F(v)^2} dv - 
   2 \int_0^x \frac{e^{-J(v)}}{1-F(v)^2}\left[ \int_0^v \frac{e^{J(u) }}{1-F(u)^2}du \right] dv
 +\int_0^x \frac{F(v)}{1- F(v)^2} dv\\
 =& T(0)  + 2 t(0) \Xi_-(x) - 2 \int_0^x dv \Xi_-'(v) \Xi_+(v) dv + \frac{1}{2} J(x).
\end{split}
\end{equation}

 We have assumed that the two constants $T(0)$ and $t(0)$ are finite. To fix them, we need two boundary conditions. The particle leaves the system immediately if it starts at the origin with negative velocity:
\begin{equation}\label{BCOri}
 T(0,-) = 0.
\end{equation}
 A second boundary condition can be obtained from a renewal argument using the periodicity of the system. If the process starts at $a$ (at the end of a period interval of the field $F$), with a positive velocity, it enters the half-line $[a,\infty[$ and eventually returns to the origin (as $J(a)$ is negative, the particle reaches the origin almost surely). To reach the origin, it must first return to $a$ (which it does with a negative velocity). Moreover, the system in which the particle lives before its first return to $a$ is a half-line equipped with a periodic force field, identical to the entire system on the half-line $[0,\infty[$.\\

 Hence 
 the additional boundary condition
\begin{equation}\label{BCPer}
T(a,+) = T(0,+) + T(a,-).
\end{equation}
 In terms of the unknowns $T$ and $t$, 
 the boundary conditions of Eqs (\ref{BCOri},\ref{BCPer}) imply 
\begin{equation}\label{cocoinf}
T(0) = t(0) = t(a).
\end{equation}
 If we denote by $\tau_0$ (resp. $\tau_a$) the random time it takes for a particle starting at $0$ (resp. $a$) in a positive velocity state to return to $0$ (resp. $a$) {\textcolor{black}{in a negative velocity state.}} The first moment of $\tau_0$ is $t(0)$, and the first moment of $\tau_a$ is $t(a)$ (these moments may be infinite or finite). The two random variables $\tau_0$ and $\tau_a$ are identically distributed, hence they have the same first moment, which yields again $t(0)=t(a)$. Eq. (\ref{cocoinf}) therefore holds even if $t(0)$ and $t(a)$ are infinite. \\

 Let us distinguish two cases according to the value of $J(a)$.\\

$\bullet$ {\bf{Case 1: $J(a) = 0$.}} In this case, specializing Eq. (\ref{solt}) at $x=a$ yields
\begin{equation}
   t(a)  = t(0) - \int_0^a \frac{e^{J(y)}}{1-F(y)^2}dy.\\
\end{equation}
The second term on the r.h.s. is negative, hence the equation $t(a)=t(0)$ cannot be satisfied, which contradicts the assumption of a finite value of $t(0)$. Hence  
\begin{equation}\label{t0Inf}
    t(0)=\infty.
\end{equation} 
 This is consistent with the particular case of a free particle (where $F$ is identically zero, and $J(a)=0$). The mean first-passage time of a free RTP starting from the origin in a positive velocity state is known to be infinite \cite{de2021survival}.\\

$\bullet$ {\bf{Case 2: $J(a) <0$.}} Solving the equation $t(a) =t(0)$ using the expression of the function $t$ given in Eq. (\ref{solt}) yields  
\begin{equation}\label{t0Sure}
 t(0) = - \frac{e^{-J(a)}}{1- e^{-J(a)}}\Xi_+(a).
\end{equation}
 %which is the first term of the r.h.s. in Eq. (\ref{res2}), provided that the symbol $\gamma^{-1} t(0)$ is substituted for the symbol $t(0)$, to express time in a system of units in which the tumbling rate is equal to $\gamma$.\\

 Substituting this integration constant into Eqs (\ref{solt},\ref{Tsolo}) yields the expressions 
\begin{equation}
\begin{split}
t(x) =& -e^{-J(x)}\left[ \frac{e^{-J(a)}}{1- e^{-J(a)}}\Xi_+(a) + \Xi_+(x)  \right],\\
T(x) = & -\frac{e^{-J(a)}}{1- e^{-J(a)}}\Xi_+(a)\left[ 1 + 2\Xi_-(x)\right] - 2 \int_0^x dv \Xi_-'(v) \Xi_+(v) dv + \frac{1}{2} J(x).
\end{split}
\end{equation}
The mean first-passage times given the initial velocity state are therefore obtained from Eq. (\ref{DefTt}) as
\begin{equation}\label{Txpm}
\begin{split}
%T(x,+) =&  \frac{e^{-J(a)}}{1- e^{-J(a)}}\Xi_+(a) \left[ -1 - 2\Xi_-(x) - e^{-J(x)}      \right]  - 2 \int_0^x dv \Xi_-'(v) \Xi_+(v) dv + \frac{1}{2} J(x) - e^{-J(x)} \Xi_+(x),\\
%T(x,-) =&  \frac{e^{-J(a)}}{1- e^{-J(a)}}\Xi_+(a) \left[ -1 - 2\Xi_-(x) + e^{-J(x)}    \right] - 2 \int_0^x dv \Xi_-'(v) \Xi_+(v) dv + \frac{1}{2} J(x) +  e^{-J(x)} \Xi_+(x).
T(x,\pm) =&  \frac{e^{-J(a)}}{1- e^{-J(a)}}\Xi_+(a) \left[ -1 - 2\Xi_-(x) \mp e^{-J(x)}      \right] \\
 &- 2 \int_0^x dv \Xi_-'(v) \Xi_+(v)  + \frac{1}{2} J(x) \mp e^{-J(x)} \Xi_+(x),\qquad(x\geq 0).\\
\end{split}    
\end{equation}

 Moreover, using the periodicity of $F$, it is enough to evaluate the integral expressions on the r.h.s. of the above equations for $x$ in the period interval $[0,a[$. The expressions of the various terms in the above equation at $x+Na$ for $x$ in $[0,a[$ an a positive integer $N$ are obtained in Appendix \ref{appInt} using periodicity\footnote{The calculations worked out in Appendix \ref{appInt} can be bypassed using a renewal argument, which yields Eq. (\ref{resT1}) and applies to the case $J(a)<0$ as well. The argument is presented in Section \ref{sectionCond}, Eq. (\ref{renewalPer}).} (Eqs (\ref{JPer},\ref{XiPer},\ref{intXiPrimeXiMP})). Substitution yields:
\begin{equation}\label{TExplAllx}
\begin{split}
T(x+Na,\pm) =&  \frac{e^{-J(a)}}{1- e^{-J(a)}}\Xi_+(a) \left[ -1 \mp e^{-NJ(a)}e^{-J(x)}
-2\frac{1- e^{-NJ(a)}}{1 - e^{-J(a)}}\Xi_-( a) -2 e^{-NJ(a)}\Xi_-(x)
\right] \\
 &- 2 \left[
\frac{ \Xi_+( a) \Xi_-(a)}{1 - e^{J(a)}}\left( -N + \frac{1 - e^{-NJ(a)}}{1- e^{-J(a)}} \right)
  + N \int_0^a\Xi'_-(v)\Xi_+(v) dv\right.\\
  & \left.- \frac{1- e^{-NJ(a)}}{1 - e^{J(a)}}\Xi_+( a)\Xi_-(x)  + 
   \int_0^x\Xi'_-(v)\Xi_+(v) dv
 \right]\\
 &+ \frac{1}{2}\left( NJ(a) + J(x) \right) \mp e^{-NJ(a) - J(x)} 
 \left[\frac{1- e^{NJ(a)}}{1 - e^{J(a)}}\Xi_+( a) + e^{NJ(a)}\Xi_+(x)\right],\\
 &\qquad\qquad(x\in [0,a[, N\in \mathbb{N}).
\end{split}    
\end{equation}
 Collecting the terms according to their dependence on $N$ yields only terms proportional to $N$ and $1$ (the terms proportional to $e^{-NJ(a)}$ vanish):
\begin{equation}\label{resT1}
\begin{split}
T(x+Na,\pm) =& \left(  \frac{J(a)}{2} +2\frac{ \Xi_+( a) \Xi_-(a)}{1 - e^{J(a)}} -2 \int_0^a dv\Xi'_-(v)\Xi_+(v) \right)N\\
&+ \frac{\Xi_+(a)}{1-e^{J(a)}}\left( 1 \pm e^{-J(x)} + 2\Xi_-(x)\right)
 +\frac{1}{2}J(x) \mp e^{-J(x)}\Xi_+(x) - 2 \int_0^x dv\Xi_-'(v)\Xi_+(v),\\
 &\qquad\qquad(x\in [0,a[, N\in \mathbb{N}).
\end{split}
\end{equation} 
 This is the result reported in Eq. (\ref{reportedTNeg}). The affine dependence of $T(x+Na,\pm)$ in $N$ generalizes the  behavior identified for a force field consisting of a constant negative drift (this behavior corresponds to $\mu<0$ in the statement $(\mathrm{II})_\mu$, Eq. (\ref{resIImu})). Consistency is checked in Appendix \ref{constantCheckNeg}.\\

%The coefficient of the term $e^{J(a)}$ vanishes. The leading term at large $N$ is of order $N$:
%\begin{equation}
%T(x+Na,\pm) = N\left[ \frac{2\Xi_+(a)\Xi_-(a)}{1-e^{J(a)}} 
%-\int_0^a \Xi'_-(v)\Xi_+(v) dv +\frac{J(a)}{2}    \right] + %o(N).
%\end{equation}

%The leading term at large $N$ in the above expression is of order $e^{-NJ(a)}$:
%\begin{equation}
%\frac{T(x+Na,\pm)}{e^{-NJ(a)}} \underset{N\gg 1}{\sim}\frac{\Xi_+(a)}{e^{J(a)} - 1} \cosh(J(x)).
%\end{equation}
%It does not depend on the initial velocity state.\\

\subsection{First-passage time at the origin with non-zero survival probability ($J(a)>0$)}
\subsubsection{Integration of the evolution equation}
 In this case we need to solve the more general  differential equations (Eqs (\ref{tDifGen},\ref{TPrime})), substituting the expressions of $e(x)$ and $E(x)$ obtained in Eqs (\ref{eExpr},\ref{Eexpr}):
\begin{equation}
\begin{split}
e(x) =& e( 0 ) e^{-J(x)},\\
E(x) =& 1+e(0) + 2 e(0) \Xi_-(x),\\
 e(0) =& -\frac{1-e^{-J(a)}}{1-e^{-J(a)}  + 2\Xi_-(a)}.
\end{split}
\end{equation}
Using the same integrating factor $e^{J}$ as in Eq. (\ref{solt}) and substituting yields
  \begin{equation}\label{txExpl}
\begin{split}
t(x) =& t(0)e^{-J(x)}  + \int_0^x \frac{-E(y) + e(y) F(y)}{1-F(y)^2} e^{J(y) - J(x)} dy,\\
%=& t(0)e^{-J(x)}  
%-(1 + e(0) ) e^{-J(x)}\int_0^x \frac{e^{+J(y)}}{1-F(y)^2} dy -2e(0) e^{-J(x)}\int_0^x \frac{e^{+J(y)} \Xi_-(y)}{1-F(y)^2} dy\\
% &+ e(0) e^{-J(x)} \int_0^x \frac{F(y)}{ 1- F(y)^2} dy\\
% =& t(0)e^{-J(x)}  
%-(1 + e(0) ) e^{-J(x)} \Xi_+(x) - 2 e(0)  e^{-J(x)}\int_0^x \Xi_+'(y) \Xi_-(y) dy + \frac{e(0)}{2}
 %J(x) e^{-J(x)}
\end{split}
\end{equation}

\begin{equation}\label{TxExpl}
\begin{split}
T(x) =& T(0) + 2t(0) \int_0^x \frac{e^{-J(y)}}{ 1 - F(y)^2} dy\\
&+ 2 \int_0^x dz \frac{e^{-J(z)}}{1-F(z)^2}\int_0^z \frac{e^{J(y)}[-E(y) + e(y) F(y)]}{1-F(y)^2} dy
+\int_0^x \frac{E(v)F(v) -e(v)}{1- F(v)^2} dv.\\
\end{split}
\end{equation}

The functions $t$ and $T$ are therefore expressed in terms of the initial values $t(0)$ and $T(0)$ as
\begin{equation}\label{tTExpr}
\begin{split}
  t( x )  =& t(0) e^{-J(x)} +\varphi(x),\\
  T(x )  =&  T(0) + 2 t(0) \Xi_-(x) +  \psi(x),\\
\end{split}
\end{equation}
 where the two functions $\varphi$ and $\psi$ do not depend on the integration constants $t(0)$ and $T(0)$. They are expressed in terms of the force field and of the functions $e$ and $E$ (expressed in Eqs (\ref{eExpr}, \ref{Eexpr})) as follows:
\begin{equation}\label{phiPsiDef}
\begin{split}
\varphi(x):=& e^{-J(x)} \int_0^x\frac{e(y)F(y) - E(y)}{1-F(y)^2}e^{+J(y)} dy,  \\
\psi(x) :=& 2\int_0^x \frac{\varphi(y)}{1-F(y)^2} dy + \int_0^x\frac{E(y) F(y) -e(y)}{1-F(y)^2}dy.\\
\end{split}
\end{equation}

% In the large-$N$ limit, the leading terms in $\Xi_-( x+Na )$ and $\varphi(x+Na)$ are of order $1$ according to Eq. (\ref{XiPer},\ref{phiDom1}).  
%  The leading term in the expression of $\psi(x+Na)$ is obtained from Eqs (\ref{FirstTermPsi},\ref{groupTermPsi}). It  yields the large-$N$ behavior of $T(x+Na,\pm)$ as:
%\begin{equation}
%\begin{split}
%T(x+Na&,\pm )\\
%=&  \frac{N (\Xi_-(a) - 1)}{1-e^{-J(a)} + 2\Xi_-(a)}\left( 
%\frac{\Xi_+(a)}{1 - e^{-J(a)}} -\int_0^a dy \Xi'_-(y) \Xi_+(y) \right)+o(N).
%\end{split}
%\end{equation}

\subsubsection{Boundary conditions and return-time to the origin}
Consider an initial state with positive velocity at position $a$, with a force field $F$ satisfying the condition $J(a)>0$. To reach the origin in finite time, it needs to come back to position $a$. It does so with probability $E(0,+)$, because the half-line $[a,\infty[$ is endowed with the same force field as $[0,\infty[$. If it comes back to $a$, it passes through $a$ with negative velocity and reaches the origin with probability $E(a,-)$.\\

 Consider the conditional average $\langle T(x) \rangle_{c,\epsilon}$ of the first-passage time at the origin, defined in Eq. (\ref{condDef}). 
 The above argument implies that the conditional average of the first-passage time of the particle starting at $a$ in the positive velocity state satisfies
\begin{equation}\label{beco}
\langle T(a) \rangle_{c,+} E( a,+) =  \langle T(0) \rangle_{c,+} E(0,+) \times E( a,-)   +  E( 0,+)
\langle T(a) \rangle_{c,-} E( a,-) .
\end{equation}
This condition can be re-expressed in terms of the solutions of the evolution equations Eq. (\ref{sysT}) as %\comment{\eqref{eqn:T=1+1} is quite straightforward, without needing to use $C(x,\epsilon).$ True but the conditional expectation $C(x,\epsilon)$ is the quantity that is displayed in  Ref. [4], $C(0,+) =1/mu$.}
\begin{equation}\label{eqn:T=1+1}
  T(a,+) = T( 0,+) E( a,-) +  T(a,- ) E( 0,+),
\end{equation}
 On the other hand, a particle leaves the system immediately if it is at the origin at time $0$ in a negative velocity state, meaning $T(0,-) = 0$, hence 
\begin{equation}
    t(0) = T(0) = \frac{1}{2}T(0,+).
\end{equation}

 Substituting into Eq. (\ref{eqn:T=1+1}) yields
\begin{equation}
T( a ) + t( a ) = 2t(0) E( a,-) + (T(a) - t(a)) E(0,+).
 \end{equation}
 Substituting the expressions of the functions $T(a)$ and $t(a)$ found in Eq. (\ref{tTExpr}), and using the identities $E(0,+) = 2e(0) + 1$ and $E(a,-) = e^{-J(a)}$ yields 
 \begin{equation}
t( 0 ) \times (-2e(0))[ 1 + 2\Xi_-(a) - e^{-J(a)}] = 2e(0) \psi(a) -2[ e(0) + 1]\varphi(a),\\ 
\end{equation}
 Substituting the expression of $e(0)$ in Eq. (\ref{e0Sol}) yields the integration constant
\begin{equation}\label{t0Expr}
t(0)= \frac{  e(0) \psi(a) -[ e(0) + 1] \varphi(a)   }{ 1 - e^{-J(a)}}= \frac{-\psi(a) - \frac{2\Xi_-(a)}{1 - e^{-J(a)}}\varphi(a)}{1-e^{-J(a)} + 2\Xi_-(a)},
\end{equation}
 The solution of Eq. (\ref{sysT}) follows:
\begin{equation}\label{TClosedAllx}
\begin{split}
T(x,\pm)=& t(0)[ 1 \pm e^{-J(x)} + 2\Xi_-(x)  ]
+\varphi(x) \pm \psi(x), \qquad( x\geq 0 ).
\end{split}
\end{equation}
 
 Using periodicity as in the case of almost-sure exit, we can express the functions $t$ and $T$ at position $x+Na$, with $x$ in $[0,a[$, to evaluate\footnote{As in the case of almost-sure exit, the calculations worked out in the Appendix can be bypassed, as the conditional average of the first-passage time reported in Eq. (\ref{reported}) can be obtained  using the renewal argument presented in Section \ref{sectionCond}, Eq. (\ref{renewalPer}).}
 \begin{equation}\label{TxNaSurv}
T(x+Na, \pm) = t(0) + 2t(0) \Xi_-( x+Na ) + \psi( x+Na) 
\pm t(0) e^{-NJ(a)} e^{-J(x)} \pm \varphi(x+Na).
 \end{equation}
 The calculations are done in Appendix \ref{periodApp}, they lead to Eq. \eqref{appTpmp},
  which yields:

\begin{equation}\label{Tpmp}
\begin{split}
T(x+Na, \pm )
=  &\varphi(a) e^{J(a)}\left( 2\Xi_-(x)-2 \frac{\Xi_-(a)}{1-e^{-J(a)}} \pm e^{-J(x) }\right)Ne^{-NJ(a)}\\ 
&+ \left[ \frac{(-\psi(a) - 2\frac{\Xi_-(a)}{1 - e^{-J(a)}}\varphi(a))( 1 \pm e^{-J(x) }+ 2\Xi_-(x) )}{1-e^{-J(a)} + 2\Xi_-(a)} +\psi(x) \pm\varphi(x)\right] e^{-NJ(a)},\\
&\quad(N\in \mathbbm{N}, x \in [0,a[). 
\end{split}
\end{equation}

%\begin{equation}\label{Tpmp}
%\begin{split}
%T(x+Na,- )
%=  &\varphi(a) e^{J(a)}\left( 2\Xi_-(x)-2 \frac{\Xi_-(a)}{1-e^{-J(a)}} - e^{-J(x) }\right)Ne^{-NJ(a)}\\ 
%&+ \left[ \frac{(-\psi(a) - 2\frac{\Xi_-(a)}{1 - e^{-J(a)}}\varphi(a))( 1 - e^{-J(x) }+ 2\Xi_-(x) )}{1-e^{-J(a)} + 2\Xi_-(a)} +\psi(x) -\varphi(x)\right] e^{-NJ(a)},\\
%T(x+Na,+)
%=  & \varphi(a) e^{J(a)}\left( 2\Xi_-(x)-2 \frac{\Xi_-(a)}{1-e^{-J(a)}} + e^{-J(x) }\right)Ne^{-NJ(a)}\\ 
%&+ \left[ \frac{(-\psi(a) - 2\frac{\Xi_-(a)}{1 - e^{-J(a)}}\varphi(a))( 1 + e^{-J(x) }+ 2\Xi_-(x) )}{1-e^{-J(a)} + 2\Xi_-(a)} +\psi(x) +\varphi(x)\right] e^{-NJ(a)},\\
%&\quad(N\in \mathbbm{N}, x \in [0,a[). 
%\end{split}
%\end{equation}
 All terms in the expression of $T(x\pm Na)$ are weighted by a factor of $e^{-N J(a)}$, but this factor is present in $E(x+Na,\pm)$ (see Eq. (\ref{EqNapm})). The quantities $T(x+Na,\pm)$ therefore go to zero in the large-$N$ limit, but this is an indication of the low exit probability.\\

\section{Conditional average of the exit time over trajectories that reach the origin}\label{sectionCond}

The values of the exit probabilities $E(x+Na,\pm)$ and 
 of the times $T(x+Na,\pm)$ obtained in Eqs (\ref{EqNapm},\ref{Tpmp}) yield the conditional averages $\langle T(x+Na) \rangle_{c,\pm}$ in all cases.\\

\noindent{$\bullet$} {\bf{Almost-sure exit ($J(a)\leq 0$).}}
There are two subcases, studied in the previous two sections.\\

$(i)$ $J(a)=0$.\\
 In this case, the exit probability is equal to $1$, but the times $T(x,\pm)$ are not finite. Hence 
 \begin{equation}
 \langle T(x) \rangle_{c,\pm}=\infty, \qquad (x\geq 0).
 \end{equation}

$(ii)$  $J(a)<0$.\\ 
In this case, the conditional average of the first-passage time (with given initial velocity state $\pm 1$) at the origin coincides with $T(x,\pm)$, as reported in Eq. (\ref{resT1}):\\

\begin{equation}\label{resC1}
\begin{split}
\langle T(x+Na) \rangle_{c,\pm} =& 
 T(x+Na,\pm)\\
=&\left(  \frac{J(a)}{2} +2\frac{ \Xi_+( a) \Xi_-(a)}{1 - e^{J(a)}} -2 \int_0^a dv\Xi'_-(v)\Xi_+(v) \right)N\\
&+ \frac{\Xi_+(a)}{1-e^{J(a)}}\left( 1 \pm e^{-J(x)} + 2\Xi_-(x)\right)
 +\frac{1}{2}J(x) \mp e^{-J(x)}\Xi_+(x) - 2 \int_0^x dv\Xi_-'(v)\Xi_+(v),\\
 &\qquad\qquad(x\in [0,a[, N\in \mathbb{N}).
\end{split}
\end{equation} 
% In particular, $\langle T(x+Na) \rangle_{c,\pm}$

\noindent{$\bullet$} {\bf{Non-zero survival probability ($J(a)>0$).}} 
  Dividing the time $T(x+Na,\pm)$ (expressed in Eq. (\ref{appTpmp})) by the exit probability $E(x+Na,\pm)$  (expressed in Eq. (\ref{EqNapm})),  we  obtain $\langle T(x+Na)\rangle_{c,\pm}$ as 

\begin{equation}\label{resC2}
 \begin{split}
\langle T(x+Na)\rangle_{c,\pm}
=&  \frac{1-e^{-J(a)} + 2\Xi_-(a)}{2\Xi_-(a) - (1-e^{-J(a)})(2\Xi_-(x) \pm e^{-J(x)})  }\varphi(a) e^{J(a)}\left( 2\Xi_-(x)-2 \frac{\Xi_-(a)}{1-e^{-J(a)}} \pm e^{-J(x) }\right)N\\ &+ 
\frac{(-\psi(a) - 2\frac{\Xi_-(a)}{1 - e^{-J(a)}}\varphi(a))(1 \pm e^{-J(x)} + 2\Xi_-(x)) + ( \psi(x) \pm \varphi(x) )(1- e^{-J(a)}+ 2\Xi_-(a) ) }{2\Xi_-(a) - (1-e^{-J(a)})(2\Xi_-(x) \pm e^{-J(x)})  }
\\
=& -\frac{(1- e^{-J(a)} + 2\Xi_-(a))\varphi(a) e^{J(a)}}{1-e^{-J(a)}}N\\
 &+ 
\frac{(-\psi(a) - 2\frac{\Xi_-(a)}{1 - e^{-J(a)}}\varphi(a))(1 \pm e^{-J(x)} + 2\Xi_-(x)) + ( \psi(x) \pm \varphi(x) )(1- e^{-J(a)}+ 2\Xi_-(a) ) }{2\Xi_-(a) - (1-e^{-J(a)})(2\Xi_-(x) \pm e^{-J(x)})  },\\
&\qquad\qquad(x\in [0,a[, N\in \mathbb{N}).
\end{split}
\end{equation} 
% \begin{equation}\label{resC2}
% \begin{split}
%\langle T(x+Na)\rangle_{c,-}
%=&  \frac{1-e^{-J(a)} + 2\Xi_-(a)}{2\Xi_-(a) - (1-e^{-J(a)})(2\Xi_-(x)- e^{-J(x)})  }\varphi(a) e^{J(a)}\left( 2\Xi_-(x)-2 \frac{\Xi_-(a)}{1-e^{-J(a)}} - e^{-J(x) }\right)N\\ &+ 
%\frac{(-\psi(a) - 2\frac{\Xi_-(a)}{1 - e^{-J(a)}}\varphi(a))(1 - e^{-J(x)} + 2\Xi_-(x)) + ( \psi(x) - \varphi(x) )(1- e^{-J(a)}+ 2\Xi_-(a) ) }{2\Xi_-(a) - (1-e^{-J(a)})(2\Xi_-(x)- e^{-J(x)})  }
%\\
%=& -\frac{(1- e^{-J(a)} + 2\Xi_-(a))\varphi(a) e^{J(a)}}{1-e^{-J(a)}}N\\
 %&+ 
%\frac{(-\psi(a) - 2\frac{\Xi_-(a)}{1 - e^{-J(a)}}\varphi(a))(1 - e^{-J(x)} + 2\Xi_-(x)) + ( \psi(x) - \varphi(x) )(1- e^{-J(a)}+ 2\Xi_-(a) ) }{2\Xi_-(a) - (1-e^{-J(a)})(2\Xi_-(x)- e^{-J(x)})  },\\
%\langle T(x+Na)\rangle_{c,+} =&\frac{1-e^{-J(a)} + 2\Xi_-(a)}{2\Xi_-(a) - (1-e^{-J(a)})(2\Xi_-(x)+ e^{-J(x)})  }  \varphi(a) e^{J(a)}\left( 2\Xi_-(x) -2 \frac{\Xi_-(a)}{1-e^{-J(a)}}
%+ e^{-J(x) }\right)N\\ &+
%\frac{(-\psi(a) - 2\frac{\Xi_-(a)}{1 - e^{-J(a)}}\varphi(a))(1 + e^{-J(x)} + 2\Xi_-(x)) + ( \psi(x) + \varphi(x) )(1- e^{-J(a)}+ 2\Xi_-(a) ) }{2\Xi_-(a) - (1-e^{-J(a)})(2\Xi_-(x)+ e^{-J(x)})  }\\
%=& -\frac{(1- e^{-J(a)} + 2\Xi_-(a))\varphi(a) e^{J(a)}}{1-e^{-J(a)}}N\\
%&+\frac{(-\psi(a) - 2\frac{\Xi_-(a)}{1 - e^{-J(a)}}\varphi(a))(1 + e^{-J(x)} + 2\Xi_-(x)) + ( \psi(x) + \varphi(x) )(1- e^{-J(a)}+ 2\Xi_-(a) ) }{2\Xi_-(a) - (1-e^{-J(a)})(2\Xi_-(x)+ e^{-J(x)})  },\\
%&\quad(N\in \mathbbm{N}, x \in [0,a[).
%\end{split}
%\end{equation}
The above expressions are the ones reported in Eqs (\ref{reportedTNeg},\ref{reported}) in the introduction. Indeed, the coefficient of $N$ can be rewritten using the expression of $e(0)$ in Eq. (\ref{e0Sol}) as 
\begin{equation}
\begin{split}
- \frac{1-e^{-J(a)} + 2\Xi_-(a)}{ 1 - e^{-J(a)} }\varphi(a) e^{J(a)} =& \frac{1}{e(0)}\int_0^a dy\left[ \frac{e(0)F(y)}{1- F(y)^2} - \frac{2\Xi_-(a) - 2 (1-e^{-J(a)})\Xi_-(y) }{1 - e^{-J(a)} + 2\Xi_-(a)}\times\frac{e^{J(y)}}{1- F(y)^2}\right]\\
=& \frac{J(a)}{2} + 2\frac{\Xi_-(a)\Xi_+(a)}{1-e^{-J(a)}} - 2 \int_0^a dv \Xi_-(v)\Xi_+'(v) dv.
\end{split}    
\end{equation}

%\subsection{conditional averages of the first-passage time starting from integer multiples of the period}
 For any non-zero value of $J(a)$, the quantities $\langle T(x+Na )\rangle_{c,\pm}$ 
 are affine functions of $N$, and the coefficient of $N$ does not depend on $x$ or on the initial velocity state. This is observed directly after an explicit calculation using periodicity, completed separately in the cases $J(a)<0$ and $J(a)>0$. However, we can obtain $\langle T(x+Na)\rangle_{c,\pm}$    directly from a renewal argument similar to the one we used to work out the boundary conditions in Eqs (\ref{BCPer},\ref{eqn:T=1+1}).
Consider a nonnegative integer $N$ and $x$ in $[0,a[$. A particle starting  at coordinate $x+Na$ (in the internal velocity state $\pm$) must go through $Na$ with negative velocity to eventually reach the origin. It does go through $Na$ with probability $E(x,\pm)$ by periodicity of the system. After going through $Na$, it reaches the origin with probability $E(Na,-)$, and 
\begin{equation}\label{renewalPer}
\begin{split}
T( x + Na, + ) =&  T(x,+)E( Na, - ) 
 + T( Na, -) E( x,+),\\
T( x + Na, - ) =&  T(x,-)E( Na, - ) + T( Na, -) E( x,-).\\    
\end{split}
\end{equation}
In terms of the conditional averages, the above  conditions become 
\begin{equation}\label{renewalPerC}
\begin{split}
\langle T  (x + Na) \rangle_{c, +} E( x + Na, + ) =&  \langle T  (x )\rangle_{c,+}E(x,+)E( Na, - )\\ 
& +  \langle T  (Na)\rangle_{c, -} E(Na, - )E( x,+),\\
\langle T  ( x + Na)\rangle_{c,-} E( x + Na, - ) =&  \langle T  ( x)\rangle_{c,-} E(x,-) E( Na, - ) \\
&+ \langle T  ( Na)\rangle_{c,-} E( Na, - )E( x,-).\\   
\end{split}
\end{equation}
On the other hand, we can check from Eq. (\ref{EqNapm}) that
\begin{equation}
    E( x + Na, \pm ) = E(x,\pm) E( Na, - ).
\end{equation}
 Substituting into Eq. (\ref{renewalPerC}) yields
\begin{equation}
\begin{split}
\langle T  ( x + Na)\rangle_{c, + } =&  \langle T  (x)\rangle_{c,+} + \langle T (Na) \rangle_{c,-},\\
\langle T  ( x + Na)\rangle_{c, -} =&  \langle T  (x)\rangle_{c,-} + \langle T (Na) \rangle_{c,-}.\\   
\end{split}
\end{equation}
An immediate induction yields
\begin{equation}
\begin{split}
\langle T  ( x + Na)\rangle_{c, + }=&  \langle T  ( x )\rangle_{c, + } + N \langle T ( a)\rangle_{c, - } = \frac{T(x,+)}{E(x,+)} + N \frac{T(a,-)}{E(a,-)} ,\\
\langle T  ( x + Na)\rangle_{c, - }=&  \langle T  ( x )\rangle_{c, - } + N \langle T ( a)\rangle_{c, - } = \frac{T(x,-)}{E(x,-)} + N \frac{T(a,-)}{E(a,-)},\\
\qquad(x\in[0,a[, N \in \mathbbm{N} ).\\
\end{split}
\end{equation}
 The conditional average is therefore obtained on the entire positive half-line from the solution of Eqs (\ref{sysE},\ref{sysT}) on the period interval $[0,a]$. In the case of almost sure exit, the conditional averages $\langle T(x)\rangle_{c,\pm}$ reduce to $T(x,\pm)$, and Eq. (\ref{reportedTNeg}) follows from evaluating the expressions of $T(a,-)$ and $T(x,\pm)$ given in Eq. (\ref{TExplAllx}), for $x$ in the interval $[0,a[$. In the case of non-zero survival probability, evaluating $E(x,\pm)$ and $E(a,-)$ from Eq. (\ref{EqNapm}) and the quantities $T(x,\pm)$ and $T(a,-)$ from Eq. (\ref{TClosedAllx}) gives back Eq. (\ref{reported}).\\

\section{Example: alternating drifts}\label{piecewiseConst}

A simple periodic modification of the model with constant drift studied  in \cite{de2021survival} is the one-parameter family of periodically-alternating positive and negative drifts with the same amplitude. 
 Let us define the periodic field $F$ on the positive half-line by 
\begin{equation}\label{alternatingDef}
\begin{split}
 F( x + Na) =& \mu \mathbbm{1}( 0\leq x < a(1-\epsilon)) -\mu \mathbbm{1}( a(1-\epsilon)\leq x < a),\;\;\;\;\;\;\;(x\in [0,a[,\;\;\;\;\;N\in\mathbb{N}),
\end{split}    
\end{equation}
 {\textcolor{black}{where $\epsilon$ is in $[0,1]$, and $\mu$ is in $[0,1[$ (which ensures that the force field takes only subcritical values).}} With this definition, the parameter $J(a)$ is positive if $\epsilon$ is in $[0,\frac{1}{2}[$, negative if $\epsilon$ is in $]\frac{1}{2},1[$.
 An example is plotted in Fig. \ref{alternatingFig}. The limits $\epsilon\to 1$ and $\epsilon\to 0$ correspond respectively to the constant drifts $-\mu$ and $\mu$, which are worked out in Appendix \ref{checkConst}.\\

\begin{figure}
\centering
    \includegraphics[width=0.5\linewidth]{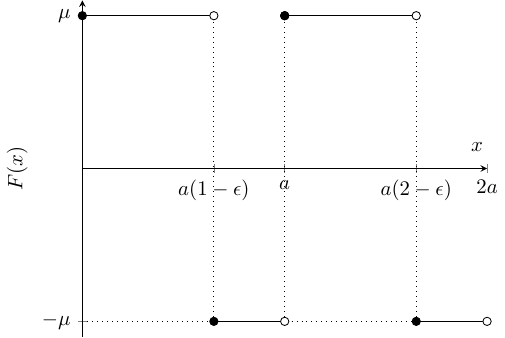}
    \caption{The periodic field $F$ alternating between $\mu$ and $-\mu$ defined in Eq. (\ref{alternatingDef}), with positive $\mu$. The quantity $J(a)$ has the same sign as $\epsilon - \frac{1}{2}$. On this graph, $\epsilon<\frac{1}{2}$ (which corresponds to a non-zero survival probability).}
    \label{alternatingFig}
\end{figure}

\subsection{Exit probability}

The function $J$ defined in Eq. (\ref{defJ}) is evaluated on $[0,a[$ as
\begin{equation}\label{JPiece}
\begin{split}
J(x) =& \alpha x \mathbbm{1}( x\in[0, a( 1-\epsilon)[) +
 \alpha( -x + 2a( 1 - \epsilon) ) \mathbbm{1}( x\in [ a( 1-\epsilon), a]),\;\;\;\;\;\;\quad(x\in [0,a]),\\ 
{\mathrm{with}}&\;\;\;\;\;\alpha:= \frac{2\mu}{1-\mu^2}.
\end{split} 
\end{equation}
In particular, 
\begin{equation}\label{JaVal}
 J( a ) =  \frac{2\mu}{1-\mu^2}a( 1 - 2\epsilon).
\end{equation}
The case of a constant drift corresponds to $\epsilon = 0$, and as expected $J(a)$ is of the sign of $\mu$ if $\epsilon<1/2$.\\

Evaluating the integral in the definition of $\Xi_\pm$ in Eq. (\ref{defXi}) yields
\begin{equation}\label{XiPiece}
\begin{split}
 \Xi_n(x) =& \frac{1}{2n\mu}\left( e^{n\alpha x} - 1\right) \mathbbm{1}( x\in [0, a( 1-\epsilon)[ )\\
&+\frac{1}{2n\mu}\left( 2 e^{n\alpha a(1-\epsilon)} - 1 - e^{2n\alpha a(1-\epsilon)}e^{-n\alpha x}   \right) 
 \mathbbm{1}( x\in[ a( 1-\epsilon), a]),\;\;\;\;\;\;(0\leq x \leq a, \quad n\in \{-,+\}).
 \end{split}
\end{equation}
In particular,
\begin{equation}\label{Xina}
\begin{split}
 \Xi_n(a) =& \frac{1}{2n\mu}\left[  e^{n\alpha a}( 2e^{-n\alpha a \epsilon} -e^{-2n\alpha a \epsilon}) - 1\right].
 \end{split}
\end{equation}

Substituting into Eq. (\ref{EqNapm}), the exit probability for a particle starting at the origin in a positive internal velocity state is obtained as a function of the parameter $\epsilon$:
\begin{equation}\label{exprEAlternating}
\begin{split}
E( 0, +) =&    \left[ \frac{1-\frac{1-e^{-J(a)}}{2\Xi_-(a)}}{1+\frac{1-e^{-J(a)}}{2\Xi_-(a)}}   \right] \mathbbm{1}\left( \epsilon < \frac{1}{2} \right) + \mathbbm{1}\left( \epsilon \geq \frac{1}{2} \right) \\
=&  \frac{1 - \mu\left( \frac{1-e^{-\alpha a(1-2\epsilon)}}{1 - e^{-\alpha a}( 2 e^{\alpha a\epsilon} - e^{2 \alpha a \epsilon}) }
    \right) }{ 1 +  \mu\left(   \frac{1-e^{-\alpha a(1-2\epsilon)}}{1-e^{-\alpha a}( 2 e^{\alpha a\epsilon} - e^{2 \alpha a \epsilon})}   \right)       }\mathbbm{1}\left( \epsilon < \frac{1}{2} \right) + \mathbbm{1}\left( \epsilon \geq \frac{1}{2} \right) .
 \end{split}
\end{equation}
This exit probability $E(0,+)$ is plotted as a function of $\epsilon$ in Fig. \ref{figureE0AlternatingDebug}, together with the results of numerical simulations. One notices that for a fixed value of $\mu$, the limit of small $a$  yields a function of $\epsilon$ only:
\begin{equation}\label{effectiveDrift}
 E(0,+)\underset{a\ll 1}{\sim} \frac{1-\mu( 1 - 2\epsilon)}{1+\mu( 1 + 2\epsilon)} \mathbbm{1}\left( \epsilon<\frac{1}{2}\right) + \mathbbm{1}\left( \epsilon \geq \frac{1}{2}\right).
\end{equation}
 Comparing with Eq. (\ref{EConst}), we recognize the value of the exit probability $E(0,+)$ in an effective constant drift $\mu( 1-2\epsilon)$. Indeed, this effective is of the sign of $1-2\epsilon$ as the parameter $\mu$ is positive. On the other hand, in the limit of large $a$, the exit probability goes to the value of $E(0,+)$ obtained in the case of a constant positive subcritical drift equal to $\mu$:
 \begin{equation}
     E(0,+)\underset{a\gg 1}{\sim} \frac{1-\mu}{1+\mu} \mathbbm{1}\left( \epsilon<\frac{1}{2}\right) + \mathbbm{1}\left( \epsilon \geq \frac{1}{2}\right).
 \end{equation}
 This is consistent with intuition: in the limit of large $a$, the particle is unlikely to explore a region of space where the drift is different from $\mu$.\\

\begin{figure}
\includegraphics[width=1.1\textwidth,keepaspectratio]{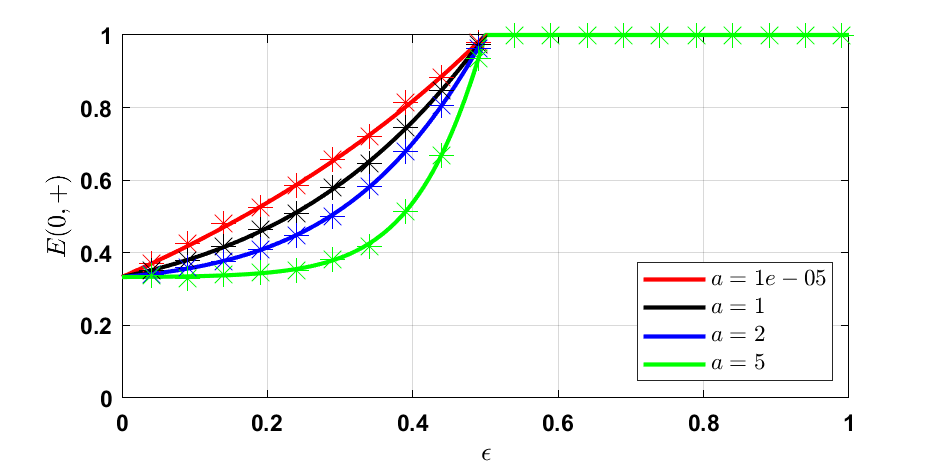}
%\includegraphics[width=0.7\textwidth,height=0.7\textheight,keepaspectratio]{figureE0Alternating.jpg}
%\MyIncludeGraphics{figureE0Alternating.png}
\caption{The exit probability $E(0,+)$ for a particle starting at the origin in an internal positive velocity state, for a drift alternating between $\mu= 1/2$ on $[0,a(1-\epsilon)[$ and $-\mu$ on $[a(1-\epsilon), a[$, as a function of $\epsilon$ (given by Eq. (\ref{exprEAlternating})). For any value of the period $a$, the limit of $E(0,+)$ when $\epsilon$ goes to zero is $(1-\mu)/(1+\mu) = 1/3$. The star symbols are the results of numerical simulations of trajectories of particles starting from the origin with positive velocity. The simulation ended at time $100 t(0)$, where $t(0)$ is obtained in Eq. (\ref{t0Sure}) for $\epsilon>\frac{1}{2}$ and in Eq. (\ref{t0Expr}) for $\epsilon<\frac{1}{2}$. The simulated value of $E(0,+)$ is the share 
 of particles that return to the origin before the end of the simulation.}
\label{figureE0AlternatingDebug}
\end{figure}

 Once the exit probability given an initial position at the origin has been evaluated, the spatial dependence of the exit probability follows from Eq. (\ref{EqNapm}). 
The exit probability $E(y,\pm)$ is shown in Fig. \ref{figureExAlternating} as a function of $y$ for a period $a$ equal to the mean free path of the particle, for $\epsilon = 0.25$ (which corresponds to a non-zero survival probability).
\begin{figure}
\centering
\includegraphics[width=1.1\textwidth,keepaspectratio]{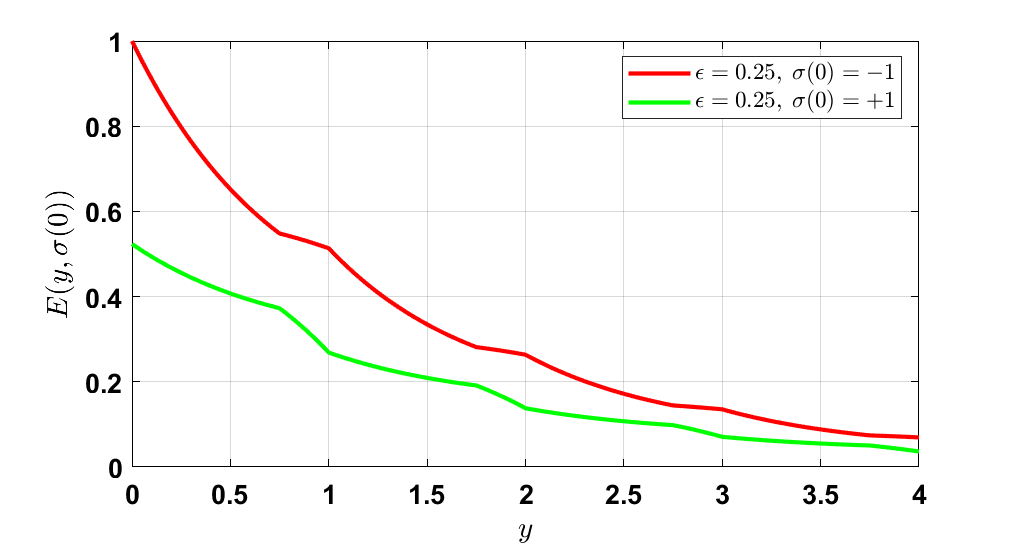}
    \caption{The exit probability as a function of the starting point from the origin. In this plot, $\mu = 1/2$, and $a = 1$.}
    \label{figureExAlternating}
\end{figure}

\subsection{First-passage time at the origin}

\subsubsection{Almost-sure exit with finite mean first-passage at the origin ($J(a)<0$)}

The particle exits the system almost surely if $\epsilon \geq \frac{1}{2}$. If $\epsilon = \frac{1}{2}$, the mean first-return time to the origin $t(0)$ is infinite.\\

Let us assume $\epsilon > \frac{1}{2}$, so that the parameter $J(a)$ is negative and $t(0)$ is given by Eq. (\ref{t0Sure}). The value of $\Xi_+(a)$ is obtained from Eq. (\ref{Xina}):
\begin{equation}\label{T0PlusSure}
\begin{split}
 T(0,+) = 2t( 0 ) =& - \frac{2}{e^{J(a)} - 1}\Xi_+(a)= \frac{2e^{\alpha a( 1 - \epsilon)} - e^{\alpha a( 1 - 2\epsilon)} - 1 }{\mu(1- e^{\alpha a ( 1 - 2\epsilon)})}.
\end{split}
\end{equation}

Once the conditional average of the mean first-return time to the origin has been evaluated, the spatial dependence of the mean first-passage time follows from the general expression obtained in Eq. (\ref{resC2}). The conditional averages $\langle T(y) \rangle_{c,\pm}$ are shown as a function of $y$ in Fig. \ref{figureCxAlternating}, where the plots with $\epsilon = 0.75$ correspond to the case of almost-sure exit.\\

\begin{figure}
    \centering    \includegraphics[width=1.1\textwidth,keepaspectratio]{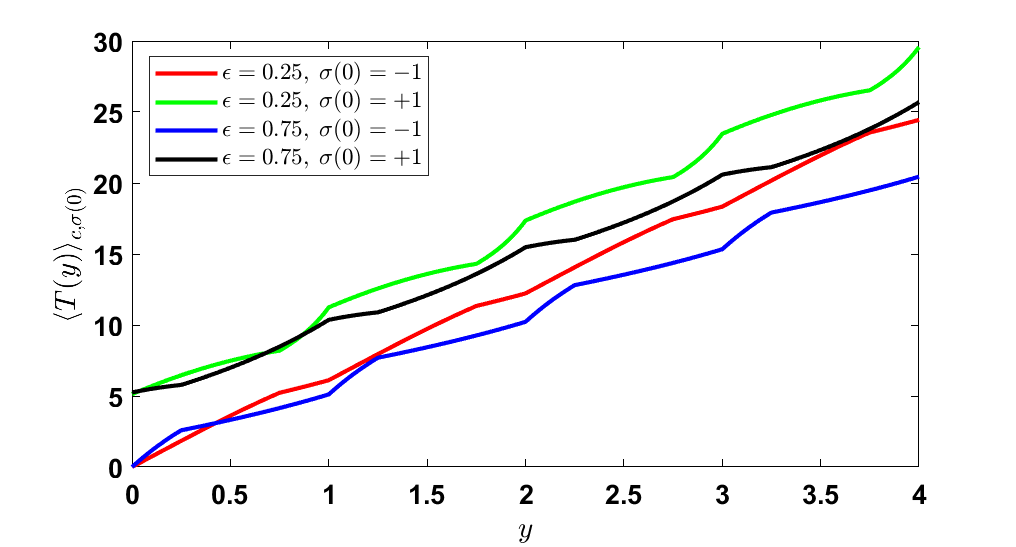}
    \caption{The conditional average of the first-passage time at the origin as a function of the starting point from the origin. In this plot, $\mu = 1/2$, and $a = 1$.}
    \label{figureCxAlternating}
\end{figure}

%\subsection{Limits of the conditional average of the first-return time to the origin}

Let us work out the following limits of the mean first-return time to the origin.\\

$\bullet$ {\bf{Limit of constant negative drift ($\epsilon\to 1$).}} 
 Let us write  $\epsilon = 1 - h$, with small and positive $h$. When $h$ goes to zero, the field $F$ goes to the constant $-\mu$.  From Eq. (\ref{T0PlusSure}),
 the first-return time $T(0,+)$ goes to a finite limit when $h$ goes to zero:
\begin{equation}\label{T0PlusAs}
T(0,+) = \frac{2 e^{-\alpha a h} - e^{-\alpha a( 1 + 2h)} - 1}{ \mu(1 - e^{-\alpha a (1 + 2h)} )} = \frac{2 - e^{\alpha a} - 1}{\mu( 1 - e^{-\alpha a})}+o(1) = \frac{1}{\mu} + o(1).
\end{equation}
 This limit is the mean first-return time to the origin of an RTP with constant negative drift $-\mu$, as reported in Eq. (\ref{resIImu}).\\

$\bullet$ {\bf{Limit of small, negative $J(a)$ ($\epsilon\to \frac{1}{2}^+$).}} Let us write
\begin{equation}
\epsilon:= \frac{1}{2} + h,\;\;\;\;\;\; h>0.
\end{equation}
Substitution in Eq. (\ref{T0PlusSure}) and Taylor expansion yield
{\textcolor{black}{
\begin{equation}
T(0,+) = \frac{2 e^{\alpha a\left(\frac{1}{2} - h  \right)} - e^{-2\alpha a h} - 1}{\mu( 1- e^{-2\alpha a h})}
\underset{h\to 0^+}{\sim} \frac{2(e^{\frac{\alpha a}{2}} - 1)}{2\mu\alpha a h},\\
\end{equation}
hence
\begin{equation}
T(0,+)  \underset{\epsilon\to \frac{1}{2}^+}{\sim}
\frac{(1-\mu^2)(e^{\frac{a\mu}{1-\mu^2}} - 1)}{\mu^2 a(2\epsilon - 1 )}.
\end{equation}
 The quantity $T(0,+)$ therefore becomes large in this limit. Moreover, the particle exits the system almost surely in this limit. The conditional average $\langle T(0)\rangle_{c,+}$ therefore equals $T(0,+)$, and becomes large in this limit.}} This is consistent with the vertical asymptote observed in the plot of $\langle  T(0)\rangle_{c,+}$ as a function of $\epsilon$ in Fig. \ref{figureCondTimeAlternatingDebug}.\\

$\bullet$ {\bf{Limit of short period ($a\ll 1$, at fixed $\epsilon>\frac{1}{2}$).}} Taylor expansion in Eq. (\ref{T0PlusAs}) yields
\begin{equation}\label{mueffas}
 T(0,+) \underset{a\ll 1}{\sim} \frac{2\alpha a( 1-2\epsilon) -\alpha a( 1-2\epsilon)}{-\mu \alpha a( 1-2\epsilon)} = \frac{1}{|\mu(1-2\epsilon)|}.
\end{equation}
 This is the value of the mean first-return time to the origin $T(0,+)$ for an RTP with constant negative drift $\mu(1-2\epsilon)$. As the particle exits the system almost surely, we obtain
\begin{equation}\label{C0PlusAs}
 \langle  T(0)\rangle_{c,+} \underset{a\ll 1, \epsilon > \frac{1}{2}}{\sim} \frac{1}{|\mu(1-2\epsilon)|}.
\end{equation}

\subsubsection{Non-zero survival probability ($J(a)>0$)}  

 Now assume $J(a)>0$, so that the survival probability is non-zero. The quantity $t(0)$ is given by Eq. (\ref{t0Expr}). It involves the quantity $e(0)$ as in Eq. (\ref{e0Sol}).  The expressions of $J(a)$ and $\Xi_-(a)$ were obtained in the case of alternating drifts in Eqs (\ref{JaVal},\ref{Xina}).
  The functions $\varphi$ and $\psi$ were defined in Eq. (\ref{phiPsiDef}) in terms of the field $F$ and the functions denoted by $e$ and $E$ (defined in Eqs (\ref{eExpr},\ref{EExpr})). Assembling these elements in the particular case of alternating drifts, 
we are led to the following expression in terms of the two parameters $\mu$ and $\epsilon$:
\begin{equation}\label{t0Comp}
\begin{split}
t(0)=&\frac{  e(0) \psi(a) -[ e(0) + 1] \varphi(a)   }{ 1 - e^{-J(a)}},\\
{\mathrm{with}}\quad e(0) =& -  \frac{\mu( 1-e^{-\alpha a(1-2\epsilon)})}{ \mu( 1-e^{-\alpha a(1-2\epsilon)}) + 1 - e^{-\alpha a} (2e^{ \alpha a \epsilon} -e^{2\alpha a\epsilon})},\\
 \varphi( a ) =& \frac{e^{-J(a)}}{1-\mu^2} \int_0^{a(1-\epsilon)} dy \left[  e(0)\mu -\left(E(0) +
\frac{e(0)}{\mu}\left( 1-e^{-\alpha y}  \right)\right)e^{\alpha y}
\right] \\     
 &\frac{e^{-J(a)}}{1-\mu^2} \int_{a(1-\epsilon)}^a dy
\left[   e(0)\mu +  \left(E(0) +
 \frac{e(0)}{\mu}   \left( -2 e^{-\alpha a(1-\epsilon)} + 1 + e^{-2\alpha a(1-\epsilon)}e^{+\alpha y}  \right)\right) e^{  \alpha( -y + 2a( 1 - \epsilon) )  }
\right],\\
\psi( a ) =& \frac{2}{1-\mu^2}\int_0^x \varphi(y) dy +
 \frac{\mu}{1-\mu^2}\left[
 \int_0^{a(1-\epsilon)} dy (E(y) - e(y) )
 - \int_{a(1-\epsilon)}^a dy  ( E(y) + e(y) )  
 \right],\\
e( x ) =&  e( 0 ) e^{-J(x)},\\
E(x) =& E( 0 ) + \frac{e(0)}{\mu}\left( 1-e^{-\alpha x} \right) \mathbbm{1}(x\in[0, a( 1-\epsilon)[)\\
&+\frac{e(0)}{\mu}\left( -2 e^{-\alpha a(1-\epsilon)} + 1 + e^{-2\alpha a(1-\epsilon)}e^{+\alpha x}   \right) \mathbbm{1}(x\in[ a( 1-\epsilon), a]),\\
E(0) =& e(0) + 1,\\
 J(x ) =& \alpha x \mathbbm{1}( x\in [0, a( 1-\epsilon)[)  +
 \alpha( -x + 2a( 1 - \epsilon) ) \mathbbm{1}( x\in [ a( 1-\epsilon), a[),\\
\varphi( x ) =& \frac{e^{-J(x)}}{1-\mu^2}\int_0^x dy
\left[
 e(y) \left( \mu \mathbbm{1}( y\in [ 0, a( 1-\epsilon)[ -\mu \mathbbm{1}( y\in [ a( 1-\epsilon), a[)  \right)  - E(y) \right]e^{+J(y)}, \quad(0\leq x \leq a).
\end{split}
\end{equation}
 The integrals in Eq. (\ref{t0Comp}) are amenable to explicit evaluation, but the explicit form as a function of $\epsilon$ is not particularly illuminating. {\textcolor{black}{The value of the conditional average of the exit time of a particle starting at the origin in a positive internal velocity state is 
 \begin{equation}\label{predForFig}
 \langle T(0)\rangle_{c,+} = \frac{T(0,+)}{E(0,+)} = \frac{2 t(0)}{E(0,+)}. 
 \end{equation}
  It has been evaluated (for $\epsilon$ in the interval $]0,1/2[$) based on Eq. (\ref{t0Comp}) using symbolic computation and is shown in Fig. \ref{figureCondTimeAlternatingDebug}}}, together with the results of numerical simulations. Once the conditional average of the first-return time to the origin is known, the spatial dependence of the conditional average follows from Eq. (\ref{resC2}). The plots of the conditional average $\langle T(y) \rangle_{c,\pm}$ as a function of $y$ on Fig. \ref{figureCxAlternating}, for $\epsilon = 0.25$ and $a=1$, illustrate the case of non-zero survival probability.\\

 Let us study the limits of the first-return time that we worked out in the case of almost-sure exit.\\

$\bullet$ {\bf{Limit of constant positive drift ($\epsilon\to 0$).}} The term of order zero in $\epsilon$ in the expression of $t(0)$ follows from 
\begin{equation}\label{t0Lim}
\begin{split}
e( 0 ) \underset{\epsilon \to 0}{\sim}&  - \frac{\mu}{1+\mu}, \\
J(x)  \underset{\epsilon \to 0}{\sim}&  \alpha x,\quad( 0\leq x \leq a),\\
t(0) \underset{\epsilon \to 0}{\sim}& \frac{1}{1-e^{-\alpha a}}\left[ 
-\frac{\mu}{1+\mu} \underset{\epsilon \to 0 }{\lim} \psi(a) +\frac{1}{1+\mu} \underset{\epsilon \to 0 }{\lim} \varphi(a)
\right].
\end{split}    
\end{equation}
The relevant limits of the integral terms are evaluated as
\begin{equation}
\begin{split}
E(x) \underset{\epsilon \to 0}{\sim}& \frac{e^{-\alpha x}}{1+\mu},\\
\underset{\epsilon \to 0 }{\lim} \varphi(x) =&\frac{e^{-\alpha x}}{1-\mu^2}
 \int_0^x \left[ -\frac{\mu^2}{1+\mu} -\frac{1}{1+\mu}  \right] dy= - \frac{1+\mu^2}{(1+\mu)(1-\mu^2)}x e^{-\alpha x},\quad( 0\leq x \leq a).\\
\end{split}
\end{equation}

\begin{equation}
\begin{split}
\underset{\epsilon \to 0 }{\lim} ~\psi(a) =& - \frac{2(1+\mu^2)}{(1+\mu)(1-\mu^2)^2} \int_0^a y e^{-\alpha y }dy +
\frac{2\mu}{(1+\mu)(1-\mu^2)} \int_0^a e^{-\alpha y} dy\\
=& \frac{1+\mu^2}{(1+\mu)( 1- \mu^2)}[ 
   y e^{-\alpha y}]_0^a - \frac{1+\mu^2}{\mu(1+\mu)( 1- \mu^2)} \int_0^a e^{-\alpha y} dy+
\frac{2\mu}{(1+\mu)(1-\mu^2)} \int_0^a e^{-\alpha y} dy\\
=& \frac{1+\mu^2}{\mu(1+\mu)( 1- \mu^2)} 
   a e^{-\alpha a} + \left( -\frac{1+\mu^2}{2\mu^2(1+\mu)} +
\frac{1}{(1+\mu)}\right)( 1 - e^{-\alpha a}).
\end{split}
\end{equation}
  Substituting into Eq. (\ref{t0Lim}) yields
\begin{equation}
\begin{split}
\underset{\epsilon \to 0 }{\lim}~t(0) = \frac{1-\mu}{2\mu(1+\mu)},
\end{split}    
\end{equation}
 which is the known value of $\frac{1}{2}T(0,+) = \frac{1}{2}\langle T(0)\rangle_{c,+} E(0,+)$ in the case of a constant positive drift (see Eqs (\ref{EConst},\ref{resIImu})).\\

$\bullet$ {\bf{Limit of small, positive $J(a)$ ($\epsilon\to \frac{1}{2}^-$).}} Let us write $\epsilon$ as
\begin{equation}
\epsilon:= \frac{1}{2} + h,\;\;\;\;\;\; h<0.
\end{equation}
Taylor expansion around $h=0$ yields
\begin{equation}\label{Jah}
\begin{split}
J(a)=& \frac{2\mu}{1-\mu^2}\int_0^{\frac{a}{2}(1-2h)}dy  - \frac{2\mu}{1-\mu^2}\int_{\frac{a}{2}(1-2h)}^a dy + o(h)\\
=& -\frac{2ah}{1-\mu^2} + o(h).
\end{split}
\end{equation}
 The quantity $J(a)$ is small and positive in this limit. The coefficient $e(0)$ is close to $0$ (its value in the cases where the particle leaves the system almost surely).
\begin{equation}
e(0) = o(1),\;\;\;\;\;E(0) = 1 + o(1 ),\;\;\;\;\;E(0,+) = 1 + o(1).
\end{equation}
 On the other hand, the quantity $\varphi(a)$ expressed in Eq. (\ref{t0Comp}) goes to a negative limit:
\begin{equation}
\varphi( a ) \underset{h\to 0}{\sim} -\frac{1}{\mu}( 1 - e^{\frac{-\alpha a}{2}}).
\end{equation}
 Hence, the equivalent of $t(0)$ (expressed in Eq. (\ref{t0Expr})):
\begin{equation}
 t(0)\underset{\epsilon\to \frac{1}{2}^-}{\sim} \frac{1-e^{-\alpha \frac{a}{2}}}{\mu J(a)} \sim\frac{(1-\mu^2)(1-e^{-\alpha \frac{a}{2}})}{a ( 1 - 2\epsilon)}.
\end{equation}
 The conditional average $\langle T(0) \rangle_{c,+}$ therefore becomes large when $\epsilon$ becomes close to $1/2$, which is reflected by the vertical asymptote in Fig. \ref{figureCondTimeAlternatingDebug}.\\

$\bullet$ {\bf{Limit of short period ($a\ll 1$, at fixed $\epsilon<\frac{1}{2}$).}} In this limit, we have noticed in Eq. (\ref{effectiveDrift}) that the exit probability is equal to the one of an RTP with the constant positive drift $\mu(1-2\epsilon)$, hence
\begin{equation}\label{equive}
e(0) \underset{a\ll 1}{\sim} -\frac{\mu(1-2\epsilon)}{1-\mu( 1 - 2\epsilon)},\quad\quad E(0) =e(0)+1\underset{a\ll 1}{\sim} 
\frac{1}{1-\mu( 1 - 2\epsilon)}.
\end{equation}
 Moreover, Taylor expansion at first order in $a$ yields
\begin{equation}\label{Jexpa}
 1-e^{-J(a)} = \alpha a (1-2\epsilon) + o(a).    
\end{equation}
The denominator of the expression of $t(0)$ in Eq. (\ref{t0Expr}) is therefore of order $a$. As the quantities $\varphi(a)$ and $\psi(a)$ are integrals (of continuous functions) on the short interval $[0,a]$, they are small in the limit of small $a$. 
 To obtain the limit of $t(0)$, we have to calculate the term of order $a$ in $\varphi(a)$ and $\psi(a)$. As $J(x)$ is small in this limit,\\
\begin{equation}
\begin{split}
 \varphi(x) =& \frac{e^{-J(x)}}{1-\mu^2}\int_0^x dy\left[
e(0) F(y) - E(0) e^{J(y)} + e(0) \Xi_-(y)e^{J(y)}
\right]\\
\underset{a\ll 1}{\sim}& \frac{1}{1-\mu^2} \int_0^x dy [ e(0) F(y) - e(0) - 1],\\
\end{split}
\end{equation}
hence
\begin{equation}
\varphi(a) \underset{a\ll 1}{\sim} \frac{e(0) [\mu(1-2\epsilon)-1] - 1}{1-\mu^2}a .
\end{equation}
Moreover,
\begin{equation}
\begin{split}
 \psi(a) =& 2 \int_0^a \frac{\varphi(y)}{1-F(y)^2} dy +  \int_0^a\frac{E(y) F(y) -e(y)}{1-F(y)^2}dy\\
 \underset{a\ll 1}{\sim}& \frac{1}{1-\mu^2}\int_0^a [E(0) F(y) -e(0)]dy\\
 \underset{a\ll 1}{\sim}& \frac{(e(0)+1) \mu(1-2\epsilon) - e(0)}{1-\mu^2}\int_0^a [E(0) F(y) -e(0)]dy.\\
\end{split}
\end{equation}
Combining with Eqs (\ref{equive},\ref{Jexpa}),
\begin{equation}
\begin{split}
 t(0) =& \frac{e(0)\psi(a) - (1+e(0))\varphi(a) }{1-e^{-J(a)}} \underset{a\ll 1}{\sim}  \frac{2e(0) + 1}{ \alpha( 1-\mu^2)(1-2\epsilon)}\\
 \underset{a\ll 1}{\sim}& \frac{1- \mu(1-2\epsilon)}{ 2\mu( 1 - 2\epsilon)(1+\mu(1-2\epsilon))}.
\end{split} 
\end{equation}
 We recognize the value of $t(0)$ for the RTP in a constant positive drift $\mu(1-2\epsilon)$, the effective drift  we identified in the limit of short period in the case of almost-sure exit (in Eq. (\ref{mueffas})). Hence
\begin{equation}
 \langle T(0)\rangle_{c,+} = \frac{T(0,+)}{E(0,+)} \underset{a\ll 1, \epsilon<\frac{1}{2}}{\sim} \frac{1}{\mu(1-2\epsilon)}.
\end{equation}
 Comparing with Eq. (\ref{C0PlusAs}), we observe that in both regimes (almost-sure exit and non-zero survival probability), the mean first-return time to the origin conditional on the exit coincides in the limit of a short period with the value obtained in the constant effective drift $\mueff$, with
\begin{equation}\label{mueffAll}
 \mueff = \mu( 1-2\epsilon).
\end{equation}

\begin{figure}
\includegraphics[width=1.1\textwidth,keepaspectratio]{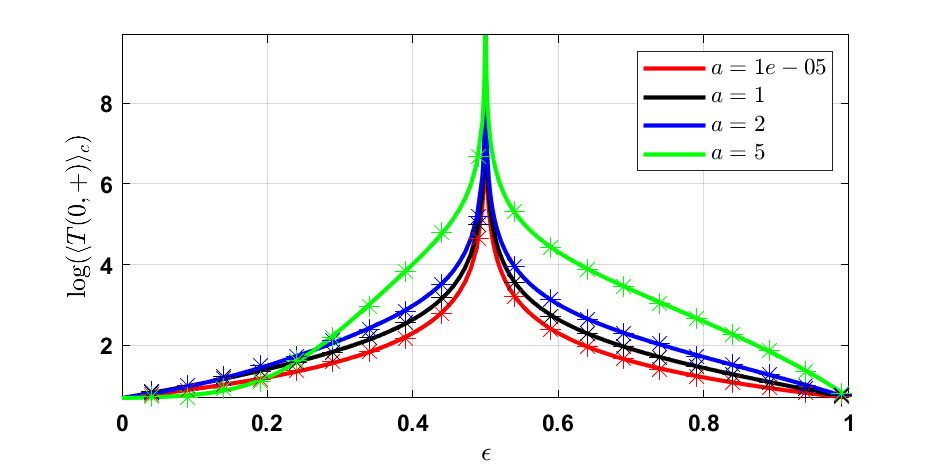}
\caption{The conditional average $\langle T(0)\rangle_{c,+}$ over trajectories that reach the origin,  for a particle starting at the origin in an internal positive velocity state, for a drift alternating between $\mu= 1/2$ on $[0,a(1-\epsilon)[$ and $-\mu$ on $[a(1-\epsilon), a[$, as a function of $\epsilon$. {\textcolor{black}{The solid line represents the theoretical predictions of Eq. (\ref{T0PlusSure}) for almost-sure exit ($J(a)<0$ and $\epsilon>1/2$), and of Eq. (\ref{predForFig}) for non-zero survival probability ($J(a)>0$ and $\epsilon<1/2$).  
The star symbols show results of numerical simulations (with the same trajectories as in Fig.  \ref{figureE0AlternatingDebug}).}}}
\label{figureCondTimeAlternatingDebug}
\end{figure}

\subsection{Generalization: first-return time to the origin in the limit of a short period}

  We have noticed that the conditional average of the 
 first-return time to the origin $\langle T(0)\rangle_{c,+}$ goes to a 
 well-defined  limit when $F$ alternates between opposite values on a short period interval. Let us generalize this situation by considering a periodic force field $F$ with a short period interval, meaning $a$ is short compared to the mean free path $v \gamma^{-1}$ of the RTP. In our units, this is the limit  $a\ll 1$. Assume $J(a)\neq 0$. In the limit of a short period interval, the integrals $\Xi_+(a)$ and $J(a)$ are both of order $a$.\\
 
 $\bullet$ {\bf{Almost-sure exit $(J(a)<0)$.}} From Eq. (\ref{t0Sure}), the mean first-return time to the origin is
 \begin{equation}
T(0,+) = -2 \frac{1}{e^{J(a)} - 1}\Xi_+(a)\underset{a \ll 1}{\sim} -\frac{2\Xi_+(a)}{J(a)}.
 \end{equation}
 As the exit probability $E(0,+)$ is equal to $1$, the conditional average of the first-return time to the origin becomes, in the limit of a short period interval, 
 \begin{equation}\label{T0CondAS}
\langle T(0)\rangle_{c,+} \underset{a\ll 1, J(a)<0}{\sim} \left|\frac{2\Xi_+(a)}{J(a)} \right|. 
 \end{equation}

 $\bullet$ {\bf{Non-zero survival probability $(J(a)>0)$.}}  In  the limit of a short period interval, we obtain the following limit of $E(0,+)$:
 \begin{equation}\label{E0PlusEq}
   E(0,+ ) = \frac{2\Xi_-(a) - ( 1 - e^{-J(a)})}{2\Xi_-(a) +  1 - e^{-J(a)}}
   \underset{a\ll 1}{\sim} \frac{1-\frac{J(a)}{2\Xi_-(a)}}{1+\frac{J(a)}{2\Xi_-(a)}}.  
 \end{equation}
 This value is identical to the exit probability of a particle starting at the origin with positive velocity, with a constant drift $\mueff$, with
 \begin{equation}\label{mueffDef}
\mueff := \underset{a\to 0}{\lim}\frac{J(a)}{2\Xi_-(a)}.
 \end{equation}
  This effective drift is evaluated as $\mu(1-2\epsilon)$ if $F$ consists of a rapidly alternating drift.\\

 On the other hand,
\begin{equation}
e(0) = -\frac{1-e^{-J(a)}}{1 - e^{-J(a)} + 2\Xi_-(a)} \underset{a \ll 1}{\sim}
-\frac{J(a)}{J(a) + 2\Xi_-(a)} = -\frac{\frac{J(a)}{2\Xi_-(a)}}{\frac{J(a)}{2\Xi_-(a)} + 1}.
\end{equation}
Substituting this equivalent of $e(0)$ into Eq. (\ref{phiPsiDef}) yields equivalents the quantity $\varphi(a)$:
\begin{equation}
\begin{split}
\varphi( a ) \underset{a \ll 1}{\sim}& e^{-J(a)} \int_0^a dy \left[ -\frac{\mueff}{1 + \mueff}\frac{F(y)}{1 - F(y)^2} -  \frac{1}{1 + \mueff} \frac{e^{J(y)}}{1 - F(y)^2} \right]\\
\underset{a \ll 1}{\sim}& -\frac{\mueff}{1 + \mueff}\frac{J(a)}{2} - \frac{1}{1 + \mueff} \Xi_+(a).
\end{split}
\end{equation}
As $\varphi(a)$ is of order $a$, and the function $\varphi$ is continuous, the term  first term in the expression of $\psi(a)$ is negligible compared to $a$, and
\begin{equation}
\begin{split}
\psi( a ) \underset{a \ll 1}{\sim}&
\int_0^x\frac{E(y) F(y) -e(y)}{1-F(y)^2}dy
\underset{a \ll 1}{\sim}
\frac{1}{1 + \mueff}\frac{J(a)}{2} + \frac{\mueff}{1 + \mueff} \Xi_-(a).
\end{split}
\end{equation}

Substituting the above equivalents into Eq. (\ref{t0Expr}) yields:
\begin{equation}
\begin{split}
T(0,+) =& 2t(0) = -\frac{\psi( a ) + \frac{2\Xi_-(a)}{1- e^{-J(a)}} \varphi(a)}{1-e^{-J(a)} +2\Xi_-(a)}\\
\underset{a \ll 1}{\sim}& \frac{-2}{J(a) + 2\Xi_-(a)} \left[  \psi(a ) +\frac{1}{\mueff} \varphi(a) \right] \\
\underset{a \ll 1}{\sim}& \frac{-2}{J(a) + 2\Xi_-(a)} \left[ \left( \frac{1}{1+\mueff} - \frac{1}{\mueff}\frac{\mueff}{1+\mueff}  \right) \frac{J(a)}{2} +   
 \left(\frac{\mueff}{ 1 + \mueff} -\frac{1}{\mueff}\frac{1}{1+\mueff} \right) \Xi_-(a) \right],\\
\end{split}
\end{equation}
where in the last step we used we used
\begin{equation}
    \Xi_-(a) \underset{a \ll 1}{\sim} \Xi_+(a)
     \underset{a \ll 1}{\sim} \int_0^a\frac{dy}{1 - F(y)^2}.
\end{equation}
Hence the expression of $T(0,+)$ coincides with its analogue for a constant drift $\mueff$: 
\begin{equation}
    T(0,+)\underset{a \ll 1}{\sim} 
    \frac{1-\mueff^2}{(1+\mueff)^2}\frac{2\Xi_-(a)}{J(a)} 
    \underset{a \ll 1}{\sim} \frac{1-\mueff}{1+\mueff}\frac{1}{\mueff}.
\end{equation}
 Dividing by the exit probability $E(0,+)$ and using Eq. (\ref{E0PlusEq}) yields
 \begin{equation}\label{eq2eff}
    \langle T(0)\rangle_{c,+}\underset{a \ll 1,J(a)>0}{\sim}\frac{1}{\mueff}.
\end{equation}

 Together with Eq. (\ref{T0CondAS}), the above limit implies that the first-return time averaged over trajectories that exit the system satisfies 
  \begin{equation}
 \langle T(0)\rangle_{c,+} \underset{a \ll 1,J(a)\neq 0}{\sim} \frac{1}{| \mueff |}.
  \end{equation}
This is the conditional average of the first-return time to the origin for a constant drift $\mueff$.\\
 
  We can recover the expression of the effective drift $\mueff$ using  a self-consistent argument (in the limit where the distance $a$ is short compared to the mean free path of the particle). In this limit, the length of a typical run is large compared to $a$. When the internal velocity state is $\pm 1$, this length can be approximated by $N_\pm a$ (where $N_\pm$ is a large integer). Let us look for an effective drift $\mu'_{\mathrm{eff}}$ such that the length of a typical run (with internal velocity state $\pm 1$) is $( \mu'_{\mathrm{eff}}\pm 1)t_\pm$ (where $t_\pm$ is the duration of a typical run, which depends on the internal velocity state of the particle). Integrating the equations of motion (Eqs (\ref{eqn:t+},\ref{eqn:t-})), we can approximate the duration $t_\pm$ in each of the two internal velocity states by
\begin{equation}
  \begin{split}
 t_\pm \simeq N_\pm a (\mu'_{\mathrm{eff}}  \pm 1)^{-1} \simeq \int_0^{N_{\pm} a} \frac{dx}{F(x) \pm 1}.  
  \end{split}
\end{equation}
 Dividing the above expression by $N_+$ (resp. $N_-$) and using periodicity yields:
\begin{equation}
 a(\mu'_{\mathrm{eff}} \pm 1)^{-1} = \int_0^{ a} \frac{dx}{F(x) \pm 1}.  
\end{equation}
The sum and difference of the above two equations yield
\begin{equation}
 \begin{split}
 \frac{2a \mu'_{\mathrm{eff}}}{1 - {\mu'_{\mathrm{eff}}}^2}=& \int_0^{a} \frac{2F( x ) dx}{1 - F(x)^2},\\  
 \frac{2a }{ 1- {\mu'_{\mathrm{eff}}}^2}  =&  \int_0^{a} \frac{2 dx}{1 - F(x)^2}.  
 \end{split}
\end{equation}
 {\textcolor{black}{The r.h.s. of the  first equation is the quantity $J(a)$. Moreover,
 the effective drift $\mu'_{\mathrm{eff}}$ follows as
 \begin{equation}
 \mu'_{\mathrm{eff}} = \frac{  \int_0^{a} \frac{2F( x ) dx}{1 - F(x)^2}}{\int_0^a \frac{2 dx}{1 - F(x)^2}},   
 \end{equation}
 which has the same sign as the quantity $J(a)$.}} Evaluating the integrals in the above equations in the case of an alternating force field yields the expression of the effective drift $\mueff$ obtained from direct calculations (in the limit of a short period interval $a$) in Eq. (\ref{mueffAll}). More generally,
\begin{equation}
\frac{  \int_0^{a} \frac{2F( x ) dx}{1 - F(x)^2}}{\int_0^a \frac{2 dx}{1 - F(x)^2}}\underset{a \ll 1}{\sim} \frac{J(a)}{2\Xi_-(a)}.
\end{equation}
 As we are working is the limit where $a$ is short, 
 the effective drift $\mu'_{\mathrm{eff}}$ is identical to the quantity $\mueff$ introduced in Eq. (\ref{mueffDef}).\\

{\textcolor{black}{The quantity $J(a)$, which emerged from the integration of the evolution equations of the exit probability, and from the above self-consistent argument is the case of a short period, may be interpreted in terms of the active external potential $W$ defined in \cite{le2020velocity}. In our system of units\footnote{For the original expression of the active external potential in terms of the parameters $v$ and $\gamma$ and an explicit change of variables, see Appendix \ref{appUnits}.}, $W$ is expressed in terms of an arbitrary position $x_0$ as 
\begin{equation}\label{WExpr}
 W( x ) := -2 \int_{x_0}^x dy \frac{F(y)}{1 - F(y)^2},
\end{equation}
so that
\begin{equation}
J(a) = W( 0 ) - W( a ).
\end{equation}
}}

{\textcolor{black}{
 In \cite{le2020velocity}, an RTP is studied on the entire real line in the presence of a periodic force. If this force takes only subcritical values and the particle can access every position on the real line, there is no stationary solution for the probability of presence  $P(x,t)$ of the particle at position $x$ and time $t$, but there is one 
 for the periodised version $\tilde{P}(x,t):=\sum_{n\in\mathbb{Z}} P(x+na,t)$, which gives rise to a stationary current. This stationary current was shown (in Section A of Supplementary Material  
 of \cite{le2020velocity}) to have the same sign as $W(0)-W(a)$. In our case there is no stationary solution either, but we recognize $J(a)$ as the quantity  governing the effective velocity of an RTP in a periodic force field. If $J(a)>0$ the active external potential drives the particle away from the origin, and if $J(a)<0$ it drives it towards the origin.}}

\section{Discussion and outlook}

We have obtained a closed-form expression for the exit probability and mean first-passage time at the origin of a run-and-tumble particle in any spatially-periodic force field whose values allow the particle to access any position on the positive half-line. The particle exits the system in finite time almost surely if and only if a certain integral of a function of the force field (denoted by $J(a)$ and defined in Eq. (\ref{defJ})) {\textcolor{black}{is negative or zero.}} This condition has been established by a Markov-chain argument. Using periodicity to fix integration constants, we have calculated the average of the first-passage time at the origin, over the trajectories that exit the system in finite time. The results generalize those obtained in the presence of a constant subcritical drift \cite{de2021survival}.  In particular, the mean exit time of a particle starting at the origin with a positive velocity becomes large at the separation between the two regimes of almost-sure exit and non-zero survival probability, just as in the case of a constant drift. We have evaluated the exit probability and conditional average of the first-passage time to the origin explicitly over trajectories that exit the system.\\

The dependence on the initial distance to the origin (at integer multiples of the spatial period) is an affine function of the initial position. Direct calculations become more tedious in the case of non-zero survival probability, but a renewal argument based on periodicity yields the spatial dependence of the mean first-passage time, once the differential equations have been integrated on one period interval.
We have noticed that the exit probability and the first-return time $T(0,+)$ takes the same form as if $F$ consisted of an effective constant drift.\\

 The model can be generalized in several ways. As noted in \cite{gueneau2024run}, in the case of an almost-sure exit, the higher moments of the exit time can in principle be calculated iteratively, based on the Taylor expansion of the Laplace transform of the flow of particles through the origin. To make the model more realistic, one could take into account the nonzero duration of tumbles, and model the tumble time as an exponential random variable (in colonies of bacteria, tumble times and run times have the same order of magnitude \cite{berg2004coli}). Such a model has been solved in dimension $d$ for the survival probability of a run-and-tumble particle with an absorbing hyperplane and zero force field, if the particle starts on the hyperplane \cite{mori2020universal}. Quite remarkably, the result does not depend on the dimension. The model of an RTP in higher dimension with an absorbing hyperplane and a constant drift has been solved in the presence of a constant drift in \cite{de2021survival}. Technically, the evolution equation of the system becomes a coupled system of  backward Fokker--Planck  equations (one for every value of the projection of the internal velocity on the normal direction to the absorbing hyperplane).
 More recent developments have included the effect of inertia in the model of the run-and-tumble particle in the background of a harmonic potential \cite{dutta2024harmonically}. More generally, inertia induces an additional time scale,  and distinct dynamical regimes have been identified depending on how this time scale compares to the mean flipping time $\gamma^{-1}$ and to the current time \cite{dutta2024inertial}. It would be interesting to estimate corrections to the present result in the imit where the inertial time scale is short compared to the mean flipping time.

% Systems with two particles annihilating mutually (or more), Le Doussal 2018.  

%Extensions: higher-order moments, higher dimension.

\begin{appendices}

\section{Systems of units}\label{appUnits}
{\textcolor{black}{
 Let us demonstrate on an example how to relate the expressions obtained in this paper (where $v$ and $\gamma$ are set to 1) to expressions containing the parameters $v$ and $\gamma$.}}\\ 
 
 {\textcolor{black}{The original definition (in \cite{le2020velocity}) of the active external potential at position $X$ (call it $W_{\mathrm{ori}}(X)$)  contains the parameters $v$ and $\gamma$. The quantity is given in terms of an arbitrary reference position $X_0$ (serving as a gauge choice), for a field $f$ depending on time, as
\begin{equation}
 W_{\mathrm{ori}}( X ) := -2 \gamma\int_{X_0}^X dY \frac{f(Y)}{v^2 - f(Y)^2}.
\end{equation}
 Using $v$ as the unit of velocity and $\gamma$ as the unit of frequency, the unit of length becomes $v\gamma^{-1}$, and 
 the positions $X_0$ and $X$ become the  coordinates $x$ and $x_0$. With the following expressions of the dimensionless quantities $x,x_0,y,F$:
 \begin{equation}
 X =  v\gamma^{-1} x,\qquad X_0 =  v\gamma^{-1} x_0,\qquad Y =  v\gamma^{-1} y, \qquad f(Y) = v F( y ),
 \end{equation}
we can re-express the value of the active external potential by changing the integration variable to $y$:
\begin{equation}
 W_{\mathrm{ori}}( X ) =  -2\gamma \int_{x_0}^x dy v\gamma^{-1} \frac{v F(y)}{v^2-v^2F(y)^2}= -2\int_{x_0}^x dy \frac{ F(y)}{1-F(y)^2}.
 \end{equation}
The value $W(x):=W_{\mathrm{ori}}(X)$ is the one reported in Eq. (\ref{WExpr}). 
}}

\section{Evaluation of the functions $J$ and $\Xi_\pm$ using periodicity}\label{appInt}
 Let $F$ be a periodic function of period $a$. The functions
$J$, $\Xi_-$ and $\Xi_+$ defined in Eqs (\ref{defJ},\ref{defXi}) can be expressed on the entire positive half-line in terms of their values of $F$ in the interval $[0,a]$. Indeed, for $x$ in $[0,a[$ and a positive integer $N$, the periodicity of $F$ yields:   
\begin{equation}\label{JPer}
\begin{split}
 J( x + Na ) =& \sum_{k=0}^{N-1} \int_{ka}^{(k+1)a} dy \frac{2F(y)}{1-F(y)^2} + \int_{Na}^x dy \frac{2F(y)}{1-F(y)^2}\\
=& N J(a ) + J(x), 
\end{split}
\end{equation} 

Assuming $J(a) \neq 0$, the same reasoning yields
\begin{equation}\label{XiPer}
\begin{split}
\Xi_n( x + Na ) =& \left(\sum_{k=0}^{N-1} \int_{ka}^{(k+1)a} \frac{e^{nJ(x)}}{1-F(x)^2}dx\right) +  \int_{Na}^{Na + x} \frac{e^{nJ(v)}}{1-F(v)^2}dv\\
=& \left( \sum_{k=0}^{N-1} e^{nkJ(a)} \Xi_n( a) \right) +  e^{nNJ(a)}\Xi_n(x)\\
=&\frac{1- e^{nNJ(a)}}{1 - e^{nJ(a)}}\Xi_n( a) + e^{nNJ(a)}\Xi_n(x),\\
&\qquad x\in [0,a],\qquad N\in\mathbb{N},\qquad n\in \{-,+\}.
\end{split}
\end{equation}

An integral of the following form appears in the expression of the mean first-passage time in Eq. (\ref{Txpm}):
\begin{equation}
\begin{split}
  \int_0^{x+Na} \Xi'_m(v) \Xi_n(v ) dv =& 
\sum_{k=0}^{N-1}\int_{ka}^{(k+1)a} \Xi'_m(v) \Xi_n(v ) dv + \int_{Na}^{x+Na} \Xi'_m(v) \Xi_n(v ) dv\\
=& \sum_{k=0}^{N-1}\int_{0}^{a} \Xi'_m(ka + v) \Xi_n(ka + v ) dv + \int_{0}^{x} \Xi'_m(Na + v) \Xi_n(Na + v ) dv.
\end{split}
\end{equation}
 Substituting the expression of $\Xi_n(Na+v)$ obtained in Eq. (\ref{XiPer}) (and its derivative w.r.t. $v$, which reads $\Xi'_m( ka + v ) = e^{mkJ(a)}\Xi'_m( v )$) yields
\begin{equation}\label{intXiPrimeXimn}
\begin{split}
  \int_0^{x+Na} \Xi'_m(v) \Xi_n(v ) dv =& 
 \sum_{k=0}^{N-1} e^{mkJ(a)}\int_{0}^{a} \Xi'_m( v) \Xi_n(ka + v ) dv + e^{mNJ(a)}\int_{0}^{x} \Xi'_m( v) \Xi_n(Na + v ) dv\\
  =& \sum_{k=0}^{N-1} e^{mkJ(a)}\int_{0}^{a} \Xi'_m( v) 
  \left[\frac{1- e^{nkJ(a)}}{1 - e^{nJ(a)}}\Xi_n( a) + e^{nkJ(a)}\Xi_m(v)\right]
  dv\\
  &+ e^{mNJ(a)}\int_{0}^{x} \Xi'_m( v) 
  \left[   
  \frac{1- e^{nNJ(a)}}{1 - e^{nJ(a)}}\Xi_n( a) + e^{nNJ(a)}\Xi_n(v)
  \right]
  dv \\
=&   \Xi_m(a) \Xi_n(a) \sum_{k=0}^{N-1} \frac{e^{mkJ(a)}(1 - e^{nkJ(a)}) }{1-e^{nJ(a)}} + \Xi_m(x)\Xi_n(a) e^{mNJ(a)} \frac{1- e^{nNJ(a)}}{1- e^{nJ(a)}}\\
+& \sum_{k=0}^{N-1} e^{(m+n) k J(a)} \int_0^a \Xi'_m(v)  \Xi_n(v) dv + e^{(m+n)NJ(a)} \int_0^x \Xi'_m(v) \Xi_n(v) dv.\\
\end{split}
\end{equation}
In particular,
\begin{equation}\label{intXiPrimeXiMP}
\begin{split}
  \int_0^{x+Na} \Xi'_-(v) \Xi_+(v ) dv  =&   \frac{\Xi_-(a) \Xi_+(a)}{ 1 - e^{J(a)}} \left(   -N + \frac{1-e^{-NJ(a)}}{1-e^{-J(a)}}  \right)\\   
 +& \Xi_-(x)\Xi_+(a) e^{-NJ(a)} \frac{1- e^{NJ(a)}}{1- e^{J(a)}}\\
+& N \int_0^a \Xi'_-(v)  \Xi_+(v) dv+ \int_0^x \Xi'_-(v) \Xi_+(v) dv,\\
=&  N\left[  \int_0^a \Xi'_-(v)  \Xi_+(v) dv  - \frac{\Xi_-(a)\Xi_+(a)}{1-e^{J(a)}} \right]\\
&+ \frac{\Xi_+(a)}{1-e^{J(a)}}\left( \frac{\Xi_-(a)}{1- e^{-J(a)}} +\Xi_-(x)    \right) + \int_0^x\Xi_-(v)\Xi_+(v) dv\\
&- \frac{\Xi_+(a)e^{-NJ(a)}}{1-e^{J(a)}}\left( \frac{\Xi_-(a)}{1- e^{-J(a)}} +\Xi_-(x)    \right),
\end{split}
\end{equation}
where we have collected the terms of order $N$, $1$ and $e^{-NJ(a)}$.
%Moreover,
%\begin{equation}\label{intXiPrimeXiPM}
%\begin{split}
%  \int_0^{x+Na} \Xi'_+(v) \Xi_-(v ) dv =&   \frac{\Xi_+(a) \Xi_-(a)}{ 1 - e^{-J(a)}} \left(   -N + \frac{1-e^{+NJ(a)}}{1-e^{+J(a)}}  \right)\\   
% +& \Xi_+(x)\Xi_-(a) e^{+NJ(a)} \frac{1- e^{-NJ(a)}}{1- e^{-J(a)}}\\
%+& N \int_0^a \Xi'_+(v)  \Xi_-(v) dv + \int_0^x \Xi'_+(v) \Xi_-(v) dv\\
%=&   \frac{\Xi_-(a)e^{NJ(a)}}{1-e^{-J(a)}}\left( -\frac{\Xi_+(a)}{1- e^{J(a)}} +\Xi_+(x)    \right)\\
%&  + N\left[  \int_0^a \Xi'_+(v)  \Xi_-(v) dv  - \frac{\Xi_-(a)\Xi_+(a)}{1-e^{-J(a)}} \right]\\
%&+ \frac{\Xi_-(a)}{1-e^{-J(a)}}\left( \frac{\Xi_+(a)}{1- e^{J(a)}} +\Xi_+(x)    \right) + \int_0^x\Xi'_+(v)\Xi_-(v) dv,\\
%\end{split}
%\end{equation}
%where we have collected the terms of order $e^{NJ(a)}$, $N$, and $1$.\\

\section{Evaluation of the functions $\varphi$ and $\psi$ using periodicity}\label{periodApp}
\subsection{Expression of $\varphi(x)$ in the $N$-th period interval}
 Consider a nonnegative integer $N$, and $x$ in $[0,a[$. To express the function $t$ in the interval $[Na, (N+1)a]$ in the case of non-zero survival probability ($J(a)>0$), we need to evaluate the function $\varphi$ defined in Eq. (\ref{phiPsiDef}) in this interval:
\begin{equation}
\varphi( Na + x ) = e^{-J(x+Na)}\left(
\sum_{k=0}^{N-1}
\int_{ka}^{(k+1)a} dy \frac{e(y)F(y) - E(y)}{1-F(y)^2}e^{J(y)}  
+
\int_{Na}^{Na + x} dy  \frac{e(y)F(y) - E(y)}{1-F(y)^2} e^{J(y)}.
\right)
\end{equation}
 Let us assume $J(a)>0$. The survival probability is non-zero, hence  
$e(y) = e(0) \exp( -J(y) )$ and $E( y ) = E(0) + 2e(0) \Xi_-(y)$, with  $e(0)<0$ given by Eq. (\ref{e0Sol}). As $F$ is $a$-periodic, 
 \begin{equation}
\begin{split}
\varphi&( Na + x ) =\\
&e^{-NJ(a) - J(x)}\left[
\sum_{k=0}^{N-1}
\int_{0}^{a} dy \left( \frac{F(y)}{1- F(y)^2} - \frac{ E(0) + 2e(0)\left[  \frac{1- e^{-kJ(a)}}{1 - e^{-J(a)}}\Xi_-( a) + e^{-kJ(a)}\Xi_-(y)  \right]}{1-F(y)^2} e^{kJ(a) + J(y)} \right) \right.\\
& \left.+
\int_{0}^{ x} dy \left(
\frac{F(y)}{1- F(y)^2} - \frac{ E(0) + 2e(0)\left[  \frac{1- e^{-NJ(a)}}{1 - e^{-J(a)}}\Xi_-( a) + e^{-NJ(a)}\Xi_-(y)  \right]}{1-F(y)^2} e^{NJ(a) + J(y)} 
\right)\right]\\
=& e^{-NJ(a) - J(x)}\times N\left[ \int_0^a dy \frac{F(y)}{1-F(y)^2}
+\frac{2e(0)\Xi_-(a)}{1-e^{-J(a)}} \int_0^a dy\frac{e^{J(y)}}{1-F(y)^2} -2e(0)
 \int_0^a dy\frac{e^{J(y)}\Xi_-(y)}{1-F(y)^2}
  \right]\\
&-e^{-NJ(a) - J(x)}\times \int_0^a dy \left( E(0) +\frac{2e(0)}{1-e^{-J(a)}}\Xi_-(a) \right) \frac{1-e^{N J(a)}}{(1-F(y)^2)( 1 - e^{J(a)})}e^{J(y)}\\
&+ e^{-NJ(a) - J(x)}  
\left[
\int_0^x dy \frac{F(y)}{1-F(y)^2} -\left( E(0) + 2e(0) \frac{1-e^{-NJ(a)}}{1-e^{-J(a)}} \Xi_-(a)\right)e^{NJ(a)} \int_0^x dy\frac{e^{J(y)}}{1-F(y)^2}\right.\\
&\left. -2e(0) \int_0^x dy\frac{e^{J(y)}\Xi_-(y)}{1-F(y)^2}
\right].
\end{split}     
 \end{equation}
From the value of $e(0)$ obtained in Eq. (\ref{e0Sol}) in the case $J(a)>0$, and the boundary condition $E(0)=e(0) + 1$, the coefficient of $N$ vanishes in the above expression. Indeed, using the expression of $e(0)$ in Eq. (\ref{e0Sol}),
\begin{equation}\label{melo}
 \begin{split}
  E(0) + \frac{2e(0)}{1-e^{-J(a)}}\Xi_-(a) =&
  e(0)\frac{1-e^{-J(a)} +2\Xi_-(a) }{1 - e^{-J(a)}}    + 1\\
  =&-1 + 1 = 0.
\end{split}
\end{equation}
Hence the simplification
\begin{equation}
\begin{split}
\varphi&( Na + x ) =\\
&e^{-NJ(a) - J(x)}\left[
\sum_{k=0}^{N-1}
\int_{0}^{a} dy \left( \frac{e(0)F(y)}{1- F(y)^2} - \frac{ E(0) + 2e(0)\left[  \frac{1- e^{-kJ(a)}}{1 - e^{-J(a)}}\Xi_-( a) + e^{-kJ(a)}\Xi_-(y)  \right]}{1-F(y)^2} e^{kJ(a) + J(y)} \right) \right.\\
& \left.+
\int_{0}^{ x} dy \left(
\frac{e(0)F(y)}{1- F(y)^2} - \frac{ E(0) + 2e(0)\left[  \frac{1- e^{-NJ(a)}}{1 - e^{-J(a)}}\Xi_-( a) + e^{-NJ(a)}\Xi_-(y)  \right]}{1-F(y)^2} e^{NJ(a) + J(y)} 
\right)\right]\\
=& e^{-NJ(a) - J(x)}\times N\left[ \int_0^a dy \frac{e(0)F(y)}{1-F(y)^2}
+\frac{2e(0)\Xi_-(a)}{1-e^{-J(a)}} \int_0^a dy\frac{e^{J(y)}}{1-F(y)^2} -2e(0)
 \int_0^a dy\frac{e^{J(y)}\Xi_-(y)}{1-F(y)^2}
  \right]\\
&+ e^{-NJ(a) - J(x)}  
\left[
\int_0^x dy \frac{e(0)F(y)}{1-F(y)^2}  + \frac{2e(0)\Xi_-(a)}{1-e^{-J(a)}} \int_0^x dy\frac{e^{J(y)}}{1-F(y)^2} -2e(0) \int_0^x dy\frac{e^{J(y)}\Xi_-(y)}{1-F(y)^2}
\right]\\
=&  e^{-NJ(a) - J(x)}\times N\left[ \int_0^a dy \frac{e(0)F(y)}{1-F(y)^2}
+\frac{2e(0)\Xi_-(a)\Xi_+(a)}{1-e^{-J(a)}} -2e(0)
 \int_0^a dy\Xi'_+(y)\Xi_-(y)
  \right]\\
&+ e^{-NJ(a) - J(x)}  
\left[
\int_0^x dy \frac{e(0)F(y)}{1-F(y)^2}  + \frac{2e(0)\Xi_-(a)\Xi_+(x)}{1-e^{-J(a)}}  -2e(0) \int_0^x dy\Xi'_+(y)\Xi_-(y)
\right]\\
=& Ne^{-NJ(a)} e^{- J(x)} \left[ \int_0^a dy \frac{e(0)F(y)}{1-F(y)^2}
-\frac{2\Xi_-(a)\Xi_+(a)}{1-e^{-J(a)} + 2\Xi_-(a)} -2e(0)
 \int_0^a dy\Xi'_+(y)\Xi_-(y)
  \right]\\
&+ e^{-NJ(a) - J(x)}  
\left[
\int_0^x dy \frac{e(0)F(y)}{1-F(y)^2}  -\frac{2\Xi_-(a)\Xi_+(x)}{1-e^{-J(a)} + 2\Xi_-(a)}  -2e(0) \int_0^x dy\Xi'_+(y)\Xi_-(y)
\right].\\
\end{split}
\end{equation}
The expression therefore consists of terms of order $N e^{-N(J(a)}$ and terms of order $e^{-N(a)}$:\\
\begin{equation}\label{phiExprNx}
\varphi(Na +x)= C e^{-J(x)} Ne^{-NJ(a)} + D(x) e^{-J(x)} e^{-NJ(a)},
\end{equation}
with the notations
\begin{equation}\label{CDDef}
\begin{split}
C:=& \int_0^a dy \frac{e(0)F(y)}{1-F(y)^2}
-\frac{2\Xi_-(a)\Xi_+(a)}{1-e^{-J(a)} + 2\Xi_-(a)} -2e(0)
 \int_0^a dy\Xi'_+(y)\Xi_-(y),\\
 D(x)& := \int_0^x dy \frac{e(0)F(y)}{1-F(y)^2}  -\frac{2\Xi_-(a)\Xi_+(x)}{1-e^{-J(a)} + 2\Xi_-(a)}  -2e(0) \int_0^x dy\Xi'_+(y)\Xi_-(y).
\end{split}
\end{equation}

\subsection{Expression of $\psi$ in the $N$-th period interval}
Consider again a nonnegative integer $N$, and $x$ in the interval $[0,a[$. Using the definition in Eq. (\ref{phiPsiDef}),  
\begin{equation}
\psi(x+Na) = 2\int_0^{x+Na} \frac{\varphi(y)}{1-F(y)^2} dy + \int_0^{x+Na} \frac{E(y) F(y) -e(y)}{1-F(y)^2}dy.
\end{equation}

To calculate the first term, we need to evaluate the following integral:
\begin{equation}\label{sumIntPhi}
\begin{split}
\int_0^{x+Na}dy\frac{\varphi(y)}{1-F(y)^2}=& \sum_{k=0}^{N-1} \int_0^a dy  \frac{\varphi(ka +y)}{1-F(y)^2} + \int_0^x dy  \frac{\varphi(Na +y)}{1-F(y)^2}\\
=& \sum_{k=0}^{N-1}  \mathcal{I}_k(a) + \mathcal{I}_N(x),\\
\end{split}
\end{equation}
with $\mathcal{I}_N(x)$ defined for any nonnegative integer $N$ and any $x$ in $[0,a]$ as:
\begin{equation}
\begin{split}
\mathcal{I}_N(x):=& \int_0^x dy  \frac{\varphi(Na +y)}{1-F(y)^2}.\\
\end{split}
\end{equation}
Using Eq. (\ref{phiExprNx}) yields
\begin{equation}\label{INExpr}
\begin{split}
\mathcal{I}_N(x) =& \int_0^x dy  \frac{1}{1-F(y)^2}\left[
C e^{-J(y)} Ne^{-NJ(a)} + D(y) e^{-J(y)} e^{-NJ(a)}\right]\\
=&  C  \Xi_-(x )Ne^{-NJ(a)} + \left(   \int_0^x dy\frac{D(y) e^{-J(y)}}{1-F(y)^2}\right) e^{-NJ(a)}\\
=&  C  \Xi_-(x )Ne^{-NJ(a)} + K(x) e^{-NJ(a)},
\end{split}
\end{equation}
where we used the notations introduced in Eq. (\ref{CDDef}), and the notation 
\begin{equation}\label{KDef}
K(x) = \int_0^x dy \frac{D(y) e^{-J(y)}}{1- F(y)^2},
\end{equation}

 The sums in Eq. (\ref{sumIntPhi}) are  evaluated using the identity
\begin{equation}
\begin{split}
 \sum_{k=0}^{N-1} k x^k &= x \frac{d}{dx}\left( \sum_{k=0}^{N-1} x^k 
 \right) = x\frac{d}{dx}\left( \frac{ 1- x^{N}}{ 1- x} \right)\\
 =& \frac{ x(1- x^{N})}{ (1- x)^2} -N \frac{x^{N}}{1 - x} = \frac{(N-1)x^{N+1} - N x^N +x}{(1-x)^2}\\
 =& \frac{x}{(1-x)^2} -\frac{1}{ 1- x}Nx^N  -\frac{x}{(1-x)^2}x^N.
 \end{split}
\end{equation}

\begin{equation}\label{termpsi1}
\begin{split}
\int_0^{x+Na}dy\frac{\varphi(y)}{1-F(y)^2}=& 
 C\Xi_-(a)\frac{e^{-J(a)}}{(1-e^{-J(a)})^2} + \frac{K(a)}{1-e^{-J(a)}}\\
 &+ C\left( \Xi_-(x) - \frac{\Xi_-(a)}{1-e^{-J(a)}} \right)N e^{-NJ(a)}\\
 &+ \left(  K(x) - \frac{K(a)}{1-e^{-J(a)}}- \frac{C\Xi_-(a) e^{-J(a)}}{(1-e^{-J(a)})^2}  \right) e^{-NJ(a)}.
\end{split}
\end{equation}

The second term in the expression of $\psi(x+Na)$ is evaluated by taking similar calculational steps:
\begin{equation}
\begin{split}
\int_0^{Na+x} dy  &\frac{E(y)F(y) - e(y)}{1-F(y)^2} =\\ 
&\sum_{k=0}^{N-1}
\int_{ka}^{(k+1)a} dy \frac{E(y)F(y) - e(y)}{1-F(y)^2} 
+
\int_{Na}^{Na + x} dy  \frac{E(y)F(y) - e(y)}{1-F(y)^2}\\
=& \sum_{k=0}^{N-1}
\int_{0}^{a} dy \frac{[E(0) + 2e(0) \Xi_-(ka + y ) ]F(y) - e(0) \exp( -J(ka  +y ) )}{1-F(y)^2} \\
&+
\int_{0}^{ x} dy  \frac{[E(0) + 2e(0) \Xi_-(Na + y ) ] F(y) - e( 0) \exp( -J(Na  +y ) ) }{1-F(y)^2}\\
=& \sum_{k=0}^{N-1}
\int_{0}^{a} dy \frac{[E(0) + 2e(0) \left( \frac{1- e^{-kJ(a)}}{1 - e^{-J(a)}}\Xi_-( a) + e^{-kJ(a)}\Xi_-(y) \right)  ]F(y) - e(0) e^{-kJ(a)} e^{-J(y)}}{1-F(y)^2} \\
&+
\int_{0}^{ x} dy  \frac{[E(0) + 2e(0) \left( \frac{1- e^{-NJ(a)}}{1 - e^{-J(a)}}\Xi_-( a) + e^{-NJ(a)}\Xi_-(y) \right)  ] F(y) - e( 0)  e^{-NJ(a)}  e^{-J(y)} }{1-F(y)^2}\\
=& \sum_{k=0}^{N-1}\int_0^a dy \left(  E(0) + \frac{2e(0)}{1-e^{-J(a)}}\Xi_-(a) \right) \frac{F(y)}{1-F(y)^2}\\
&+ \sum_{k=0}^{N-1} e^{-kJ(a)} e(0)\int_0^a dy \left[ \left(  -\frac{2}{1-e^{-J(a)}}\Xi_-(a) +2\Xi_-(y) \right)  \frac{F(y)}{ 1-F(y)^2}
-\frac{e^{-J(y)}}{ 1-F(y)^2}\right]\\
&+\int_0^x dy \left(  E(0) + \frac{2e(0)}{1-e^{-J(a)}}\Xi_-(a) \right) \frac{F(y)}{1-F(y)^2}\\
&+ e^{-NJ(a)} e(0)\int_0^x dy \left[ \left(  -\frac{2}{1-e^{-J(a)}}\Xi_-(a) +2\Xi_-(y) \right)  \frac{F(y)}{ 1-F(y)^2}
-\frac{e^{-J(y)}}{ 1-F(y)^2}\right].\\
\end{split}
\end{equation}
Using again Eq. (\ref{melo}), and grouping the remaining terms according to their large-$N$ behavior yields
\begin{equation}\label{groupTermPsi}
\begin{split}
\int_0^{Na+x} dy  &\frac{E(y)F(y) - e(y)}{1-F(y)^2} =\\ 
& \sum_{k=0}^{N-1} e^{-kJ(a)} e(0)\int_0^a dy \left[ \left(  -\frac{2}{1-e^{-J(a)}}\Xi_-(a) +2\Xi_-(y) \right)  \frac{F(y)}{ 1-F(y)^2}
-\frac{e^{-J(y)}}{ 1-F(y)^2}\right]\\
&+ e^{-NJ(a)} e(0)\int_0^x dy \left[ \left(  -\frac{2}{1-e^{-J(a)}}\Xi_-(a) +2\Xi_-(y) \right)  \frac{F(y)}{ 1-F(y)^2}
-\frac{e^{-J(y)}}{ 1-F(y)^2}\right]\\
=& + \sum_{k=0}^{N-1} e^{-kJ(a)} G(a) + e^{-NJ(a)} G(x),
\end{split}
\end{equation}
with the notation 
\begin{equation}
 G(x) := e(0)\int_0^x dy \left[ \left(  -\frac{2}{1-e^{-J(a)}}\Xi_-(a) +2\Xi_-(y) \right)  \frac{F(y)}{ 1-F(y)^2}
-\frac{e^{-J(y)}}{ 1-F(y)^2}\right].
\end{equation}

Working out the geometric sum and grouping the terms according to their dependence on $N$ yields:
\begin{equation}\label{termpsi2}
\int_0^{Na+x} dy  \frac{E(y)F(y) - e(y)}{1-F(y)^2} =  \frac{G(a)}{1-e^{-J(a)}} + e^{-NJ(a)}\left(
G(x ) - \frac{G(a)}{1-e^{-J(a)}} \right).
\end{equation}
With these notations:
\begin{equation}
    \psi( a ) = 2K(a ) + G(a ).
\end{equation}
Adding up the two terms in the expression of $\psi(x+Na)$ (obtained in Eq. (\ref{termpsi1},\ref{termpsi2})) yields
\begin{equation}
\begin{split}
\psi( x + Na ) =& 2C\Xi_-(a)\frac{e^{-J(a)}}{(1-e^{-J(a)})^2} + 2\frac{K(a)}{1-e^{-J(a)}} + \frac{G(a)}{1-e^{-J(a)}}\\
 &+ 2C\left( \Xi_-(x) - \frac{\Xi_-(a)}{1-e^{-J(a)}} \right)N e^{-NJ(a)}\\
 &+ \left(  2K(x) - 2\frac{K(a)}{1-e^{-J(a)}}- 2\frac{C\Xi_-(a) e^{-J(a)}}{(1-e^{-J(a)})^2}  +G(x ) - \frac{G(a)}{1-e^{-J(a)}}\right) e^{-NJ(a)}.   
\end{split}
\end{equation}

Combining yields the explicit expressions of $t(x+Na)$ and $T(x+Na)$ as:
\begin{equation}
\begin{split}
t(x+Na ) =& t(0) e^{-J(Na + x)} + \varphi(x+Na)\\
=& C e^{-J(x)} Ne^{-NJ(a)} + ( t(0) + D(x) )e^{-J(x)} e^{-NJ(a)},\\
T(x+Na) =& t(0)( 1 + 2\Xi_-(Na + x ) ) +\psi(x+Na)\\
=&  t(0)\left[  1 + \frac{2\Xi_-(a)}{1- e^{-J(a)}}   \right] +
2C\Xi_-(a)\frac{e^{-J(a)}}{(1-e^{-J(a)})^2} + 2\frac{K(a)}{1-e^{-J(a)}} + \frac{G(a)}{1-e^{-J(a)}}\\
 &+ 2C\left( \Xi_-(x) - \frac{\Xi_-(a)}{1-e^{-J(a)}} \right)N e^{-NJ(a)}\\
 &+ \left[ 2t(0)\left( -\frac{\Xi_-(a)}{1-e^{-J(a)}} +\Xi_-(x)\right) - 2\frac{\varphi(a)\Xi_-(a)}{(1-e^{-J(a)})^2}- \frac{\psi(a)}{1-e^{-J(a)}}  +\psi(x)\right] e^{-NJ(a)}.   
\end{split}
\end{equation}
 There are terms of order $1$, $N e^{-NJ(a)}$ and $e^{-NJ(a)}$ in $T(x+Na)$. 
Let us calculate the term of order $1$ in $T(x+Na)$, using the relation $\varphi(a) = C e^{-J(a)}$, the expression of $e(0)$ in the case of non-zero survival probability, and the expression of $t(0)$: 
\begin{equation}\label{simplo}
\begin{split}
t(0)&\left[  1 + \frac{2\Xi_-(a)}{1- e^{-J(a)}}   \right] +
2C\Xi_-(a)\frac{e^{-J(a)}}{(1-e^{-J(a)})^2} + 2\frac{K(a)}{1-e^{-J(a)}} + \frac{G(a)}{1-e^{-J(a)}}\\
=& t(0)\left[  1 + \frac{2\Xi_-(a)}{1- e^{-J(a)}}   \right] +
\frac{2\Xi_-(a)\varphi(a)}{(1-e^{-J(a)})^2} + \frac{\psi(a)}{1-e^{-J(a)}}\\
=&-\frac{t(0)}{e(0)} + \frac{2\Xi_-(a)\varphi(a)}{(1-e^{-J(a)})^2} + \frac{\psi(a)}{1-e^{-J(a)}}\\
=& \frac{-\psi(a) +\left( \frac{1}{e(0)}+1\right)\psi(a)}{ 1 - e^{-J(a)}}
+
\frac{2\Xi_-(a)\varphi(a)}{(1-e^{-J(a)})^2} + \frac{\psi(a)}{1-e^{-J(a)}}\\
=&  \frac{1}{1-e^{-J(a)}}\left( \frac{1}{e(0)} + 1 +    \frac{2\Xi_-(a)}{1-e^{-J(a)}}  \right)   \varphi(a)\\
=&0.
\end{split}
\end{equation}
The leading terms in $t(x+Na)$ and $T(x+Na)$ are therefore both of order $N e^{-NJ(a)}$.
\begin{equation}
\begin{split}
t(x+Na ) =& t(0) e^{-J(Na + x)} + \varphi(x+Na)\\
=& \varphi(a) e^{J(a)}e^{-J(x)} Ne^{-NJ(a)} + ( t(0) + D(x) )e^{-J(x)} e^{-NJ(a)}\\
T(x+Na) =& t(0)( 1 + 2\Xi_-(Na + x ) ) +\psi(x+Na)\\
=&  2\varphi(a) e^{J(a)}\left( \Xi_-(x) - \frac{\Xi_-(a)}{1-e^{-J(a)}} \right)N e^{-NJ(a)}\\
 &+ \left[ 2t(0)\left( -\frac{\Xi_-(a)}{1-e^{-J(a)}} +\Xi_-(x)\right) - 2\frac{\varphi(a)\Xi_-(a)}{(1-e^{-J(a)})^2}- \frac{\psi(a)}{1-e^{-J(a)}}  +\psi(x)\right] e^{-NJ(a)} . 
\end{split}
\end{equation}
To simplify the term of order $e^{-NJ(a)}$ in the expression of $T(x+Na)$, we use the expression of $t(0)$ given in Eq. (\ref{t0Expr}):
\begin{equation}
\begin{split}
 -2t(0)\frac{\Xi_-(a)}{1-e^{-J(a)}}& - 2\frac{\varphi(a)\Xi_-(a)}{(1-e^{-J(a)})^2}- \frac{\psi(a)}{1-e^{-J(a)}}  
 = -2t(0)\frac{\Xi_-(a)}{1-e^{-J(a)}} +\frac{t(0)\left( 1-e^{-J(a)} + 2\Xi_-(a) \right)}{1-e^{-J(a)}}\\
 =& \frac{t(0)}{1-e^{-J(a)}}(-2\Xi_-(a) +  1-e^{-J(a)} + 2\Xi_-(a)) = t(0).
 \end{split}
\end{equation}
Moreover, from the expression of $\varphi(x+Na)$ in  Eqs (\ref{phiExprNx},\ref{CDDef}), we note that $D(x)e^{-J(x)}$ is $\varphi(x)$, for $x$ in $[0,a[$. We obtain
\begin{equation}
\begin{split}
t(x+Na ) 
=& \varphi(a) e^{J(a)}e^{-J(x)} Ne^{-NJ(a)} + [ t(0) e^{-J(x)} + \varphi(x)] e^{-NJ(a)},\\
T(x+Na)
=&  2\varphi(a) e^{J(a)}\left( \Xi_-(x) - \frac{\Xi_-(a)}{1-e^{-J(a)}} \right)N e^{-NJ(a)}\\
 &+ \left[ t(0) ( 1 + 2\Xi_-(x) ) +\psi(x) \right] e^{-NJ(a)}. 
\end{split}
\end{equation}
 As a consistency check, one notices that if  $N=0$, the above two equations reduce to the expression of $t(x)$ and $T(x)$ given in Eq. (\ref{tTExpr}).\\
 Moreover, one can check continuity at $a$ of the two functions $t$ and $T$ by calculating the values
 \begin{equation}
 \begin{split}
\underset{x\to a}{\lim} t( x ) = t(0) e^{-J(a)} + \varphi(a) = t( 0 + a),\\
\underset{x\to a}{\lim} T( x ) = t(0)( 1 + 2\Xi_-(a)) +\psi(a),\\
T( a ) = -\frac{2\varphi(a) \Xi_-(a)}{1 - e^{-J(a)}} + t(0) e^{-J(a)}.\\
\end{split}
\end{equation}
 The last two expressions are easily found to be equal to each other thanks to the expression of $t(0)$ in terms of $\varphi(a)$ and $\psi(a)$ given in Eq. (\ref{t0Expr}).\\ 

The sum and difference of $T(x+Na)$ and $t(x+Na)$ yield
\begin{equation}\label{appTpmp}
\begin{split}
T(x+Na,\pm )
= & \varphi(a) e^{J(a)}\left( 2\Xi_-(x)-2 \frac{\Xi_-(a)}{1-e^{-J(a)}} \pm e^{-J(x) }\right)Ne^{-NJ(a)}\\
&+ \left[ t(0)( 1 \pm e^{-J(x) }+ 2\Xi_-(x) ) +\psi(x) \pm \varphi(x)\right] e^{-NJ(a)},\\
&\qquad(N\in \mathbbm{N}, x \in [0,a[). 
\end{split}
\end{equation}

%\begin{equation}\label{appTpmp}
%\begin{split}
%T(x+Na,- )
%= & \varphi(a) e^{J(a)}\left( 2\Xi_-(x)-2 \frac{\Xi_-(a)}{1-e^{-J(a)}} - e^{-J(x) }\right)Ne^{-NJ(a)}\\
%&+ \left[ t(0)( 1 - e^{-J(x) }+ 2\Xi_-(x) ) +\psi(x) -\varphi(x)\right] e^{-NJ(a)},\\
%T(x+Na,+)= & \varphi(a) e^{J(a)}\left( 2\Xi_-(x) -2 \frac{\Xi_-(a)}{1-e^{-J(a)}}
%+ e^{-J(x) }\right)Ne^{-NJ(a)}\\ &+ \left[ t(0)( 1 + e^{-J(x) }+ 2\Xi_-(x) ) +\psi(x) + \varphi(x)\right] e^{-NJ(a)},\\
%&\quad(N\in \mathbbm{N}, x \in [0,a[). 
%\end{split}
%\end{equation}

 \noindent{Substituting the expression of $t(0)$ obtained in Eq. (\ref{t0Expr}) yields Eq. (\ref{Tpmp}) of the main text.}

\section{Consistency checks in the case of a constant drift}\label{checkConst}
In the case of a constant drift, the quantity $\alpha$ is $\frac{2\mu}{1-\mu^2}<0$ for $x$ in $[0,a]$. Evaluating integrals yields
\begin{equation}
J( x ) = \alpha x.
\end{equation}
\begin{equation}
 \Xi_-(x) = \frac{1}{2\mu}( 1- e^{-\alpha x} ).
\end{equation}

\subsection{Almost-sure exit: constant drift $\mu<0$}\label{constantCheckNeg} 
 In the expression of $T(x+Na,\pm)$ in Eq. (\ref{resT1}), let us calculate the coefficient of $N$:
\begin{equation}
\begin{split}
\frac{J(a)}{2} +2\frac{ \Xi_+( a) \Xi_-(a)}{1 - e^{J(a)}} &-2 \int_0^a dv\Xi'_-(v)\Xi_+(v) \\
=&
\frac{\alpha a}{2}+\frac{2}{(2\mu)^2)}\frac{(1-e^{-\alpha a})(e^{-\alpha a}-1)}{1-e^{\alpha a}} -\frac{2}{(1-\mu^2)^2}\int_0^a dv e^{-\alpha v}\frac{1}{\alpha}( e^{\alpha v} - 1) \\
=& a\left( \frac{\alpha}{2} -\frac{1}{\mu(1-\mu^2)} \right)
 + (e^{-\alpha a} - 1) \left[ \frac{1}{2\mu^2} - \frac{1}{\mu(1-\mu^2) \alpha)}\right]\\
 =& a\frac{\mu^2 - 1}{\mu(1 - \mu^2)}\\
 =& -\frac{a}{\mu} = \frac{a}{|\mu|}.
\end{split}
\end{equation}

The term of order $1$ is evaluated as
\begin{equation}
\begin{split}
\frac{\Xi_+(a)}{1-e^{J(a)}}&\left( 1 \pm e^{-J(x)} + 2\Xi_-(x)\right)
 +\frac{1}{2}J(x) \mp e^{-J(x)}\Xi_+(x) - 2 \int_0^x dv\Xi_-'(v)\Xi_+(v)\\
=& \frac{1}{2\mu}\left[ 1 \pm e^{\alpha x}+ \frac{1}{\mu} (1 - e^{-\alpha x})\right] + \frac{\mu}{1-\mu^2} x \mp e^{-\alpha x}(e^{\alpha x}-1) -2\int_0^x dv e^{-\alpha v}\frac{(e^{\alpha v} - 1)}{2\mu} \\
 =& x \frac{1}{1-\mu^2}\left(\mu- \frac{1}{\mu}  \right)   \frac{\mu}{1-\mu^2}x - \frac{1}{2\mu}\left( 1\pm e^{-\alpha x} +\frac{1}{\mu}(1-e^{-\alpha x}) \right) \mp \frac{1}{2\mu}(1-e^{-\alpha x}) + \frac{1}{2\mu^2}(1-e^{-\alpha x})\\
 =& -\frac{x}{\mu} + e^{-\alpha x}\left( \mp \frac{1}{2\mu} 
 + \frac{1}{2\mu^2} 
 \pm \frac{1}{2\mu} -  \frac{1}{2\mu^2} \right)
 -  \frac{1}{2\mu} - \frac{1}{2\mu^2} \mp \frac{1}{2\mu} + \frac{1}{2\mu^2}  \\
 =& \frac{x}{|\mu|} - \frac{1\pm 1}{2\mu},
 \qquad\qquad(x\in [0,a[).
\end{split}
\end{equation} 
Combining the two terms yields
\begin{equation}
\begin{split}
\langle T( x+Na) \rangle_{c,+} =T(Na + x,+) =& \frac{Na + x}{|\mu|} + \frac{1}{|\mu|},\\
\langle T( x+Na) \rangle_{c,-} = T(Na + x,-) =& \frac{Na + x}{|\mu|},\qquad(x\in [0,a[, N\in \mathbb{N}),
\end{split}
\end{equation} 
 which is the known result in the case of a constant negative drift.\\

\subsection{Non-zero survival probability: constant drift $\mu>0$} 
Evaluating integrals in the expression of the exit probability in Eq. (\ref{EqNapm}) yields for $x\geq 0$:
\begin{equation}\label{explExitConst}
\begin{split}
 \frac{2\Xi_-(a)}{1 - e^{-J(a)}} =& \frac{1}{\mu},\;\;\;\;\;\;e(0) = -\frac{\mu}{1+\mu},~\;\;\;\;\;\;
 E(0)=\frac{1}{1+\mu},\\
 e(x) =& -\frac{\mu}{1+\mu} e^{-\frac{2\mu x}{1-\mu^2}},\\
 E(x) =& \frac{1}{1+\mu} - 2\frac{\mu}{1+\mu}\Xi_-(x) = \frac{1}{1+\mu}e^{-\frac{2\mu x}{1-\mu^2}}.
 \end{split}
\end{equation}
Hence
\begin{equation}\label{repro}
\begin{split}
E(x,+) =&  \frac{1-\mu}{1+\mu} e^{-\frac{2\mu x}{1-\mu^2}},\\
E(x,-) =& e^{-\frac{2\mu x}{1-\mu^2}},\qquad(x\geq 0).  
\end{split}
\end{equation}
 This reproduces the result $(\mathrm{I})_{\mu}$ {\textcolor{black}{(Eq. (\ref{EConst}))}} obtained in \cite{de2021survival} (Eqs (45a) and (45b) there) for a constant drift.\\

 To check the value of $\langle T(x) \rangle_{c,\pm}$, we evaluate integrals with the above expressions, which yields
\begin{equation}
 \varphi(x) = - \frac{1+\mu^2}{(1+\mu)(1-\mu^2)} x e^{-\alpha x},\quad x\in[0,a].
\end{equation} 
\begin{equation}
\begin{split}
\psi(x) =& - 2\frac{1+\mu^2}{(1+\mu)(1-\mu^2)^2} 
\int_0^x dy \,y e^{-\alpha y} + \frac{2\mu}{(1+\mu)(1-\mu^2)^2} \int_0^x dy e^{-\alpha y}\\
=& \frac{1+\mu^2}{\mu(1+\mu)(1-\mu^2)}x e^{-\alpha x} + \frac{\mu^2 - 1}{2\mu^2(1+\mu)}(1-e^{-\alpha x}).\\
\end{split}
\end{equation}
Moreover, $t(0) = \frac{1}{2\mu}\frac{1-\mu}{1+\mu}$, hence the 
 coefficient of $Ne^{-NJ(a)}$
 in the expressions obtained in Eq. (\ref{Tpmp}) to evaluate $T(x+Na,+)$ and $T(x+Na,-)$:\\  
\begin{equation}\label{domTerm}
\begin{split}
  \varphi(a) e^{J(a)}\left( 2\Xi_-(x)-2 \frac{\Xi_-(a)}{1-e^{-J(a)}} - e^{-J(x) }\right)=& 
  -\frac{1+\mu^2}{(1+\mu)(1-\mu^2)}a\left( \frac{1}{\mu}(1-e^{-\alpha x})-\frac{1}{\mu} - e^{-\alpha x} \right)\\
  =& \frac{1+\mu^2}{\mu(1-\mu^2)} a e^{-\alpha x},\\
   \varphi(a) e^{J(a)}\left( 2\Xi_-(x)-2 \frac{\Xi_-(a)}{1-e^{-J(a)}} + e^{-J(x) }\right)=& 
  -\frac{1+\mu^2}{(1+\mu)(1-\mu^2)}a\left( \frac{1}{\mu}(1-e^{-\alpha x})-\frac{1}{\mu} + e^{-\alpha x} \right)\\
  =& \frac{1+\mu^2}{\mu(1+\mu)^2} a e^{-\alpha x}.
\end{split}
\end{equation}
 Let us evaluate the quantities $T(x)$ and $t(x)$ for $x\in [0,a[$:
\begin{equation}
\begin{split}
 t(x) =&  t(0) e^{-J(x)} + \varphi(x)\\
 =& \frac{1}{2\mu}\frac{1-\mu}{1+\mu} e^{-\alpha x}
 - \frac{1+\mu^2}{(1+\mu)(1-\mu^2)} x e^{-\alpha x}, \quad x\in [0,a[,\\
\end{split}
\end{equation}
\begin{equation}
\begin{split}
 T(x) =&  t(0) + 2t(0) \Xi_-(x) +\psi(x)\\
 =& \frac{1}{2\mu}\frac{1-\mu}{1+\mu} +
  \frac{1}{2\mu^2}\frac{1-\mu}{1+\mu}( 1-e^{-\alpha x}) + 
   \frac{1+\mu^2}{\mu(1+\mu)(1-\mu^2)}x e^{-\alpha x} + \frac{\mu^2 - 1}{2\mu^2(1+\mu)}(1-e^{-\alpha x})\\
=& \frac{1+\mu^2}{\mu(1+\mu)(1-\mu^2)}x e^{-\alpha x} + \frac{1-\mu}{2\mu( 1 + \mu)}e^{-\alpha x}, \quad x\in [0,a[.
\end{split}
\end{equation}
Hence
\begin{equation}
\begin{split}
T( x, -) =& \frac{1+\mu^2}{(1+\mu)(1-\mu^2)}\left( \frac{1}{\mu} + 1 
  \right)x e^{-\alpha x}\\
  =&  \frac{1+\mu^2}{\mu(1-\mu^2)} x e^{-\alpha x},\\
T(x, + ) =& \frac{1+\mu^2}{(1+\mu)(1-\mu^2)}\left( \frac{1}{\mu} - 1 
  \right)x e^{-\alpha x} +\frac{1-\mu}{\mu(1 + \mu)} e^{-\alpha x}  \\
  =&   \frac{1+\mu^2}{\mu(1+\mu)^2} x e^{-\alpha x}
    +\frac{1-\mu}{\mu(1 + \mu)} e^{-\alpha x},
   \quad x\in [0,a[.
\end{split}    
\end{equation}
Combining with Eq (\ref{domTerm}) yields
\begin{equation}
\begin{split}
T( x+Na, -) =& \frac{1+\mu^2}{\mu( 1 - \mu^2)}
(  Na + x ) e^{-\alpha(Na+ x)},\\
T( x+Na, + ) =& \frac{1+\mu^2}{\mu(1+\mu)^2}(Na + x) x e^{-\alpha( Na + x) x} +\frac{1-\mu}{\mu(1 + \mu)} e^{-\alpha(Na + x)},  \\
  &  \quad ( N \in \mathbbm{N}, x\in [0,a[).
\end{split}    
\end{equation}
 Dividing these two expressions by $E(x,-) = e^{-\alpha( Na +x)}$ and $E(x,-) = \frac{1-\mu}{1+\mu}e^{-\alpha( Na +x)}$ respectively yields
 \begin{equation}
\begin{split}
\langle T( x+Na) \rangle_{c,-}=& \frac{1+\mu^2}{\mu( 1 - \mu^2)}
(  Na + x ),\\
\langle T( x+Na)\rangle_{c,+} =& \frac{1+\mu^2}{\mu(1-\mu)^2}(Na + x)  +\frac{1}{\mu},  \quad ( N \in \mathbbm{N}, x\in [0,a[),
\end{split}    
\end{equation}
 which is the known result in the case of a constant positive drift (Eq. (\ref{resIImu})).\\

\end{appendices}

\bibliography{bibRef} 
\bibliographystyle{ieeetr}

\end{document}